\newtheorem{lemma}{Lemma}
\newtheorem{proposition}{Proposition}
\begin{document}
%
% paper title
% can use linebreaks \\ within to get better formatting as desired
%\title{Cooperative Friendly Jamming in Swarm UAV-assisted Communications with Wireless Energy Harvesting}
\title{Secure Swarm UAV-assisted Communications with Cooperative Friendly Jamming}
%
%
% author names and IEEE memberships
% note positions of commas and nonbreaking spaces ( ~ ) LaTeX will not break
% a structure at a ~ so this keeps an author's name from being broken across
% two lines.
% use \thanks{} to gain access to the first footnote area
% a separate \thanks must be used for each paragraph as LaTeX2e's \thanks
% was not built to handle multiple paragraphs
%

\author{Hanh~Dang-Ngoc,~\IEEEmembership{}
        Diep~N.~Nguyen,~\IEEEmembership{}
        Khuong~Ho-Van,~\IEEEmembership{} \\
        Dinh~Thai~Hoang,~\IEEEmembership{}
        Eryk~Dutkiewicz,~\IEEEmembership{}
        Quoc-Viet~Pham,~\IEEEmembership{}
        and~Won-Joo~Hwang~\IEEEmembership{}% <-this % stops a space
\thanks {Hanh Dang-Ngoc, Diep  N.  Nguyen,  Dinh  Thai  Hoang,  and  Eryk  Dutkiewicz  are  with  the  School  of  Electrical  and  Data  Engi-neering, University of Technology Sydney, NSW 2007, Australia (e-mail: hanh.n.dang@student.uts.edu.au; diep.nguyen@uts.edu.au;  hoang.dinh@uts.edu.au; 
eryk.dutkiewicz@uts.edu.au).}%
\thanks {Hanh Dang-Ngoc and Khuong Ho-Van are with the Ho Chi Minh City University of Technology, VNU, Vietnam (e-mail: hanhdn@hcmut.edu.vn; hvkhuong@hcmut.edu.vn).}%
\thanks {Quoc-Viet Pham is with the Korean Southeast Center for the 4th Industrial Revolution Leader Education, Pusan National University, Busan 46241, Republic of Korea (e-mail: vietpq@pusan.ac.kr).}%
\thanks {Won-Joo Hwang is with the Department of Biomedical Convergence Engineering, Pusan National University, Yangsan 50612, Republic of Korea (e-mail: wjhwang@pusan.ac.kr).}%
}

% The paper headers
%\markboth{IEEE Transactions on Communications}%
%{Submitted paper}

% make the title area
\maketitle

\begin{abstract}
%\boldmath
This article proposes a cooperative friendly jamming framework for swarm unmanned aerial vehicle (UAV)-assisted amplify-and-forward (AF) relaying networks with wireless energy harvesting. In particular, we consider a swarm of hovering UAVs that relays information from a terrestrial source to a distant destination and simultaneously generates ``friendly" jamming signals to interfere an eavesdropper. Due to the limited energy of the UAVs, we develop a collaborative time-switching relaying protocol which allows the UAVs to collaborate to harvest wireless energy, relay information, and jam the eavesdropper. To evaluate the secrecy rate, we derive the secrecy outage probability (SOP) for two popular detection techniques at the eavesdropper, i.e., selection combining and maximum-ratio combining. Monte Carlo simulations are then used to validate the theoretical SOP derivation. Using the derived SOP, one can obtain engineering insights to optimize the energy harvesting time and the number of UAVs in the swarm to achieve a given secrecy protection level. Furthermore, Monte Carlo simulations show the effectiveness of the proposed framework in terms of SOP as compared with the conventional amplify-and-forward relaying system. The analytical SOP derived in this work can also be helpful in future UAV secure-communications optimizations (e.g., trajectory, locations of UAVs). As an example, we present a case study to find the optimal corridor to locate the swarm so as to minimize the system SOP.
\end{abstract}

% Note that keywords are not normally used for peerreview papers.
\begin{IEEEkeywords}
Amplify-and-forward, cooperative friendly jamming, energy harvesting, secrecy outage probability, relaying network, unmanned aerial vehicle.
%aircraft communication, relays, communication system security, jamming, energy conversion.
\end{IEEEkeywords}

% For peer review papers, you can put extra information on the cover
% page as needed:
% \ifCLASSOPTIONpeerreview
% \begin{center} \bfseries EDICS Category: 3-BBND \end{center}
% \fi
%
% For peerreview papers, this IEEEtran command inserts a page break and
% creates the second title. It will be ignored for other modes.
\IEEEpeerreviewmaketitle

\section{Introduction}
% more survey statement
Unmanned Aerial Vehicles (UAVs), thanks to their high mobility and flexibility, have great potential applications to future communications systems \cite{y_zeng_cellular_connected_2019,a_fotouhi_survey_2019,2019_mozaffari_a_tutorial}. 
{Carrying a light base station (BS) on-board, {a UAV} can be deployed as a flying BS to {replace a} terrestrial BS in certain challenging scenarios. {Moreover, UAVs} can be utilized to connect ground Internet of things \cite{2021_abouzaid_the_meshing} or to establish communications in case of damaged infrastructure in disaster regions \cite{2018_ur_rahman_positioning,2019_liu_resource}.}
In practical UAV-assisted cellular networks, {aerial BSs} can be deployed as relays to assist the communications between terrestrial nodes, {which are suffering from the absence of direct links} 
\cite{2020_hosseinalipour_interference,2018_chen_using,2021_han_towards}. 
{A single UAV was used to form dual-hop relay mode \cite{2020_hosseinalipour_interference}, while multiple UAVs were used to form either a single multi-hop link or multiple dual-hop links \cite{2018_chen_using}. In these studies, the placement of the UAVs was optimized to maximize the signal-to-noise ratio or the average signal-to-interference ratio of the system.
}	
%	Using of multiple UAVs as relays that form either a single multi-hop link or multiple dual-hop links, the placement of the UAVs were optimized to maximize the signal-to-noise ratio of the system \cite{2018_chen_using}.
%	Considering the presence of interference, authors in \cite{2020_hosseinalipour_interference} studied the optimal position planning of a single UAV in dual-hop relay mode to maximize the average signal-to-interference-ratio (SIR) of the system. They extended their study to optimize the number of UAVs and UAVs' position in multi-hop relay mode to guarantee a certain SIR of the system.}
%
%\textcolor{red}{	
%\cite{2021_han_towards}
%\cite{2017_azari_ultra} height-dependent path loss exponent and small-scale fading.} 

% --> PLS+EH
%\textcolor{teal}{Despite their prospecting applications, UAV communications face many challenges.}
%
Due to the broadcast nature of line-of-sight (LoS) dominated aerial based wireless communications, the UAVs' communication links are vulnerable to eavesdropping {or jamming}. The authors in \cite{x._yuan_secrecy_2020} proved that using relays also offers {eavesdroppers} the higher intercepting level by analyzing a relaying network between the on-ground legitimate source and destination in the presence of cooperative UAV-enabled eavesdroppers which use the maximum ratio combining (MRC) or the selection combining (SC) scheme. 
Friendly jamming has been introduced as a promising technology to alleviate the problem {in UAV-enabled {networks} \cite{y._zhou_improving_2018,w_wang_energy_constrained_2020}. In \cite{y._zhou_improving_2018}, UAVs were deployed to send {jamming signals, resulting in a decrease in} the suspicious rate of an amplify-and-forward (AF) multi-relay system.}
{In addition, having a degree of freedom to change locations over time, the flying paths and locations of UAVs can be optimized to enhance the physical layer security of wireless {networks} \cite{w_wang_energy_constrained_2020,2020_wu_energy}, {i.e.,} retreating away from the jammers or eavesdroppers.}
%
%\textcolor{teal}{UAVs were appealing to combat the jamming by retreating away from the jammer \cite{2020_wu_energy,2021_wu_UAV}. To combat the eavesdropping, the trajectories of the information UAV and the friendly jamming UAV were jointly optimized to secure the UAV communications \cite{w_wang_energy_constrained_2020}.}

%
{The security in UAV-assisted relaying systems has been recently studied in \cite{r_ma_secure_2019,m_tatar_mamaghani_performance_2019,w_wang_energy_constrained_2020,2020_miao_cooperative}.
To assist and secure the system in the presence of an eavesdropper, a relaying UAV and a jamming UAV were utilized in \cite{2020_miao_cooperative}. The transmit power and flight trajectory of both relaying UAV and jamming UAV were optimized to maximize the secrecy rate of the system. In this study, the authors considered that the relaying UAV forwards the received information without decoding or {amplifying,} and the eavesdropper overhears the information in the relaying phase.} 
%To combat the eavesdropping, the trajectories of the information UAV and the friendly jamming UAV were jointly optimized to secure the UAV communications in \cite{w_wang_energy_constrained_2020}.}
%
In \cite{r_ma_secure_2019}, a swarm of UAVs {was} divided into two groups of decode-and-forward relays and jammers.  Without considering the limitation of UAV's on-board energy, the secrecy outage probability (SOP) was analyzed {when the} eavesdroppers only {listen} to the relay communications phase. 
%
%\textcolor{red}{
%An iterative algorithm was proposed in \cite{2020_miao_cooperative} to optimize the transmit power and flight trajectory of both relaying UAV and jamming UAV to maximize the secrecy rate of the system in the presence of an eavesdropper. Relaying UAV forwarded the received information without decoding or amplifying, and the eavesdropper   	
%}

%For long range UAV wireless power transfer application, there are some techniques can be utilized such as coupling optimization, 3D antenna, wireless watchdog, etc.
%%
%\textcolor{teal}{Moreover,} 
Given the limitation of on-board energy, UAVs can be equipped with solar panels to harvest solar energy \cite{y_sun_optimal_2019}, or with energy harvesters which can scavenge energy in the radio frequency (RF) signals to self-power their signal transmission \cite{d_n_k_jayakody_self-energized_2020,2021_chittoor_a_review}.
%
%Note that the physical layer security in UAV-assisted relaying systems have been recently studied \cite{g_hu_uavs_assisted_2019,r_ma_secure_2019,m_tatar_mamaghani_performance_2019,2019_hu_proactive,2020_miao_cooperative,w_wang_energy_constrained_2020}.
	{The authors in \cite{m_tatar_mamaghani_performance_2019,w_wang_energy_constrained_2020} studied a wireless information and power transfer system which employs energy-constrained aerial node as a relay and the full-duplex destination nodes to transmit artificial noise to confuse the malicious eavesdroppers. However, using full-duplex on-ground destination nodes as in \cite{m_tatar_mamaghani_performance_2019,w_wang_energy_constrained_2020} {might not be effective} in practical cases with severe obstacles, long distance, and deep fading. The self-interference in full-duplex radios can also have an adverse impact on signal reception/decoding at legitimate receivers.} 

%
%\textcolor{red}{
%in the presence of multiple eavesdroppers who are randomly located \cite{2018_omri_physical}.}

In this work, we propose a cooperative friendly jamming framework for swarm UAV-assisted AF relaying networks with wireless energy harvesting (EH) ability. 
In particular, we consider a swarm of hovering UAVs that {relays} information from a terrestrial source to a distant destination and simultaneously {generates} ``friendly" jamming signals to interfere {an} eavesdropper. Due to the limited energy of the UAVs, we develop a collaborative time-switching relaying (TSR) protocol which allows the UAVs to collaborate to harvest wireless energy, relay information, and jam the eavesdropper.
	%
%	{Here we adopt the AF relay, instead of DF as DF relaying requires decoding and re-encoding, consuming more energy and causing long delay.}
{Alternatively,} to conserve on-board energy, we assume that UAVs {operate} in the half-duplex mode (i.e., not equipped with the self-interference suppression capability) in which {they receive the information from the source and jam the eavesdropper in two separate phases}.  
During these phases, the eavesdropper can intercept the information from both the source and the relay UAV using either {SC or MRC combining scheme}. 
To evaluate the secrecy rate, we derive the SOP and then use Monte Carlo simulations to validate the theoretical SOP derivation.  

In practice, an eavesdropper is often a passive device (i.e., not emitting signal), the channel state information (CSI) between it and the legitimate transmitter is often unknown. For that matter, we further extend our study to the case of a randomly distributed eavesdropper with unknown CSI. Using the derived SOP, one can obtain engineering insights to optimize the energy harvesting time, the number of UAVs in the swarm, as well as their placements, to achieve a given secrecy protection level. As a {case study}, we apply the analysis above to find the optimal corridor for locating UAVs to minimize the system SOP in the presence of an eavesdropper. 
The major contributions of our work are as follows. 
\begin{itemize}
\item Propose an effective model and protocol to {utilize} a swarm of wireless-powered UAVs {to simultaneously} relay information and to jam the eavesdropper, under a practical shadowed-Rician fading model.   
\item Derive the expressions of the system SOP for two cases of SC and MRC combining techniques at the eavesdropper.
\item Conduct the Monte Carlo simulations to verify the expressions and illustrate the impact of system parameters such as the EH time factor and the number of UAVs. 
\item Present a practical {case study} of finding the optimal corridor for locating UAVs in the three-dimensional (3D) space to minimize the system SOP in the presence of an eavesdropper. 
%achieve a given secrecy outage probability. 
\end{itemize}
%

%\subsection{Structure}
The rest of the paper is as follows. In Section~\ref{Sec:SystemModel}, we describe our proposed system model and derive related signal-to-noise ratios (SNRs). The analytical expressions of the system SOP under SC and MRC at the eavesdropper are presented in Section~\ref{Sec:Secrecy}. The SOP is analyzed for the case of a randomly distributed eavesdropper with unknown CSI in Section~\ref{Sec:Unknown}. In Section~\ref{Sec:Case}, as an example of the application of the derived SOP, we provide a {case study} in which we use SOP to find the optimal space for UAVs so as to guarantee a given level of SOP. Numerical results are presented and discussed in Section~\ref{Sec:Results}. Finally, conclusions are drawn in Section~\ref{Sec:Conclusion}.
%

%\subsection{Notation}
\textit{Notations}: 
${|\cdot|}$ is the Euclidean norm; $f_X \left(\cdot \right)$ and $F_X \left(\cdot \right)$ denote the probability density functions (PDF) and the cumulative distribution function (CDF) of the random variable (r.v.) $X$, respectively; $\mathbb{E}\left( \cdot  \right)$ is the statistical expectation operation; the operation $\text{Pr}\left( \cdot  \right)$ returns probability.
%; and $\left(x \right)^+\triangleq\max (0; x)$; $\binom{n}{k}$ represents the binomial coefficients. 
%%=====================================
\begin{figure}[t!]
    \begin{center}
	\includegraphics[width=0.85\linewidth]{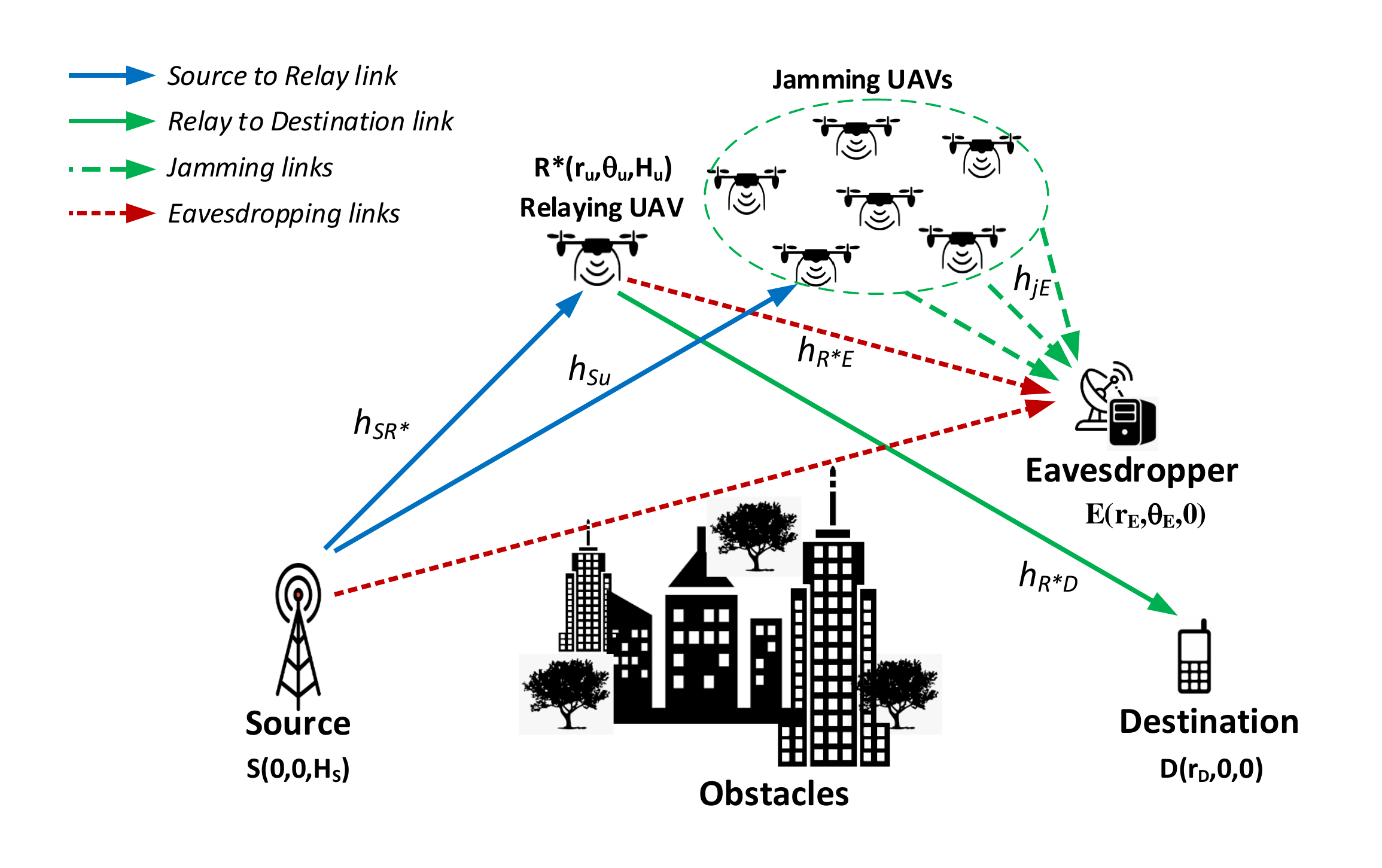}
	\caption{System model of UAV-assisted relaying network.}
	\label{fig:sys}
	\end{center}
\end{figure}
%--------------------------------------
\section{System Model}
\label{Sec:SystemModel}
Consider a UAV-aided relaying system as depicted in Fig.~\ref{fig:sys}, in which a terrestrial BS $S$ communicates with an on-ground mobile user $D$, in the presence of an eavesdropper $E$ on the ground. We assume that the direct link between $S$ and $D$ is not available, e.g., due to blockages and/or long distance. For that, the communications and security of the transmission from $S$ to $D$ {are} assisted by a swarm of $U$ UAVs that {functions} as a relay and friendly jammers. Let ${R}_{u}$ denote the {\textit{u}}-th UAV where $u\in {\Phi }_{u}=\{1,2,...,U\}$. Due to their limited energy, 
% light load
we assume that UAVs are only equipped with a single antenna, operate in the half-duplex AF mode, and can wirelessly harvest power from $S$ \cite{9163290}.
Specifically, the TSR protocol between $S$ and $D$ is accomplished over three phases with the total length of $T$, i.e., the EH phase, the source to the UAVs phase, and the AF from the relay UAV to the destination {phase, as illustrated in Fig.~\ref{fig:TSRP}. Here,} we adopt the two equal time-slots AF relaying system \cite{m_tatar_mamaghani_performance_2019,b_ji_performance_2019}. During the EH phase of length $\alpha T$, $\alpha \in \left[ 0,1 \right]$ where $\alpha$ is the EH time factor, all $U$ UAVs scavenge RF energy from $S$. In the second phase of length $\left(1-\alpha \right) T/2$, $S$ broadcasts the information signal to all the UAVs, and is susceptible to eavesdropping by $E$. Note that in the second phase, all half-duplex UAVs that are in the reception mode (to receive the signal from the source) cannot harvest energy and jam the eavesdropper. Then, in third phase of length $\left(1-\alpha \right) T/2$, the UAV $R^*$ with the highest SNR over the {source-to-UAVs} links uses the harvested energy to AF the received signals to $D$, while $(U-1)$ other UAVs $R_j$ (with $j \in {\Phi _u}\backslash R^*$) use their harvested energy to jam $E$.
{In order to eliminate the burden of signal synchronization among UAVs, we consider that only one UAV is selected as a relay to forward signals to $D$.}
We assume that the friendly jamming signals can be completely canceled at $D$, e.g., using the successive cancellation or projection technique \cite{tse_viswanath_2005}. 
%%=====================================
\begin{figure}[t!]
    \begin{center}
	\includegraphics[width=0.75\linewidth]{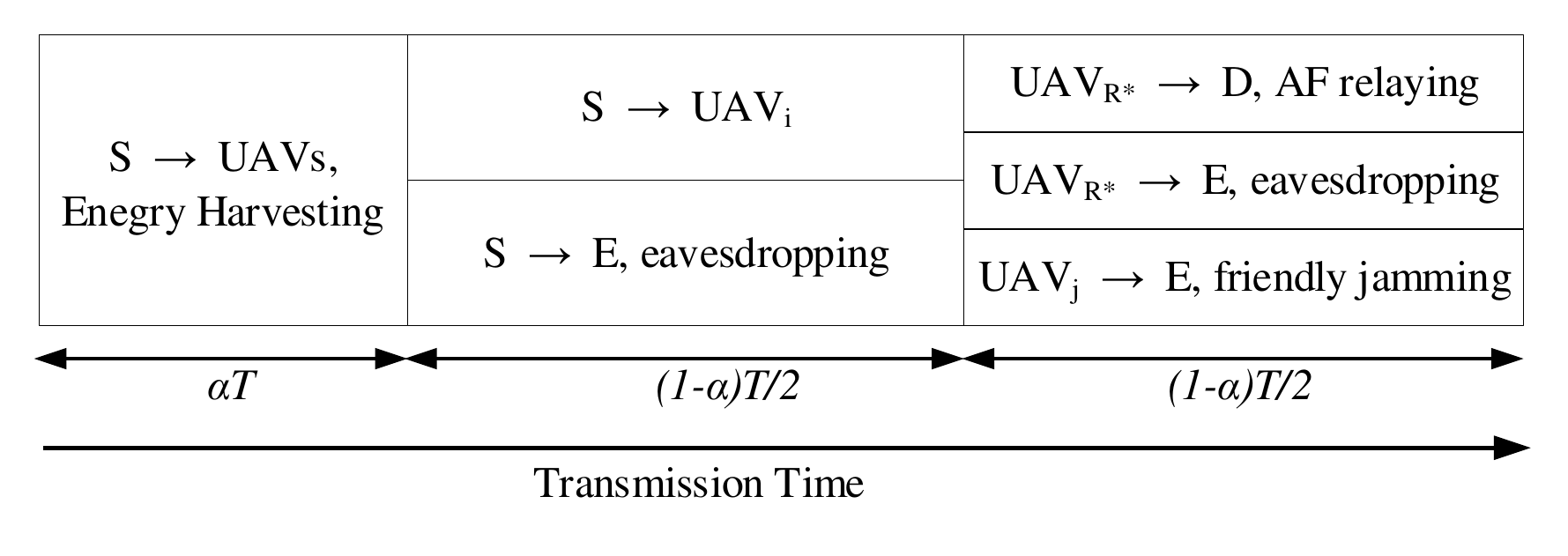}
	\setlength{\belowcaptionskip}{-8pt}
	\caption{Time-switching relaying protocol for UAV-assisted network.}
	\label{fig:TSRP}
	\end{center}
\end{figure}
%%=====================================
%--------------------------------------
\subsection{Channel Model}
All the channels are assumed to be quasi-static, i.e., unchanged during each transmission time slot $T$ but independently vary from one slot to another \cite{g._zhang_securing_2019}. The channel coefficient between nodes $u$ and $v$ is denoted as $h_{\textit{\text{uv}}}$, which has the corresponding channel gain  ${\left| h_{\textit{\text{uv}}} \right|}^{2}$. %whereas $\lambda_{\textit{\text{uv}}}$ denotes the mean power over the channel which follows the free-space pathloss. 
Specifically, ${h}_{\textit{\text{Su}}}$ is the channel coefficient between the source and {the \textit{u}}-th UAV; ${h}_{\textit{\text{SR\textsuperscript{*}}}}$ and ${h}_{\textit{\text{SE}}}$ are the channel coefficients from $S$ to $R^*$ and $E$, respectively. The channel coefficients between $R^*$ and $D$, $R^*$ and $E$, {the \textit{j}}-th jamming UAV and $E$ are ${h}_{\textit{\text{R\textsuperscript{*}D}}}$, ${h}_{\textit{\text{R\textsuperscript{*}E}}}$ and ${h}_{\textit{\text{jE}}}$, respectively, as shown in Fig.~\ref{fig:sys}. Due to the strong LoS components, all the {channels between UAVs and ground nodes} are modeled by shadowed-Rician fading\footnote{The shadowed-Rician distribution has been proposed to generally describe the {channels between UAVs and ground nodes} \cite{j_f_paris_closed-form_2010} since these channels vary significantly with UAVs' 3D locations in an area, which may be under a deep fade and shadowing.} 
with the PDF is given by 
\begin{equation}
	\begin{aligned}
		& {{f}_{{\left| {{h}_{\textit{\text{uv}}}} \right|}^{2}}}\left( x \right)=\mathcal{A}{{e}^{-\mathcal{B}x}}{}_{1}{{F}_{1}}\left( {{m}_{S}};1;\vartheta x \right),x\ge 0 ,\\ 
	\end{aligned}
\end{equation}
where $\mathcal{A}={{{\left( {2b{{m}_{S}}}/{\left( 2b{{m}_{S}}+{{\Omega }} \right)}\; \right)}^{{{m}_{S}}}}}/{2b}$, $\mathcal{B}={1}/{2b}$, $\vartheta={{{\Omega }}}/\left( 2b{{m}_{S}}+{{\Omega }} \right)/{2b}$ with ${{\Omega}}$ and $2{{b}}$ being the average power of LoS and multipath components, respectively, ${{m}_{S}}$ is the fading severity parameter, and ${}_{1}{{F}_{1}}\left( \cdot ;\cdot ;\cdot  \right)$ is the confluent hypergeometric function of the first kind \cite{zwillinger_9_2014}. For arbitrary integer-valued fading severity parameters of $m_S$, one can simplify ${}_{1}{{F}_{1}}\left( {{m}_{S}};1;\vartheta x \right)$ to obtain the PDF and CDF as \cite{v_bankey_secrecy_2017}  
\begin{equation}
	\begin{aligned}	
		&{{f}_{{\left| {{h}_{\textit{\text{uv}}}} \right|}^{2}}}\left( x \right)=\sum\limits_{l=0}^{{{m}_{S}}-1}{{{\zeta}_{\textit{\text{uv}}}}}{{x}^{l}}{{e}^{-{\eta_{\textit{\text{uv}}}}x}},
		%\ \ \ x\ge 0 , 
		\enspace
	%\end{aligned}	
	%	\label{eq:PDF_Shadowed}
%\end{equation}
%
%\begin{equation}
%	\begin{aligned}	
		{{F}_{{\left| {{h}_{\textit{\text{uv}}}} \right|}^{2}}}\left( x \right)=1-\sum\limits_{l=0}^{{{m}_{S}}-1}\sum\limits_{q=0}^{l}{\kappa_{\textit{\text{uv}}}}{{x}^{q}}{{e}^{-{\eta_{\textit{\text{uv}}}}x}},\\
	\end{aligned}	
	%	\label{eq:PDF_Shadowed}
\end{equation}
%\hfill
%\break
%
where ${{\zeta }_{\textit{\text{uv}}}}=\mathcal{A} {{{\left( {{m}_{S}}-l \right)}_{l}}{{\left( \upsilon  \right)}^{l}}}/{{{\left( l! \right)}^{2}}}$, $\kappa_{\textit{\text{uv}}}={{\zeta }_{\textit{\text{uv}}}}{\left({l!}/{q!}\right)}{{\eta_{\textit{\text{uv}}}}^{-\left( l+1-q \right)}}$, and ${\eta_{\textit{\text{uv}}}}={\mathcal{B}-\vartheta }$, in which ${{\left( a \right)}_{n}}$ is the Pochhammer symbol \cite{zwillinger_xli}. We denote $X={{\left| {{h}_{\textit{\text{SR\textsuperscript{*}}}}} \right|}^{2}}$, $Y={{\left| {{h}_{\textit{\text{R\textsuperscript{*}}D}}} \right|}^{2}}$, $Z={{\left| {{h}_{\textit{\text{R\textsuperscript{*}}E}}} \right|}^{2}}$. Therefore, $\left\{{\zeta_X},{\kappa }_{X},{\eta }_{X} \right\}$, $\left\{{\zeta_Y},{\kappa }_{Y},{\eta }_{Y} \right\}$, and $\left\{{\zeta_Z},{\kappa }_{Z},{\eta }_{Z} \right\}$ are the corresponding channel parameters of $S$-to-$R^*$, $R^*$-to-$D$, and $R^*$-to-$E$ links, respectively.

The terrestrial link between $S$ and $E$ is subject to undergo small-scale Rayleigh model due to many obstructions on ground \cite{g._zhang_securing_2019} with the PDF and CDF as  
\begin{equation}
	{{f}_{{\left| h_{\textit{\text{uv}}} \right|}^{2}}}\left( x \right)={{e}^{-{x}}}, \quad {{F}_{{\left| h_{\textit{\text{uv}}} \right|}^{2}}}\left( x \right)=1-{{e}^{-{x}}},
%	\label{eq:PDF_Ray}
\end{equation} 
respectively. We denote $W={{\left| {{h}_{\textit{\text{SE}}}} \right|}^{2}}$ as the channel gain of link between $S$ and $E$.

Considering the nodes' locations, the free-space path loss (FSPL) from $u$ to $v$ is given by $\lambda_{\textit{\text{uv}}} = \left({ d_{\textit{\text{uv}}}}/{d_0}\right)^{ -\tau }$,   
	where $\tau$ is the path-loss exponent, $d_{\textit{\text{uv}}}$ is the $u$-to-$v$ distance, and $d_{0}$ is the reference distance, {e.g.,} $\lambda_{uv}\left(d_{\textit{\text{uv}}}={d_0}\right) = 1$.
The received power at $v$ can be written as
	$P_{\textit{\text{v}}} = P_u \lambda_{\textit{\text{uv}}} {\left| {{h}_{\textit{\text{uv}}}} \right|}^{2}$, where $P_u$ is the transmitted power at $u$.

We assume that the source transmits the RF signal for wireless powering and the information signal at its maximum power to improve the SNR, which is denoted as ${{P}_{S}}$. 
The noise at all the receivers is assumed to be the Additive White Gaussian Noise (AWGN) following $\mathcal{C}\mathcal{N}\left( 0,{\sigma}^2 \right)$.
%--------------------------------------------
\subsection{Legitimate Communications}
\label{subsec:main}
In the EH phase, following the linear EH model \cite{d_n_k_jayakody_self-energized_2020}, the harvested energy at the {\textit{u}}-th UAV is written as 
\begin{equation}
	{\Xi_{u}}=\eta \alpha T\left( {{P}_{S}}{{\lambda }_{\textit{\text{Su}}}}{{\left| {{h}_{\textit{\text{Su}}}} \right|}^{2}}+{{\sigma}^2} \right) ,
	\label{eq:EH}
\end{equation}
where $0<\eta<1$ is the energy conversion efficiency factor which depends on the EH circuitry. 
In the second phase, $S$ broadcasts its information signal to all the UAVs. The instantaneous SNR over the link between $S$ and {the \textit{u}-th UAV} is ${{\gamma }_{{\textit{\text{Su}}}}}={\psi}{\lambda_{\textit{\text{Su}}}}{{\left| {{h}_{\textit{\text{Su}}}} \right|}^{2}}$,
%\begin{equation}
%{{\gamma }_{{\textit{\text{Su}}}}}={\psi}{\lambda_{\textit{\text{Su}}}}{{\left| {{h}_{\textit{\text{Su}}}} \right|}^{2}},
%\end{equation} 
with denotation of $\psi ={{P}_{S}}/{{\sigma}^2}$. %stands for the average SNR. 
At the end of this phase, only $R^*$ is selected as a relay node according to the highest SNR criterion as ${\gamma }_{\textit{\text{SR\textsuperscript{*}}}} = {\psi}{\lambda_{\textit{\text{SR\textsuperscript{*}}}}} {{\left| {{h}_{\textit{\text{SR\textsuperscript{*}}}}} \right|}^{2}}=\underset{u\in {{\Phi }_{u}}}{{\max }}\,\left( {\psi}{\lambda_{\textit{\text{Su}}}}{{\left| {{h}_{\textit{\text{Su}}}} \right|}^{2}}\right)$.
%\begin{equation}
%{\gamma }_{SR^*} = \arg \underset{u\in {{\Phi }_{u}}}{{\max }}\, {{\gamma }_{Su}}={\psi}{\lambda_{SR^*}} {{\left| {{h}_{SR^*}} \right|}^{2}}.
%\label{eq:gammaSRb}
%\end{equation}
%
The harvested energy at $R^*$ in the EH phase is ${\Xi_{R^*}}=\eta \alpha T\left( {{P}_{S}}{{\lambda }_{{\textit{\text{SR}}}^*}}{{\left| {{h}_{{\textit{\text{SR}}}^*}} \right|}^{2}}+{{\sigma}^2} \right)$.
Then, in the third phase, $R^*$ uses its harvested energy to AF the received signals to $D$ with an amplification factor of 
\begin{equation}
	{G_\textit{AF}} =\frac{2{\Xi_{R^*}}/\left(1-\alpha \right) T}{{P}_{S}{\lambda_{\textit{\text{SR\textsuperscript{*}}}}}{{\left| {{h}_{\textit{\text{SR\textsuperscript{*}}}}} \right|}^{2}}+{{\sigma}^2}}.
	\label{eq:amp_def}
\end{equation}
The SNR of received signal at $D$ is then written as
\begin{equation}
	{{\gamma }_{D}}=\frac{\varepsilon\psi {\lambda_{\textit{\text{SR\textsuperscript{*}}}}}{\lambda_{\textit{\text{R\textsuperscript{*}}D}}} XY}{\varepsilon {\lambda_{\textit{\text{R\textsuperscript{*}}D}}} {Y}+1},
	\label{eq:snrDfinal}
\end{equation}
where $\varepsilon=
2{\eta\alpha}/{\left( 1-\alpha \right)}$ denotes the time factor of wireless-powered TSR protocol. 
\subsection{Eavesdropping}
\label{subsec:wire_tap}
$E$ is assumed to attempt to eavesdrop information in the phase 2 and phase 3. 
%The received signal at $E$ consists of the signal directly from $S$ and the signal from the relay UAV. 
In the second phase, $E$ directly listens to $S$ and receives the signal with the SNR as
\begin{equation}
	{{\gamma}_{\textit{\text{SE}}}}=\psi {\lambda_{\textit{\text{SE}}}} W .
	\label{eq:snrSE}
\end{equation}
%where $W={{\left| {{h}_{\textit{\text{SE}}}} \right|}^{2}}$.
%
In the third phase, without friendly jamming the SNR of received signal at $E$ has the same form as ${\gamma}_{D}$ in \eqref{eq:snrDfinal} 
\begin{equation}
{{\gamma }_{\textit{\text{R\textsuperscript{*}}E}}}=\frac{\varepsilon\psi {\lambda_{\textit{\text{SR\textsuperscript{*}}}}}{\lambda_{\textit{\text{R\textsuperscript{*}}E}}} XZ}{\varepsilon {\lambda_{\textit{\text{R\textsuperscript{*}}E}}} {Z}+1}.
\label{eq:snrREfinal}
\end{equation}
%
%with $Z={{\left| {{h}_{\textit{\text{R\textsuperscript{*}}E}}} \right|}^{2}}$.
%which SNRs at $E$ are denoted as ${\gamma}_{\textit{\text{SE}}}$ and ${\gamma}_{\textit{\text{R\textsuperscript{*}}E}}$, respectively. 
%With given affordable implementation complexity,
$E$ can intelligently either perform SC to select the highest SNR received signal as ${{\gamma}_{E}^{{{\text{SC}}}}}=\max \left( {{\gamma }_{\textit{\text{SE}}}},{{\gamma }_{\textit{\text{R\textsuperscript{*}}E}}} \right)$, or MRC with the sum SNRs as ${{\gamma}_{E}^{{\text{MRC}}}}={{\gamma}_{\textit{\text{SE}}}}+{{\gamma}_{\textit{\text{R\textsuperscript{*}}E}}}$ to intercept the legitimate information \cite{x._yuan_secrecy_2020}. 
\subsection{Cooperative Jamming}
When all the UAVs other than the relaying one use their harvested energy to send jamming signals to $E$, the SNR at $E$ over aerial links is 
%\begin{equation}
%{{\gamma }_{R^*E}^J}=\frac{\varepsilon\psi {\lambda_{SR^*}}{{\left| {{h}_{SR^*}} \right|}^{2}}{\lambda_{R^*E}}{{\left| {{h}_{R^*E}} \right|}^{2}}}{\varepsilon {\lambda_{R^*E}} {{\left| {{h}_{R^*E}} \right|}^{2}}+1+\delta \varepsilon \psi   \sum\limits_{j=1}^{M-1}{{{\lambda_{Sj}}{\lambda_{jE}}{\left| {{h}_{Sj}} \right|}^{2}}{{\left| {{h}_{jE}} \right|}^{2}}}}.
%\label{eq:SNR_RJEdef}
%\end{equation}
\begin{equation}
	{{\gamma }_{\textit{\text{R\textsuperscript{*}E}}}^\text{J}} = \frac{\varepsilon\psi {\lambda_{\textit{\text{SR\textsuperscript{*}}}}}{\lambda_{\textit{\text{R\textsuperscript{*}}E}}}XZ}{\varepsilon {\lambda_{\textit{\text{R\textsuperscript{*}}E}}} Z+1+{{P}_\textit{\text{JE}}}},
	\label{eq:SNR_RJE}
\end{equation}
where ${{P}_\textit{\text{JE}}}=\delta \varepsilon \sum\limits_{j=1}^{U-1}{\left\{\psi {{{\lambda }}_{\textit{\text{Sj}}}{\left| {{h}_{Sj}} \right|}^{2}}+1\right\}}{{{{\lambda }}_{\textit{\text{jE}}}}{\left| {{h}_{jE}} \right|}^{2}}$, and $\delta$ is the factor of using harvested energy, $0<\delta \le 1$.
We assumed that all {the channels between UAVs and ground nodes} are i.i.d. with ${{\lambda }}_{\textit{\text{Sj}}}={{\lambda }}_{\textit{\text{SR\textsuperscript{*}}}}$ and ${{{\lambda }}_{\textit{\text{jE}}}}={{{\lambda }}_{\textit{\text{R\textsuperscript{*}E}}}}$, $j \in {\Phi _u}\backslash R^*$, the jamming power from each {of the jamming UAVs} can be approximated as a fraction $\delta$ of the average energy harvested from $S$, 
%as follows
%\begin{equation}
$	P_J=\delta \varepsilon {\psi {{\lambda_{\textit{\text{SR\textsuperscript{*}}}}}g}}$, 
%\end{equation}
where $g$ is the average channel gain between $S$ and jamming UAVs. %${{P}_{JE}}=\delta \varepsilon {\left\{\psi {{\lambda_{Sj}}g}+1\right\}}{\lambda_{jE}}\sum\limits_{j=1}^{M-1}{{\left| {{h}_{jE}} \right|}^{2}}$. %with $g=\mathbb{E}\left( |h_{Sj}|^2 \right)$ is the mean value of source-to-jam UAVs channel gain over shadowed-Rician fading, can be calculated using the first order moment \cite{simon2005digital}. 
Denote ${\mathcal{J}}=\sum\limits_{j=1}^{U-1}{{{\left| {{h}_{\textit{\text{jE}}}} \right|}^{2}}}$, we obtain %the expression of ${{\gamma }_{\textit{\text{R\textsuperscript{*}}E}}^J}$ in high SNR as
\begin{equation}
	{{\gamma }_{\textit{\text{R\textsuperscript{*}E}}}^\text{J}} = \frac{\varepsilon\psi {\lambda_{\textit{\text{SR\textsuperscript{*}}}}}{\lambda_{\textit{\text{R\textsuperscript{*}}E}}}XZ}{\varepsilon {\lambda_{\textit{\text{R\textsuperscript{*}}E}}} Z+1+{P_J\lambda_{\textit{\text{R\textsuperscript{*}E}}}}{\mathcal{J}}},
	\label{eq:SNR_RJ3}
\end{equation}
%${{P}_{JE}}=\delta \varepsilon {\left\{\psi {{\lambda_{Sj}}g}+1\right\}}{\lambda_{jE}}\sum\limits_{j=1}^{M-1}{{\left| {{h}_{jE}} \right|}^{2}}$

%------------------------------------------
\subsection{Distributions of Defined Random Variables $X$ and ${\mathcal{J}}$}
Under the assumption that the UAVs are located sufficiently apart within a swarm whose span is very small as compared to the distances from UAVs to on-ground nodes, we assume i.i.d. {channels between $S$ and UAVs} with the same average received power ${{{\lambda }}_{\textit{\text{Su}}}}={{{\lambda }}_{\textit{\text{SR\textsuperscript{*}}}}}$. 
Therefore, the UAV with the
highest channel gain is selected to relay information to $D$. 
The CDF of $X={{\left| {{h}_{\textit{\text{SR\textsuperscript{*}}}}} \right|}^{2}}=\underset{u\in {{\Phi }_{u}}}{{\max }}\, {{\left| {{h}_{\textit{\text{S}u}}} \right|}^{2}} $ is written as follows:
\begin{equation}
	{{F}_{X}}\left( x \right)=\Pr \left\{ \left. {{\left| {{h}_{\textit{\text{Su}}}} \right|}^{2}}<x \right|u\in {{\Phi }_{u}} \right\} {\mathop{=}}\,\prod\limits_{u=1}^{U}{{{F}_{{\left| {{h}_{\textit{\text{Su}}}} \right|}^{2}}}\left( x \right)}.
	\label{eq:CDFX1}
\end{equation}
\begin{lemma}
	The expressions for the CDF and PDF of r.v. $X$ are presented as 
	\begin{equation}
		{{F}_{X}}\left( x \right)={\widetilde{\sum }_{u}^U}{{\kappa }_{u}}{{x}^{{{\chi }_{u}}}}{{e}^{-{{\eta }_{u}}x}},		
		\label{eq:CDFX}
	\end{equation}
	\begin{equation}
		\begin{aligned}
			&{{f}_{X}}\left( x \right)=U\sum\limits_{l_X=0}^{{{m}_{S}}-1}{{\zeta }_{X}}{\widetilde{\sum }_{u}^{{U-1}}}{{\kappa }_{u}}{{x}^{{l}_{X}+{\chi }_{u}}}{{e}^{-\left({\eta }_{X}+{\eta }_{u}\right)x}},
			\label{eq:PDFX}
		\end{aligned}
	\end{equation}
	where
	\begin{equation*}
		\begin{aligned}
			& {\widetilde{\sum }_{u}^U}=\sum\limits_{u=0}^{U}{\frac{{{\left( -1 \right)}^{u}}}{u!}}\underbrace{\sum\limits_{{{n}_{1}}=1}^{U}{\ldots }\sum\limits_{{{n}_{u}}=1}^{U}{}}_{{{n}_{1}}\ne {{n}_{2}}\ldots \ne {{n}_{u}}}\ \sum\limits_{{{l}_{1}}=0}^{{{m}_{S}}-1}{\sum\limits_{{{q}_{1}}=0}^{{{l}_{1}}}{\ldots }}\sum\limits_{{{l}_{u}}=0}^{{{m}_{S}}-1}{\sum\limits_{{{q}_{u}}=0}^{{{l}_{u}}}} ,\\
			&
			{\kappa}_{u}=\prod\limits_{t=1}^{u}{{\kappa }_{X,t}}, \quad {{\eta }_{u}}=\sum\limits_{t=1}^{u}{\eta_{X,t}}, \quad{{\chi }_{u}}=\sum\limits_{t=1}^{u}{{{q}_{t}}}
			%, \\
			%{{\kappa }_{u}}=\prod\limits_{t=1}^{u}{{\kappa }_{X}},\quad {{\eta }_{u}}=\sum\limits_{t=1}^{u}{\eta_X} , \quad{{\chi }_{u}}=\sum\limits_{t=1}^{u}{{{q}_{t}}.
			%&{\eta }_{S}={\eta }_{X}+{\eta }_{u},\enspace{{\chi }_{S}}={{l}_{X}}+{{\chi }_{u}}
			. \\
		\end{aligned}
		\label{eq:new v.r}
	\end{equation*}
	% kappa, eta
	\label{lemma:FX}
\end{lemma}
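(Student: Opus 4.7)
The plan is to prove the CDF formula directly by substitution and product expansion of \eqref{eq:CDFX1}, and then to derive the PDF via the order-statistic density identity rather than by differentiating the CDF expression.

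First I would write each factor in \eqref{eq:CDFX1} as $F_{|h_{Su}|^2}(x) = 1 - G_u(x)$, with $G_u(x) = \sum_{l=0}^{m_S-1}\sum_{q=0}^{l}\kappa_{X,u}\,x^{q}\,e^{-\eta_{X,u} x}$ taken from the per-link shadowed-Rician CDF stated at the start of the section. Then I would expand the $U$-fold product via the standard identity
\[
\prod_{u=1}^{U}\bigl(1-G_u(x)\bigr) = \sum_{k=0}^{U}\frac{(-1)^{k}}{k!}\sum_{\substack{n_1,\ldots,n_k=1\\ \text{all distinct}}}^{U}\prod_{t=1}^{k}G_{n_t}(x),
\]
where the $1/k!$ compensates for over-counting unordered subsets of indices when $(n_1,\ldots,n_k)$ is enumerated as an ordered tuple. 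This already produces the outer $\widetilde{\sum}_{u}^{U}$ skeleton of the target formula once the inner expansions of the $G_{n_t}$ are unfolded.

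Next, for each fixed tuple $(n_1,\ldots,n_u)$ I would distribute the product $\prod_{t=1}^{u}G_{n_t}(x)$ over its nested $(l_t,q_t)$ sums, using $\prod_t x^{q_t}=x^{\sum_t q_t}$ and $\prod_t e^{-\eta_{X,t}x}=e^{-(\sum_t\eta_{X,t})x}$. Because the source-to-UAV channels are i.i.d., the per-link constants $\kappa_{X,t}$ and $\eta_{X,t}$ take common values, and the accumulated exponent, polynomial degree, and scalar prefactor coincide exactly with the $\eta_u$, $\chi_u$, and $\kappa_u$ defined in the lemma statement. Collecting these terms delivers \eqref{eq:CDFX}. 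For the PDF I would apply the order-statistic identity $f_X(x)=U\,f_{|h_{Su}|^2}(x)\,\bigl[F_{|h_{Su}|^2}(x)\bigr]^{U-1}$, substitute the per-link PDF for the leading factor, and repeat the same product-expansion argument on the $(U-1)$-fold power, which now produces $\widetilde{\sum}_{u}^{U-1}$ and an extra $x^{l_X}e^{-\eta_X x}$ from the density term, reproducing \eqref{eq:PDFX}.

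The hard part will be purely clerical: keeping the bookkeeping of the nested $\widetilde{\sum}$ consistent, verifying that the $1/u!$ correction precisely offsets the ordered-tuple over-counting, and checking that the consolidated exponent $\eta_u$, polynomial degree $\chi_u$, and coefficient $\kappa_u$ are recovered unambiguously from the i.i.d.\ per-link parameters. No genuine analytical obstacle arises; the lemma reduces to careful combinatorial rewriting of a multinomial expansion rather than to any new estimate.
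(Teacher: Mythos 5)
Your proposal is correct and takes essentially the same route as the paper: the CDF follows from the same inclusion--exclusion (multinomial) expansion of $\prod_{u=1}^{U}F_{\left| h_{\textit{Su}}\right|^{2}}(x)$, with the $1/u!$ factor offsetting the ordered enumeration of distinct index tuples exactly as you describe. Your use of the order-statistic identity $f_{X}(x)=U\,f_{\left| h_{\textit{Su}}\right|^{2}}(x)\,\bigl[F_{\left| h_{\textit{Su}}\right|^{2}}(x)\bigr]^{U-1}$ for the PDF is just the chain-rule form of the paper's ``differentiate $F_{X}$'' step applied to $F^{U}$ before expansion, and it is in fact the version that lands directly on the stated form of \eqref{eq:PDFX}; there is no gap.
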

\begin{proof}
	Using the multinomial theorem, the CDF of r.v. $X$ is given as in \eqref{eq:CDFX}. The corresponding PDF in \eqref{eq:PDFX} can be obtained by taking the derivative of ${F_X}\left( x \right)$ with respect to (w.r.t.) $x$ to complete the proof. 
\end{proof}
We use the Moment Generating Function (MGF) approach to derive the PDF of ${{\mathcal{J}}}=\sum\limits_{j=1}^{U-1}{{{\left| {{h}_{jE}} \right|}^{2}}}$.
\begin{lemma}
	The expression for the PDF of ${\mathcal{J}}$ is presented as 	
\begin{equation}
	{{f}_{{\mathcal{J}}}}\left( t \right)=\widehat{\sum}_{k}^{U-1}{\zeta_k}{{t}^{\chi _{k}-1}}{{e}^{-{{\eta }_{Z}}t}},\enspace U \ge 1,
	\label{eq:PDF_Zsum}
\end{equation}
where	 
\begin{equation*}
	\begin{aligned}
		&\widehat{\sum}_{k}^{U-1}=\sum\limits_{{{k}_{1}}=0}^{{{m}_{S}}-1}{\ldots} \sum\limits_{{{k}_\text{U-1}}=0}^{{{m}_{S}}-1}, \enspace 
		\chi_{k}=\sum\limits_{j=1}^{U-1}{\left({k}_{j}+1\right)},\enspace {\zeta_k}=\frac{1}{\left( \chi _{k}-1 \right)!}{{\prod\limits_{j=1}^{U-1}{\left( {{\zeta }_{Z}}{{k}_{j}}! \right)} }}. \\
	\end{aligned}
	\label{eq:new v.r3}
\end{equation*}
\end{lemma}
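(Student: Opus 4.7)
The plan is to use the Laplace transform (or equivalently the MGF, as suggested by the lemma statement) of the individual summands and exploit independence. Since the $|h_{jE}|^2$ are i.i.d.\ shadowed-Rician with parameters $(\zeta_Z,\eta_Z,m_S)$, their common PDF is $\sum_{k=0}^{m_S-1}\zeta_Z x^k e^{-\eta_Z x}$ as given in the channel model. A termwise Laplace transform yields
\begin{equation*}
\mathcal{L}\bigl\{f_{|h_{jE}|^2}\bigr\}(s) \;=\; \sum_{k_j=0}^{m_S-1} \frac{\zeta_Z\,k_j!}{(s+\eta_Z)^{k_j+1}} ,
\end{equation*}
where I used the standard identity $\mathcal{L}\{x^k e^{-\eta x}\}(s)=k!/(s+\eta)^{k+1}$.

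Next, because $\mathcal{J}=\sum_{j=1}^{U-1}|h_{jE}|^2$ is a sum of $U-1$ independent r.v.s, its MGF factorises into a product of the above expressions. Expanding the product of $U-1$ finite sums into a single $(U-1)$-fold sum gives
\begin{equation*}
\mathcal{L}\{f_{\mathcal{J}}\}(s)\;=\;\widehat{\sum}_{k}^{U-1}\Bigl(\prod_{j=1}^{U-1}\zeta_Z k_j!\Bigr)\frac{1}{(s+\eta_Z)^{\sum_{j=1}^{U-1}(k_j+1)}} ,
\end{equation*}
which, after introducing $\chi_k=\sum_{j=1}^{U-1}(k_j+1)$, has every summand of the form $C/(s+\eta_Z)^{\chi_k}$.

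The final step is a termwise inverse Laplace transform using $\mathcal{L}^{-1}\{1/(s+\eta_Z)^{\chi_k}\}=t^{\chi_k-1}e^{-\eta_Z t}/(\chi_k-1)!$, which produces exactly
\begin{equation*}
f_{\mathcal{J}}(t)\;=\;\widehat{\sum}_{k}^{U-1}\underbrace{\frac{1}{(\chi_k-1)!}\prod_{j=1}^{U-1}(\zeta_Z k_j!)}_{\zeta_k}\, t^{\chi_k-1}e^{-\eta_Z t},
\end{equation*}
matching \eqref{eq:PDF_Zsum}. I do not expect any single step to be technically hard; the only delicate point is bookkeeping, namely verifying that the $(U-1)$-fold index set $\widehat{\sum}_{k}^{U-1}$ and the quantities $\chi_k$, $\zeta_k$ are defined consistently with how the expanded product is regrouped, and confirming that the $U=1$ edge case gives the empty sum convention $\mathcal{J}=0$ consistent with the statement $U\ge 1$.
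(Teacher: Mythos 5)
Your proposal is correct and follows essentially the same route as the paper's Appendix A: termwise Laplace transform of the shadowed-Rician PDF via $\mathcal{L}\{x^k e^{-\eta x}\}=k!/(s+\eta)^{k+1}$, factorisation of the transform of $\mathcal{J}$ by independence, expansion into the $(U-1)$-fold sum, and termwise inversion with $\mathcal{L}^{-1}\{(s+\eta_Z)^{-n}\}=t^{n-1}e^{-\eta_Z t}/(n-1)!$. The only (cosmetic) difference is that you flag the degenerate $U=1$ case explicitly, which the paper leaves implicit.
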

\begin{proof}
	The proof is provided in Appendix A.
	%Using the Moment Generating Function (MGF) approach, we complete the proof.
\end{proof}

\section{Secrecy Performance Analysis}
\label{Sec:Secrecy}
% According to Wyner et al. \cite{a._d._wyner_wire-tap_1975}, there exists a positive secrecy capacity $C_S$ such that a transmission can obtain approximately perfect secrecy. 
Secrecy capacity is defined as the positive value of the difference between the instantaneous capacities of the legitimate and the wiretap channels as
${{{C}_{S}}=\left( {{C}_{D}}-{{C}_{E}}\right)^{+}}$  \cite{a_d_wyner_wire_tap_1975}. 
%, where 
%%
%\begin{equation}
%	\begin{aligned}
%		& {{C}_{D}}=\frac{1-\alpha }{2}{{\log }_{2}}\left( 1+{{\gamma }_{D}} \right) ,\\ 
%		& {{C}_{E}}=\frac{1-\alpha }{2}{{\log }_{2}}\left( 1+{{\gamma }_{E}} \right) .\\ 
%	\end{aligned}
%\end{equation}
To measure the security performance of the system, the SOP is defined as the probability at which the achieved secrecy capacity is not greater than a predefined secrecy rate of $C_{\textit{\text{th}}}$. Let $P_{\textit{\text{out}}}\left({C }_\textit{\text{th}}\right)=\Pr \left\{ {{C}_{S}}<{{C }_{th}} \right\}$ denote the SOP at $C_{\textit{\text{th}}}$ 
\begin{equation}
	\begin{aligned}
%	P_{\textit{\text{out}}}&=\Pr \left\{\left({{{C}_{D}}-{{C}_{E}}}\right)^{+}<{{C }_\textit{\text{th}}} \right\} \\
P_{\textit{\text{out}}}\left({C }_\textit{\text{th}}\right)%&=\Pr \left\{ \left( {{C}_{D}}-{{C}_{E}}\right)^{+}<{{C }_{th}} \right\} \\
&=\Pr \left\{\frac{1-\alpha }{2}{{\log }_{2}}\left(\frac{1+{{\gamma }_{D}}}{1+{{\gamma }_{E}}}\right)<{{C }_\textit{\text{th}}} \right\} 
%&=\Pr \left\{ \frac{1+{{\gamma }_{D}}}{1+{{\gamma }_{E}}} < {\gamma_S} \right\} \\
=\Pr \left\{ \frac{{\gamma }_{D}}{{\gamma }_{S}} < {{\gamma }_{E}+1-\frac{1}{\gamma_S}} \right\} ,\\
	\end{aligned}
	\label{eq:SOP_def}
\end{equation}
where $\gamma_{S}=2^{2\mathcal{C}_{\textit{\text{th}}}/(1-\alpha)}$ {denotes} the target secrecy SNR.
%For simplicity, two equal time slots of $\left( 1-\alpha  \right){T}/{2}$ is applied into two-hop transmission \cite{b_ji_performance_2019}. 
In high SNR regime, exploiting the approximation of $\frac{1+{{\gamma }_{D}}}{1+{{\gamma }_{E}}}\approx \frac{{{\gamma }_{D}}}{{{\gamma }_{E}}}$, which is widely adopted in literature \cite{v_bankey_secrecy_2017}, we obtain the asymptotic expression of ${{P}_{\textit{\text{out}}}}$ as
\begin{equation}
{{P}_{\textit{\text{out}}}}\left( {{\gamma }_{S}} \right)= \Pr \left\{ {\frac{{{\gamma }_{D}}}{{{\gamma }_{E}}}}<{{\gamma }_{S}}\right\}.
\end{equation}
%
%\begin{equation}
%	{{P}_{S}}\left( {{\gamma }_{S}} \right)= \Pr \left( {\frac{{{\gamma }_{D}}}{{{\gamma }_{E}}}}<{{\gamma }_{S}}\right),
%	\label{eq:SOP_appro}
%\end{equation}
Since there are many factors in our model, the SOP cannot be directly analyzed for the case of a randomly distributed eavesdropper. Therefore, in this section we derive the expressions of SOP with assumption that the CSI of $E$ is available to the network. 
%$E$ can be detected and tracked by the approaches of radio direction-finding and localization addressed in \cite{shu_low_complexity_2018}. Then a more complicated scenario of randomly distributed $E$ will be further investigated. 
%

\subsection{Selection Combining}
The asymptotic SOP for the SC scheme at $E$ is
\begin{equation}
	\begin{aligned}
		P_{\textit{\text{out}}}^{{\text{SC}}}\left( {{\gamma }_{S}} \right)&=1-\Pr \left\{ \frac{{{\gamma }_{D}}}{{{\gamma }_{S}}}>\max \left( {{\gamma }_{\textit{\text{SE}}}},{{\gamma }_{\textit{\text{R\textsuperscript{*}}E}}} \right) \right\} .\\ 
%		& =1-\Pr \left( \frac{{{\gamma }_{D}}}{{{\gamma }_{S}}}>{{\gamma }_{SE}},\frac{{{\gamma }_{D}}}{{{\gamma }_{S}}}>{{\gamma }_{{{R}^{*}}E}} \right) .\\		
	\end{aligned}
	\label{eq:SOP_SC}
\end{equation}
Using the formulas of SNRs in \eqref{eq:snrDfinal}, \eqref{eq:snrSE} and \eqref{eq:snrREfinal}, we get
\begin{equation}
	\begin{split}
		P_{\textit{\text{out}}}^{{\text{SC}}}\left( {{\gamma }_{S}} \right)&=1-{{\mathbb{E}}_{X,Y}}\left\{ \Pr \left\{ \begin{aligned}
			& W<{{\Upsilon_{W}}(Y)}X, \enspace Z<{{\Upsilon_Z}(Y)} \\ 
		\end{aligned} \right\} \right\},		
	\end{split}				
	\label{eq:SOP_SC1}
\end{equation} 
where 
\begin{equation*}
	\begin{split}
		&{{\Upsilon_{W}}(Y)}=\frac{\varepsilon {\lambda_{\textit{\text{SR\textsuperscript{*}}}}} {\lambda_{\textit{\text{R\textsuperscript{*}}D}}}Y}
		{{{\gamma }_{S}}{\lambda_{\textit{\text{SE}}}}\left( \varepsilon {\lambda_{\textit{\text{R\textsuperscript{*}}D}}}Y+1 \right)} ,
		\enspace {{\Upsilon_Z}(Y)}=\frac{{\lambda_{\textit{\text{R\textsuperscript{*}}D}}}Y}
		{{\lambda_{\textit{\text{R\textsuperscript{*}}E}}}\left( \varepsilon {\lambda_{\textit{\text{R\textsuperscript{*}}D}}}\left( {{\gamma }_{S}}-1 \right)Y+{{\gamma }_{S}} \right)}.\\
	\end{split}
\end{equation*} 
\begin{figure*}[!tb]
	\begin{center}
		\begin{equation}
			\begin{aligned}
				P_{\textit{\text{out}}}^{{\text{SC}}}
				\left( {{\gamma }_{S}} \right)=1
				-& U				 \sum\limits_{{{l}_{X}}=0}^{{{m}_{S}}-1}
				{\widetilde{\sum }_{u}^{U-1}\sum\limits_{{{l}_{Y}}=0}^{{{m}_{S}}-1}{{{\zeta }_{X}}{{\zeta }_{Y}}}}{{\kappa }_{u}}\Gamma \left( {{\chi }_{u}}+1 \right) 
				\Biggl\{ 
				{{\eta }_{u}}^{-\left( {{\chi }_{u}}+1 \right)}\Gamma \left( {{l}_{Y}}+1 \right){{\eta }_{Y}}^{-\left( {{l}_{Y}}+1 \right)} \\
				& \quad -{{\left( {{{\tilde{\gamma }}}_{SE}} \right)}^{{{\chi }_{u}}+1}}
			    {\mathcal{I}_1}
				-\sum\limits_{{{l}_{Z}}=0}^{{{m}_{S}}-1}{\sum\limits_{{{q}_{Z}}=0}^{{{l}_{Z}}}{{{\kappa }_{Z}}}} {{\left( {{\tilde{\gamma }}_{Z}} \right)}^{{{q}_{Z}}}} 
				\left\{ {{\eta }_{u}}^{-\left( {{\chi }_{u}}+1 \right)} {\mathcal{I}_2} 
				+ {{\left( {{{\tilde{\gamma }}}_{SE}} \right)}^{{{\chi }_{u}}+1}} {\mathcal{I}_3}  \right\} \Biggr\} . \\ 				
			\end{aligned} 
			\label{eq:SOP_SC_final}
		\end{equation}
	\end{center}
	\setlength{\arraycolsep}{1pt}
	\hrulefill \setlength{\arraycolsep}{0.0em}
\end{figure*}
%====================Theta 1========================
\begin{figure*}[!tb]
\begin{center}
	\begin{equation}
		\begin{split}
			&{{{\Theta }_{1}}\left( v,\gamma ,\mu ;\alpha ,\beta  \right)={{e}^{\mu \beta }}\sum\limits_{m=0}^{v}{\left( \begin{aligned}
						& v \\ 
						& m \\ 
					\end{aligned} \right)}{{\left( -\beta  \right)}^{v-m}}\sum\limits_{n=0}^{\gamma }{\left( \begin{aligned}
						& \gamma  \\ 
						& n \\ 
					\end{aligned} \right)}{{\left( \alpha -\beta  \right)}^{\gamma -n}}{{\Phi }_{1}}\left( \beta ;m+n-\gamma ,\mu  \right)} .\\				
		\end{split}	
		\label{eq:Theta_1}
	\end{equation}
	\setlength{\arraycolsep}{1pt}
	\hrulefill \setlength{\arraycolsep}{0.0em}
\end{center}
\end{figure*}
%====================Theta 2========================
\begin{figure*}[!tb]
\begin{center}
	\begin{equation}
		\begin{split}
			&{{{\Theta }_{2}}\left( v,\gamma ,\mu ,\rho ;\beta  \right)={{e}^{\mu \beta -\rho }}\sum\limits_{m=0}^{v+\gamma }{\left( \begin{aligned}
						& v \\ 
						& m \\ 
					\end{aligned} \right)}{{\left( -\beta  \right)}^{v+\gamma -m}}\sum\limits_{n=0}^{\infty }{\frac{{{\left( \rho \beta  \right)}^{n}}}{n!}} {{\Phi }_{1}}\left( \beta ;m-n-\gamma ,\mu  \right)} .\\
		\end{split}	
		\label{eq:Theta_2}
	\end{equation}
	\setlength{\arraycolsep}{1pt}
	\hrulefill \setlength{\arraycolsep}{0.0em}	
\end{center}
\end{figure*}
%====================Theta 3========================
\begin{figure*}[!tb]
\begin{center}
	\begin{equation}
		\begin{split}
			& {\begin{aligned}
					{{\Theta }_{3}}
					&\left( v,\gamma ,\lambda ,\mu ,\rho ;\alpha ,\beta ,\xi  \right) = \sum\limits_{p=0}^{\infty }{\frac{{{\left( -\rho  \right)}^{p}}}{p!}} \sum\limits_{m=0}^{\gamma }
					{\left( \begin{aligned}
							& \gamma  \\ 
							& m \\ 
						\end{aligned} \right)}
					{{\alpha }^{\gamma -m}} \\
					& \quad\quad\quad \times \left\{
					\begin{aligned}					\sum\limits_{a=1}^{\gamma +p}{{A}_{\gamma +p-a+1}}{{\Phi }_{2}} \left( \gamma +p-a+1,\mu ;\beta  \right) + \sum\limits_{b=1}^{\lambda }{{{B}_{\lambda -b+1}}{{\Phi }_{2}}\left( \lambda -b+1,\mu ;\xi  \right)}
					\end{aligned}
					 \right\} .\\ 
			\end{aligned} }\\
			& {\begin{aligned}
					& \begin{split}
						{{A}_{\gamma +p-a+1}} = 
						& \frac{1}{\left( a-1 \right)!}
						\sum\limits_{q=0}^{a-1}{\left( \begin{aligned}
								& a-1 \\ 
								& \ \ q \\ 
							\end{aligned} \right)}{{\left( -1 \right)}^{q}}{{\left( \xi -\beta  \right)}^{-\left( \lambda +q \right)}} \\
						& \quad \times \prod\limits_{r=1}^{q}{\left( \lambda +r-1 \right)}{{\left( -\beta  \right)}^{v+\lambda +m+p-\left( a-1-q \right)}}\prod\limits_{s=1}^{a-1-q}{\left( v+\lambda +m+p-s+1 \right)} .\\
					\end{split} \\
					& \begin{split}
						{{B}_{\lambda -b+1}}= 
						& \frac{1}{\left( b-1 \right)!}
						\sum\limits_{q=0}^{b-1}{\left( \begin{aligned}
								& b-1 \\ 
								& \ \ q \\ 
							\end{aligned} \right)}{{\left( -1 \right)}^{q}}{{\left( \beta -\xi  \right)}^{-\left( \gamma +p+q \right)}} \\
						& \quad \times \prod\limits_{r=1}^{q}{\left( \gamma +p+r-1 \right)}{{\left( -\xi  \right)}^{v+\lambda +m+p-\left( b-1-q \right)}}\prod\limits_{s=1}^{b-1-q}{\left( v+\lambda +m+p-s+1 \right)} .\\
					\end{split} \\ 
			\end{aligned}}\\				
		\end{split}	
		\label{eq:Theta_3}
	\end{equation}
\end{center}
\setlength{\arraycolsep}{1pt}
\hrulefill 
\setlength{\arraycolsep}{0.0em}
\end{figure*}
%%====================================================
\begin{figure*}[!tb]
	\begin{center}
		\begin{equation}
			\begin{aligned}
				&{{{\Phi }_{1}}\left( u;v,\mu  \right)=\left\{ \begin{aligned}
						& {{e}^{-u\mu }}\sum\limits_{k=0}^{v}{\frac{v!}{k!}}\frac{{{u}^{k}}}{{{\mu }^{v-k+1}}},\quad \mu >0, v\ge 0 ,\\ 
						& \begin{aligned} 
						    {{\left( -1 \right)}^{-v}} 
						    & \frac{{{\mu }^{-v-1}}\text{Ei}\left( -\mu u \right)}{\left( -v-1 \right)!} \\
						    & +\frac{{{e}^{-\mu u}}}{{{u}^{-v-1}}} \sum\limits_{k=0}^{-v}{\frac{{{\left( -1 \right)}^{k}}{{\mu }^{k}}{{u}^{k}}}{\left( -v-1 \right)\left( -v-1-1 \right)\ldots \left( -v-1-k \right)}},\quad \mu >0, v<0 .\\
						\end{aligned} \\ 
					\end{aligned} \right.} \\
				& {{\Phi }_{2}}\left( \gamma ,\mu ;\beta  \right)=\frac{{{\left( -\mu  \right)}^{\gamma -1}}}{\left( \gamma -1 \right)!}\left( \sum\limits_{g=1}^{\gamma -1}{\frac{\left( g-1 \right)!}{{{\left( -\beta \mu  \right)}^{g}}}}-{{e}^{\beta \mu }}\text{Ei}\left( -\beta \mu  \right) \right) .\\
			\end{aligned}
			\label{eq:Phi}
		\end{equation}	
	\end{center}
	\setlength{\arraycolsep}{1pt}
	\hrulefill \setlength{\arraycolsep}{0.0em}
\end{figure*}
%=====================================================
%----------------------------------------
\begin{proposition}
	Without friendly jamming, the asymptotic SOP with respect to SC scheme at $E$ is presented in \eqref{eq:SOP_SC_final}, where
	\begin{equation*}
		\begin{aligned}
			&{\mathcal{I}_1}= {{\Theta }_{1}}\left( {{l}_{Y}},{{\chi }_{u}}+1,{{\eta }_{Y}};\frac{1}{\varepsilon {\lambda_{\textit{\text{R\textsuperscript{*}}D}}}},{{{\tilde{\eta }}}_{u}} \right) ,\enspace {\mathcal{I}_2}={{\Theta }_{2}}\left( {{l}_{Y}},{{q}_{Z}},{{\eta }_{Y}},{{\eta }_{Z}}{{\tilde{\gamma }}_{Z}};
			{{\gamma }_{S}}{{\tilde{\gamma }}_{Y}} \right) ,\\
			&{\mathcal{I}_3}={{\Theta }_{3}}\left( {{l}_{Y}},{{\chi }_{u}}+1,{{q}_{Z}},{{\eta }_{Y}},{{\eta }_{Z}}{{\tilde{\gamma }}_{Z}};
			\frac{1}{\varepsilon {\lambda_{\textit{\text{R\textsuperscript{*}}D}}}},{{\tilde{\eta }}_{u}},{{\gamma }_{S}}{{\tilde{\gamma }}_{Y}} \right) ,\\
		\end{aligned}
	\end{equation*}
in which defined integrals are expressed in \eqref{eq:Theta_1}, \eqref{eq:Theta_2}, \eqref{eq:Theta_3}, \eqref{eq:Phi}, and 
	${{\tilde{\gamma }}_{SE}}=\frac{{{\gamma }_{S}}{\lambda_{\textit{\text{SE}}}}}
	{{{\eta }_{u}}{{\gamma }_{S}}{\lambda_{\textit{\text{SE}}}}+{\lambda_{\textit{\text{SR\textsuperscript{*}}}}}}$, \enspace
	${{\tilde{\eta }}_{u}}=\frac{{{\eta }_{u}}{{\tilde{\gamma }}_{SE}}}
	{\varepsilon{\lambda_{\textit{\text{R\textsuperscript{*}}D}}}}$, \enspace
	${{\tilde{\gamma }}_{Y}}=\frac{1}{\varepsilon {\lambda_{\textit{\text{R\textsuperscript{*}}D}}}\left( {{\gamma }_{S}}-1 \right)}$, \enspace
	${{\tilde{\gamma }}_{Z}}=\frac{1}{\varepsilon {\lambda_{\textit{\text{R\textsuperscript{*}}E}}}\left( {{\gamma }_{S}}-1 \right)}$. 
%	{{\tilde{\eta }}_{Z}}={{\eta }_{Z}}{{\tilde{\gamma }}_{Z}}$.

\begin{proof}
	The proof is provided in Appendix B.
	%Using the CDF and PDF of all r.v. to rewrite \eqref{eq:SOP_SC1}, after some simplifications, we have \eqref{eq:SOP_SC_final} to complete the proof.
\end{proof}
\end{proposition}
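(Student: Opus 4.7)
The plan is to evaluate the three-fold expectation in \eqref{eq:SOP_SC1} sequentially by exploiting independence at each stage and reducing each remaining integral to tabulated primitives. Because $W$ and $Z$ are mutually independent and independent of $(X,Y)$, the inner event factorizes, so conditioning on $(X,Y)$ gives $F_W(\Upsilon_W(Y)X)\,F_Z(\Upsilon_Z(Y))$. I would substitute the Rayleigh CDF $F_W(w)=1-e^{-w}$ and the closed-form shadowed-Rician CDF for $Z$ stated in the channel-model subsection, then expand the product into four additive pieces: the constant $1$, a piece containing only $e^{-\Upsilon_W(Y)X}$, a piece containing only the negative-series part of $F_Z$, and a coupled cross-term containing both.

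Next I would marginalize $X$ using the PDF in Lemma~\ref{lemma:FX}. The pieces without $e^{-\Upsilon_W X}$ integrate to unity against $f_X$, while the two pieces containing $e^{-\Upsilon_W(y)x}$ are standard Gamma primitives when multiplied by $x^{l_X+\chi_u}e^{-(\eta_X+\eta_u)x}$, yielding factors of the form $\Gamma(\cdot)/(\eta_X+\eta_u+\Upsilon_W(y))^{\cdot}$ that absorb cleanly into the $\kappa_u$ bookkeeping. After this step each contribution becomes a purely $y$-dependent expression multiplied by $f_Y(y)$, and the problem reduces to three structurally distinct one-dimensional integrals over $y\in[0,\infty)$: a $W$-only integral, a $Z$-only integral, and a coupled integral with both $e^{-\Upsilon_W}$ and the $F_Z$ factors.

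The main obstacle is the outer $y$-integration. Because both $\Upsilon_W(y)$ and $\Upsilon_Z(y)$ defined beneath \eqref{eq:SOP_SC1} are rational functions of the form $ay/(by+c)$, once the inner integrations are carried out the $y$-integrand takes the composite shape $y^{l_Y}(by+c)^{-p}\exp\!\left(-\mu y-\rho y/(by+c)\right)$ for various parameters. I would attack each such integral by the substitution $t=by+c$, which turns the rational portion into powers of $t$ and converts the awkward exponential $e^{-\rho y/(by+c)}$ into a constant multiplied by $e^{\rho/(bt)}$; expanding the latter as a Taylor series reduces the integral to finite linear combinations of $\Phi_1$-type primitives (regular Gamma pieces) and $\Phi_2$-type primitives (those leaving an $\mathrm{Ei}$ tail), both of which are tabulated in \eqref{eq:Phi}. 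The auxiliary functions $\Theta_1,\Theta_2,\Theta_3$ in \eqref{eq:Theta_1}--\eqref{eq:Theta_3} are exactly the packaging of these three structurally distinct $y$-integrals, and substituting them back through the outer $X$- and $Y$-summations with the identifications $\tilde{\gamma}_{SE},\tilde{\eta}_u,\tilde{\gamma}_Y,\tilde{\gamma}_Z$ delivers \eqref{eq:SOP_SC_final}. The individual primitives are mechanical once $\Phi_1,\Phi_2$ are in hand; the bulk of the remaining difficulty is purely the bookkeeping required to track a large number of nested indices and to verify that the argument patterns emerging from the $\Upsilon_W,\Upsilon_Z$ rational forms match the signatures of $\Theta_1,\Theta_2,\Theta_3$.
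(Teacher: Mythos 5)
Your proposal follows essentially the same route as the paper's Appendix~B: factorize the joint event over the independent $W$ and $Z$, expand $F_W\left(\Upsilon_W X\right)F_Z\left(\Upsilon_Z\right)$ into the constant, $W$-only, $Z$-only, and coupled pieces, integrate out $X$ via Gamma primitives of the form $\Gamma(\cdot)\left(\eta_u+\Upsilon_W(y)\right)^{-(\chi_u+1)}$, and reduce the three surviving $y$-integrals to the tabulated $\Theta_1,\Theta_2,\Theta_3$ built from $\Phi_1,\Phi_2$ with the identifications $\tilde{\gamma}_{SE},\tilde{\eta}_u,\tilde{\gamma}_Y,\tilde{\gamma}_Z$. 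The one refinement worth noting is that for the coupled integral (the paper's $J_3$, packaged as $\Theta_3$) a single substitution $t=by+c$ cannot simultaneously rationalize the two distinct linear denominators arising from $\Upsilon_W$ and $\Upsilon_Z$; the paper instead Taylor-expands the exponential and applies a partial-fraction decomposition, which is precisely what generates the $\Phi_2$ (Ei-type) primitives you correctly anticipate.
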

In case of cooperative jamming, ${{\gamma }_{\textit{\text{R\textsuperscript{*}}E}}}$ in \eqref{eq:SOP_SC} is replaced by ${{\gamma }_{\textit{\text{R\textsuperscript{*}}E}}^{\text{J}}}$. Therefore, with friendly jamming, the SOP is written as 
\begin{equation}
	\begin{aligned}
		P_{\textit{\text{out}}}^{{\text{SC,J}}}\left( {{\gamma }_{S}} \right)=1-{{\mathbb{E}}_{X,{\mathcal{J}}}}\left\{ \Pr \left\{ 
		\begin{aligned}
			& W<X{{\Upsilon }_{W}}\left( Y \right), \enspace Y>{{\Upsilon }_{Y}}\left( Z,{\mathcal{J}} \right) ,\enspace Z<{{\Upsilon }_{Z}}\left({\mathcal{J}} \right) \\ 
		\end{aligned} \right\} \right\}	,	
	\end{aligned}		
	\label{eq:SOP_SCJ1}
\end{equation}
where 
\begin{equation*}
	\begin{aligned}
		& {{\Upsilon }_{Y}}\left( Z,{\mathcal{J}} \right)=\frac{{{\gamma }_{S}}{\lambda_{\textit{\text{R\textsuperscript{*}}E}}}}{\lambda_{\textit{\text{R\textsuperscript{*}}D}}}
		\frac{Z}
		{1+{P_J{\lambda_{\textit{\text{R\textsuperscript{*}}E}}}}{\mathcal{J}}-\varepsilon {\lambda_{\textit{\text{R\textsuperscript{*}}E}}}\left( {{\gamma }_{S}}-1 \right)Z} , 
		\enspace {{\Upsilon }_{Z}}\left({\mathcal{J}} \right)=\frac{1+{P_J}{\lambda_{\textit{\text{R\textsuperscript{*}}E}}}{\mathcal{J}}}
		{\varepsilon {\lambda_{\textit{\text{R\textsuperscript{*}}E}}}\left( {{\gamma }_{S}}-1 \right)} .\\ 
	\end{aligned}
\end{equation*}
\begin{figure*}[!tb]
	\begin{center}
		\begin{equation}
			\begin{aligned}
				P_{\textit{\text{out}}}^{{\text{SC,J}}}\left( {{\gamma }_{S}} \right)=1
				- & U			 \sum\limits_{{{l}_{X}}=0}^{{{m}_{S}}-1}{{}} \widetilde{\sum }_{u}^{U-1} \sum\limits_{{{l}_{Z}}=0}^{{{m}_{S}}-1}{{\zeta }_{X}}{{\zeta }_{Z}}{{\kappa }_{u}}\Gamma \left( {{\chi }_{u}}+1 \right)
				\widehat{\sum }_{k}^{U-1}{{\zeta }_{k}} \\ 
				& \times \left\{ 
				{ 	\sum\limits_{{{l}_{Y}}=0}^{{{m}_{S}}-1}	\sum\limits_{{{q}_{Y}}=0}^{{{l}_{Y}}}{{{\kappa }_{Y}}}}
					{{\left( {{\eta }_{u}} \right)}^{-\left( {{\chi }_{u}}+1 \right)}}
					{{\left( {{\gamma }_{S}}{{\tilde{\gamma }}_{Y}} \right)}^{{{q}_{Y}}}}			  {\mathcal{I}_4}  
				- \sum\limits_{{{l}_{Y}}=0}^{{{m}_{S}}-1}{{\zeta }_{Y}}{{\left( {{{\tilde{\gamma }}}_{S}} \right)}^{{{\chi }_{u}}+1}}					 {\mathcal{I}_5}  
				\right\} .\\ 				
			\end{aligned}		
			\label{eq:SOP_SCJ_final}	
		\end{equation}
	\end{center}
	\setlength{\arraycolsep}{1pt}
	\hrulefill \setlength{\arraycolsep}{0.0em}
\end{figure*}
%========================================
\begin{figure*}[!tb]
	\begin{center}
		\begin{equation}
			\begin{split}		
				&{\begin{aligned}	
						{{\Theta }_{4}}\left( u;v,\gamma ,\mu ,\rho ;\beta  \right)=
						& {{e}^{\rho -\mu \beta }}
						\sum\limits_{m=0}^{v+\gamma }{\left( \begin{aligned}
								& v+\gamma  \\ 
								& m \\ 
							\end{aligned} \right)}{{\left( -1 \right)}^{m}}{{\beta }^{v+\gamma -m}}\sum\limits_{n=0}^{\infty }{\frac{{{\left( \mu  \right)}^{n}}}{n!}\sum\limits_{p=0}^{\infty }{\frac{{{\left( -1 \right)}^{p}}{{\left( \rho \beta  \right)}^{p}}}{p!}}} \\ 
						& \times \left\{ \begin{aligned}
							& 1,\quad \gamma +m+n-p+1=0 \\ 
							& \frac{{{\beta }^{\gamma +m+n-p+1}}-{{\left( \beta -u \right)}^{\gamma +m+n-p+1}}}{\gamma +m+n-p+1} ,\enspace\gamma +m+n-p+1\ne 0 .\\ 
						\end{aligned} \right. \\ 
				\end{aligned}} \\
			\end{split}	
			\label{eq:Theta_4}
		\end{equation}
		\setlength{\arraycolsep}{1pt}
		\hrulefill \setlength{\arraycolsep}{0.0em}
		\begin{equation}
			\begin{split}		
				&{{{\Theta }_{5}}\left( u;v,\gamma ,\mu ;\alpha ,\beta  \right)={{e}^{\mu \beta }}\sum\limits_{m=0}^{v}{\left( \begin{aligned}
							& v \\ 
							& m \\ 
						\end{aligned} \right)}{{\left( -\beta  \right)}^{v-m}}\sum\limits_{n=0}^{\gamma }{\left( \begin{aligned}
							& \gamma  \\ 
							& n \\ 
						\end{aligned} \right)}{{\left( \alpha -\beta  \right)}^{\gamma -n}}{{\Phi }_{1}}\left( u+\beta ;m+n-\gamma ,\mu  \right)
				} .\\
			\end{split}	
			\label{eq:Theta_5}
		\end{equation}
		\setlength{\arraycolsep}{1pt}
		\hrulefill \setlength{\arraycolsep}{0.0em}
	\end{center}
\end{figure*}
%-------------------------------------
\begin{proposition}
	With friendly jamming, the asymptotic SOP with respect to SC scheme at $E$ is presented in \eqref{eq:SOP_SCJ_final}, where 
%	\begin{equation*}
%	\begin{aligned}
%			{\mathcal{I}_4}= 
%			&\int_{0}^{\infty }{{{t}^{\chi _{k}^{M-1}-1}}{{e}^{-{{\eta }_{Z}}t}}} 
%			\times {{\Theta }_{4}}\left( {{\Upsilon }_{Z}}\left( t \right);{{l}_{Z}},{{q}_{Y}},{{\eta }_{Z}},{{\gamma }_{S}}{{\eta }_{Y}}{{{\tilde{\gamma }}}_{Y}};{{\Upsilon }_{Z}}\left( t \right) \right)dt ,
%		\end{aligned}
%	\end{equation*}%
%-------------------------------------
	\begin{equation*}	
		\begin{aligned}
		& {\mathcal{I}_4}= 
			\int_{0}^{\infty }{{{t}^{\chi _{k}^{M-1}-1}}{{e}^{-{{\eta }_{Z}}t}}} 
			{{\Theta }_{4}}\left( {{\Upsilon }_{Z}}\left( t \right);{{l}_{Z}},{{q}_{Y}},{{\eta }_{Z}},{{\gamma }_{S}}{{\eta }_{Y}}{{{\tilde{\gamma }}}_{Y}};{{\Upsilon }_{Z}}\left( t \right) \right)dt , \\
		& {\mathcal{I}_5}=  \int_{0}^{\infty }{{{t}^{\chi _{k}^{M-1}-1}}{{e}^{-{{\eta }_{Z}}t}}}
			\int_{0}^{{{\Upsilon }_{Z}}\left( t \right)}{{{z}^{{{l}_{Z}}}}{{e}^{-{{\eta }_{Z}}z}}} 
			{{\Theta }_{5}}\left( {{\Upsilon }_{Y}}\left( z,t \right);{{l}_{Y}},{{\chi }_{u}}+1,{{\eta }_{Y}};\frac{1}{\varepsilon {\lambda_{\textit{\text{R\textsuperscript{*}}D}}}},{{{\tilde{\eta }}}_{u}} \right)dzdt ,
		\end{aligned}
	\end{equation*}%
	in which defined integrals are expressed in \eqref{eq:Theta_4} and 
	\eqref{eq:Theta_5}.% and ${{\tilde{\eta }}_{Y}}=\frac{{{\eta }_{Y}}{{\gamma }_{S}}}{\varepsilon {\lambda_{\textit{\text{R\textsuperscript{*}}D}}}\left( {{\gamma }_{S}}-1 \right)}$.
\end{proposition}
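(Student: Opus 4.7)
The plan is to extend the technique used for Proposition 1 by accommodating the additional randomness introduced by the jamming term $\mathcal{J}$. Starting from \eqref{eq:SOP_SCJ1}, I first observe that $W$ is Rayleigh and independent of the remaining variables, so the innermost probability $\Pr\{W<X\Upsilon_W(Y)\}$ evaluates to $1-e^{-X\Upsilon_W(Y)}$ via the Rayleigh CDF. This splits the SOP into two pieces: a ``constant'' piece corresponding to the $1$ and an ``exponential'' piece carrying $e^{-X\Upsilon_W(Y)}$, each to be integrated against the joint constraints $Y>\Upsilon_Y(Z,\mathcal{J})$ and $Z<\Upsilon_Z(\mathcal{J})$ and against the densities $f_X$, $f_Y$, $f_Z$ in \eqref{eq:PDFX} and the PDF of $\mathcal{J}$ from Lemma~2.

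For the constant piece I would integrate in the order $Y$, then $X$, then $Z$, then $\mathcal{J}$. The tail probability $1-F_Y(\Upsilon_Y(Z,\mathcal{J}))$ produces a finite sum of terms $\Upsilon_Y^{q_Y}e^{-\eta_Y\Upsilon_Y}$ which, after substituting the rational form of $\Upsilon_Y$, becomes a function of $Z$ and $\mathcal{J}$. Integrating this against $f_X$ in \eqref{eq:PDFX} contributes the factor $\eta_u^{-(\chi_u+1)}\Gamma(\chi_u+1)$ visible in the first line of \eqref{eq:SOP_SCJ_final}. The remaining inner integral over $Z$ from $0$ to $\Upsilon_Z(\mathcal{J})$, with integrand a product of powers of $Z$, the rational $\Upsilon_Y(Z,\mathcal{J})^{q_Y}$, and $e^{-\eta_Y\Upsilon_Y}e^{-\eta_Z Z}$, is exactly the template that defines $\Theta_4$ in \eqref{eq:Theta_4} once one applies a binomial expansion of $(\beta-u)^{\gamma+m+n-p+1}$ and a power-series expansion of the exponential in $\mathcal{J}$. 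Keeping the outer $\mathcal{J}$-integration explicit then yields $\mathcal{I}_4$.

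For the exponential piece I would integrate in the order $X$, then $Y$, then $Z$, then $\mathcal{J}$. The $X$-integration pairs $e^{-X\Upsilon_W(Y)}$ with $f_X$ in \eqref{eq:PDFX} and produces a Gamma-type closed form that is a rational function of $\Upsilon_W(Y)$; this step is the direct analogue of the one that generated $\mathcal{I}_3$ in Proposition 1 and explains the prefactor $(\tilde{\gamma}_S)^{\chi_u+1}$. The subsequent $Y$-integration from $\Upsilon_Y(Z,\mathcal{J})$ to infinity, whose integrand is a product of powers and exponentials in $Y$ after substituting the rational $\Upsilon_W(Y)$, fits the template of $\Theta_5$ in \eqref{eq:Theta_5} once binomial expansion is used to shift the lower limit into explicit polynomial terms combined with $\Phi_1$. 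The remaining integrations over $Z$ (on $[0,\Upsilon_Z(\mathcal{J})]$) and $\mathcal{J}$ are then retained as the outer double integral in $\mathcal{I}_5$. Substituting the series form of $f_\mathcal{J}$ from Lemma~2 brings in the sum $\widehat{\sum}_k^{U-1}\zeta_k$ that multiplies both $\mathcal{I}_4$ and $\mathcal{I}_5$, while the multinomial sum $\widetilde{\sum}_u^{U-1}$ inherited from $f_X$ in Lemma~\ref{lemma:FX} is kept intact.

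The main obstacle is that $\Upsilon_Y$ and $\Upsilon_Z$ are rational functions of both $Z$ and $\mathcal{J}$, which prevents the outer integrations from separating and forces them to remain as the defining integrals $\mathcal{I}_4$ and $\mathcal{I}_5$. Performing the inner $X$- and $Y$-integrations cleanly also requires careful bookkeeping of binomial expansions, Pochhammer and factorial factors, and the case split in $\Phi_1$ for positive versus negative second argument; the auxiliary kernels $\Phi_1$, $\Theta_4$, and $\Theta_5$ are introduced precisely to package these repetitive expansions. Once all substitutions are assembled, the resulting expression collapses term by term to \eqref{eq:SOP_SCJ_final}.
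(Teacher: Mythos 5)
Your proposal is correct and follows essentially the same route as the paper's Appendix~C: splitting the $W$-probability via the Rayleigh CDF into the ``$1$'' and ``$e^{-X\Upsilon_W(Y)}$'' pieces (which become the $\eta_u^{-(\chi_u+1)}\mathcal{I}_4$ and $(\tilde{\gamma}_S)^{\chi_u+1}\mathcal{I}_5$ terms respectively), performing the inner $X$- and $Y$-integrations in closed form to land on the $\Theta_4$ and $\Theta_5$ templates, and leaving the non-separable $Z$- and $\mathcal{J}$-integrations explicit inside $\mathcal{I}_4$ and $\mathcal{I}_5$. Your identification of why the outer integrals cannot be collapsed (the joint rational dependence of $\Upsilon_Y$ and $\Upsilon_Z$ on $Z$ and $\mathcal{J}$) matches the structure of \eqref{eq:SOP_SCJ3} exactly.
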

\begin{proof}
The proof is provided in Appendix C.
%Using the CDF and PDF of all r.v. to rewrite \eqref{eq:SOP_SCJ1}, after some simplification, we have \eqref{eq:SOP_SCJ_final} to complete the proof.
\end{proof}
\subsection{Maximum-Ratio Combining}
For the case of $E$ using the MRC scheme to increase intercepting level, the asymptotic SOP is 
\begin{equation}
	\begin{aligned}
		P_{\textit{\text{out}}}^{{\text{MRC}}}\left( {{\gamma }_\textit{\text{S}}} \right)&=\Pr \left\{\frac{{{\gamma }_\textit{\text{D}}}}{{{\gamma }_{\text{S}}}} < {{\gamma }_{\textit{\text{SE}}}}+{{\gamma }_{\textit{\text{R\textsuperscript{*}E}}}} \right\} .\\ 
%		& =\Pr \left( \frac{{{\gamma }_{D}}}{{{\gamma }_{S}}}-{{\gamma }_{R^*E}}<{{\gamma }_{SE}} \right) .\\
	\end{aligned}
	\label{eq:SOP_MRC}
\end{equation}
Using the formulas of SNRs in \eqref{eq:snrDfinal}, \eqref{eq:snrSE}, and \eqref{eq:snrREfinal}, we get
\begin{equation}
	\begin{aligned}
		P_{\textit{\text{out}}}^{{\text{MRC}}}\left( {{\gamma }_{S}} \right)
		&={{\mathbb{E}}_{X,Y}}\left\{
		\Pr \left\{ Z>{{\Upsilon_Z}(Y)} \right\} +\Pr \left\{ 
			 Z<{{\Upsilon_Z}(Y)}  ,\enspace W>X{{\Upsilon}_{W}(Y,Z)} \right\} 
		\right\} ,\\ 					 			
	\end{aligned}
	\label{eq:SOP_MRC1}
\end{equation}
where 
\begin{equation*}
{{\Upsilon}_{W}(Y,Z)}=\frac{{\varepsilon {\lambda_{\textit{\text{SR\textsuperscript{*}}}}}}{\lambda_{\textit{\text{R\textsuperscript{*}}D}}}Y}
{{{\gamma }_{S}}{\lambda_{\textit{\text{SE}}}}\left( \varepsilon {\lambda_{\textit{\text{R\textsuperscript{*}}D}}}Y+1 \right)}
-\frac{{\varepsilon {\lambda_{\textit{\text{SR\textsuperscript{*}}}}}}{\lambda_{\textit{\text{R\textsuperscript{*}}E}}}Z}
{{\lambda_{\textit{\text{SE}}}}\left(\varepsilon {\lambda_{\textit{\text{R\textsuperscript{*}}E}}}Z+1\right)}.
\end{equation*}
\begin{figure*}[!tb]
	\begin{center}
		\begin{equation}
			\begin{aligned}
				P_{\textit{\text{out}}}^{{\text{MRC}}}\left( {{\gamma }_{S}} \right)			    = 
					&\sum\limits_{{{l}_{Y}}=0}^{{{m}_{S}}-1}{{}}{{\zeta }_{Y}} \sum\limits_{{{l}_{Z}}=0}^{{{m}_{S}}-1}{\sum\limits_{{{q}_{Z}}=0}^{{{l}_{Z}}}{{\kappa }_{Z}}}
					{{\left( {{\tilde{\gamma }}_{Z}} \right)}^{{{q}_{Z}}}} 					 {\mathcal{I}_6} \\
					& + U \sum\limits_{{{l}_{X}}=0}^{{{m}_{S}}-1}{\widetilde{\sum }_{u}^{U-1}} \sum\limits_{{{l}_{Y}}=0}^{{{m}_{S}}-1}{{}} \sum\limits_{{{l}_{Z}}=0}^{{{m}_{S}}-1}
					{{{\zeta }_{X}}{{\zeta }_{Y}} {{\zeta }_{Z}}}{{\kappa }_{u}}\Gamma \left( {{\chi }_{u}}+1 \right)  
					{{\left( \tilde{\lambda}_S \right)}^{{{\chi }_{u}}+1}}		 
					 {\mathcal{I}_7} .\\ 	
			\end{aligned} 
			\label{eq:SOP_MRC_final}
		\end{equation}	
	\end{center}
	\setlength{\arraycolsep}{1pt}
	\hrulefill \setlength{\arraycolsep}{0.0em}
\end{figure*}
%==========================================
%===========Defined integrals==============
\begin{figure*}[!tb]
	\begin{center}
		\begin{equation}
			\begin{split}		
				&{\begin{aligned}
						{{\Theta }_{6}}\left( u;v,\gamma ,\mu ;\alpha ,\beta  \right)={{e}^{\mu \beta }}
						&\sum\limits_{m=0}^{v}{\left( \begin{aligned}
								& v \\ 
								& m \\ 
							\end{aligned} \right)}{{\left( -\beta  \right)}^{v-m}}\sum\limits_{n=0}^{\gamma }{\left( \begin{aligned}
								& \gamma  \\ 
								& n \\ 
							\end{aligned} \right)}{{\left( \alpha -\beta  \right)}^{\gamma -n}}\\
						& \quad\quad \times \left\{ {{\Phi }_{1}}\left( \beta ;m+n-\gamma ,\mu  \right)-{{\Phi }_{1}}\left( u+\beta ;m+n-\gamma ,\mu  \right) \right\} .\\ 
				\end{aligned}} \\		
			\end{split}	
			\label{eq:Theta_6}
		\end{equation}
	\end{center}
	\setlength{\arraycolsep}{1pt}
	\hrulefill \setlength{\arraycolsep}{0.0em}
\end{figure*}
%---------------------------------------
%
\begin{proposition}
Without friendly jamming, the asymptotic SOP with respect to MRC scheme at $E$ is presented in \eqref{eq:SOP_MRC_final}, where 
\begin{equation*}
	\begin{aligned}
		&{\mathcal{I}_6}={{\Theta }_{2}}\left( {{l}_{Y}},{{q}_{Z}},{{\eta }_{Y}},{{{\tilde{\eta }}}_{Z}};{{\gamma }_{S}}{{\tilde{\gamma }}_{Y}} \right) ,\\
		&\begin{aligned}
			&{\mathcal{I}_7}= \int_{0}^{\infty }{{{y}^{{{l}_{Y}}}}{{e}^{-{{\eta }_{Y}}y}}} 
			{{\Theta }_{6}}\left( {{\Upsilon }_{Y}}\left( y \right);{{l}_{Z}},{{\chi }_{u}}+1,{{\eta }_{Z}};\frac{1}{\varepsilon {{\lambda }_{\textit{\text{R\textsuperscript{*}}E}}}},
			\frac{{{\tilde{\lambda }}}_{Y} \tilde{\lambda}_S}{\varepsilon {{\lambda }_{\textit{\text{R\textsuperscript{*}}E}}}} \right)dy  \\
		\end{aligned} ,\\
	\end{aligned}
\end{equation*}	
in which defined integrals are expressed in \eqref{eq:Theta_2}, \eqref{eq:Theta_6} and 
${{\tilde{\lambda }}_{Y}}={{\eta }_{u}}
	+\frac{\varepsilon {\lambda_{\textit{\text{SR\textsuperscript{*}}}}}{\lambda_{\textit{\text{R\textsuperscript{*}}D}}}y}
	{{{\gamma }_{S}}{\lambda_{\textit{\text{SE}}}}\left( \varepsilon {\lambda_{\textit{\text{R\textsuperscript{*}}D}}}y+1 \right)}$,
	$ \tilde{\lambda}_S=\frac{{{\lambda }_{\textit{\text{SE}}}}}{{{{\tilde{\lambda }}}_{Y}}{{\lambda }_{\textit{\text{SE}}}}-{{\lambda }_{\textit{\text{SR\textsuperscript{*}}}}}} $.

\begin{proof}
	The proof is provided in Appendix D.
	%Using the CDF and PDF of all r.v. to rewrite \eqref{eq:SOP_MRC1}, after some simplifications, we have \eqref{eq:SOP_MRC_final} to complete the proof.
\end{proof}	
\end{proposition}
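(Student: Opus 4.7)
The plan is to start from the decomposition in \eqref{eq:SOP_MRC1}, which expresses the SOP as the sum of two terms obtained by splitting on the sign of $\gamma_D/\gamma_S-\gamma_{R^*E}$: a pure-eavesdropper term $T_1=\mathbb{E}_Y\{\Pr\{Z>\Upsilon_Z(Y)\}\}$ (using that $Z$ is independent of $X$) and a coupled term $T_2=\mathbb{E}_{X,Y,Z}\{\mathbf{1}_{\{Z<\Upsilon_Z(Y)\}}\Pr\{W>X\Upsilon_W(Y,Z)\mid X,Y,Z\}\}$. These will produce, respectively, the first and second summations appearing in \eqref{eq:SOP_MRC_final}.

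For $T_1$, I would use the complementary CDF from (2), namely $1-F_Z(x)=\sum_{l_Z}\sum_{q_Z}\kappa_Z x^{q_Z}e^{-\eta_Z x}$, and insert $\Upsilon_Z(y)=\tilde{\gamma}_Z\,y/(y+\gamma_S\tilde{\gamma}_Y)$ after regrouping the denominator of $\Upsilon_Z$. The factor $\Upsilon_Z(y)^{q_Z}$ then splits cleanly into $(\tilde{\gamma}_Z)^{q_Z}$ times a rational function of $y$. Averaging against $f_Y(y)$ and absorbing $\exp(-\eta_Z\Upsilon_Z(y))$ via its Taylor expansion reduces the remaining $y$-integral to exactly $\Theta_2(l_Y,q_Z,\eta_Y,\tilde{\eta}_Z;\gamma_S\tilde{\gamma}_Y)$, i.e.\ $\mathcal{I}_6$.

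For $T_2$ I would condition on $(X,Y,Z)$ and use the unit-mean exponentiality of $W$ in (3) to obtain $\Pr\{W>X\Upsilon_W(Y,Z)\mid X,Y,Z\}=\exp(-X\Upsilon_W(Y,Z))$, noting that $\Upsilon_W(Y,Z)>0$ is guaranteed by the constraint $Z<\Upsilon_Z(Y)$. Averaging over $X$ with $f_X$ from Lemma~\ref{lemma:FX} yields a Gamma-type integral whose rate combines the exponential rate of $f_X$ with the $Y$-dependent piece of $\Upsilon_W(Y,Z)$; regrouping this rate exposes $\tilde{\lambda}_Y$, and the subsequent algebraic simplification via $\lambda_{SE}/(\tilde{\lambda}_Y\lambda_{SE}-\lambda_{SR^*})$ produces the prefactor $\Gamma(\chi_u+1)(\tilde{\lambda}_S)^{\chi_u+1}$. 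What remains is a $y$-integral against $f_Y$ nested around a $z$-integral on $(0,\Upsilon_Z(y))$ of $f_Z(z)$ times a rational--exponential function of $z$; the inner integral is recognized as the $\Theta_6$ primitive \eqref{eq:Theta_6} (whose two-term structure mirrors the two endpoints of the $z$-range), and the outer $y$-integration produces $\mathcal{I}_7$.

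The main obstacle will be the coupled $z$-integration in $T_2$: unlike the SC case, the $X$-averaged exponential carries a rate that depends on $Z$ through the rational term $\varepsilon\lambda_{SR^*}\lambda_{R^*E}Z/[\lambda_{SE}(\varepsilon\lambda_{R^*E}Z+1)]$, so the antiderivative in $z$ does not factor as cleanly as in $\Theta_1$--$\Theta_3$. Handling this will require a partial-fraction-style expansion in $z$, careful tracking of the finite upper limit $\Upsilon_Z(y)$ (which is precisely why $\Theta_6$ is a difference of two $\Phi_1$-evaluations), and verifying the algebraic cancellations that collapse the combined rate into $\tilde{\lambda}_S$ in closed form. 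This bookkeeping is the only non-routine step; everything else reduces to standard manipulations of shadowed-Rician CDFs/PDFs and to reapplying the $\Theta$/$\Phi$ lemmas already established in the SC analysis.
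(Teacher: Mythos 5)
Your proposal follows the paper's proof of this proposition essentially step for step: the same two-term split of \eqref{eq:SOP_MRC1} on the event $\{Z\lessgtr\Upsilon_Z(Y)\}$, the same reduction of the first term to $\Theta_2$ via the complementary CDF of $Z$ and a Taylor expansion of $e^{-\eta_Z\Upsilon_Z(y)}$, and the same treatment of the second term (exponential tail of $W$, Gamma-type integration over $X$ producing $\Gamma(\chi_u+1)(\tilde{\lambda}_S)^{\chi_u+1}$, inner $z$-integral recognized as $\Theta_6$ with its difference of two $\Phi_1$ evaluations, outer $y$-integral left as $\mathcal{I}_7$). The only cosmetic difference is that the paper evaluates $\Theta_6$ by a change of variable and binomial/power-series expansion rather than the partial-fraction route you anticipate, but this does not change the argument.
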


In case of cooperative jamming, the SOP is 
		\begin{equation}
			\begin{aligned}
			P_{\textit{\text{out}}}^{{\text{MRC,J}}}\left( {{\gamma }_{S}} \right)&={{\mathbb{E}}_{X,{\mathcal{J}}}}\left\{
			\begin{aligned}
				&\Pr \left\{ Z > {\Upsilon_Z({\mathcal{J}})} \right\} \\ & +\Pr \left\{ \begin{aligned}
				 	& Z<{\Upsilon_Z({\mathcal{J}})}, \\ 
				 	& Y<{\Upsilon_Y(Z,{\mathcal{J}})} \\ 
				 \end{aligned} \right\} 
				+\Pr \left\{ \begin{aligned}
				 	& Z<{\Upsilon_Z({\mathcal{J}})}, \\ 
				 	& Y>{\Upsilon_Y(Z,{\mathcal{J}})}, \\ 
				 	& \text{W}>{\Upsilon_W(Y,Z,{\mathcal{J}})}X \\ 
				 \end{aligned} \right\} 
			\end{aligned} \right\} ,\\ 			 				
			\end{aligned} 
			\label{eq:SOP_MRCJ2}
		\end{equation}
%
%===============================================SOP MRCJ rewrite=======================================================
\begin{figure*}[!tb]
	\begin{center}
		\begin{equation}
			\begin{aligned}
				&P_{\textit{\text{out}}}^{{\text{MRC,J}}}
				\left( {{\gamma }_{S}} \right)=    		\sum\limits_{{{l}_{Z}}=0}^{{{m}_{S}}-1}{{}} \widehat{\sum }_{k}^{U-1}{{\zeta }_{Z}}{{\zeta }_{k}} 
				\Biggl\{ \Gamma \left( {{l}_{Z}}+1 \right)\Gamma \left( \chi _{k}^{U-1} \right){{\eta }_{Z}}^{-\left( {{l}_{Z}}+1+\chi _{k}^{U-1} \right)} \\ 
				& \quad\quad - \sum\limits_{{{l}_{Y}}=0}^{{{m}_{S}}-1} {\sum\limits_{{{q}_{Y}}=0}^{{{l}_{Y}}}{{{\kappa }_{Y}}}}{{\left( {{{\gamma }_{S}}}{\tilde{\gamma}_Y} \right)}^{{{q}_{Y}}}}
					{\mathcal{I}_8} 
					+ U \sum\limits_{{{l}_{X}}=0}^{{{m}_{S}}-1} \widetilde{\sum}_{u}^{U-1}\sum\limits_{{{l}_{Y}}=0}^{{{m}_{S}}-1}
					{{\zeta }_{X}}{{\zeta }_{Y}}{{\kappa }_{u}}
					\Gamma \left( {{\chi }_{u}}+1 \right){{\left( {\tilde{\gamma_V}} \right)}^{{{\chi }_{u}}+1}} {\mathcal{I}_9} \Biggl\} .\\
			\end{aligned} 
			\label{eq:SOP_MRCJ_final}
		\end{equation}		
	\end{center}
	\setlength{\arraycolsep}{1pt}
	\hrulefill \setlength{\arraycolsep}{0.0em}
\end{figure*}
%========================================

where
\begin{equation*}
	\begin{aligned}		 
	& {\Upsilon_Z({\mathcal{J}})}=\frac{1+{P_J}{\lambda_{\textit{\text{R\textsuperscript{*}}E}}}{\mathcal{J}}}
	{\varepsilon {\lambda_{\textit{\text{R\textsuperscript{*}}E}}}\left( {{\gamma }_{S}}-1 \right)} , \enspace {\Upsilon_Y(Z,{\mathcal{J}})}=\frac{{{\gamma }_{S}}{{V}(Z,{\mathcal{J}})}}
	{{\lambda_{\textit{\text{R\textsuperscript{*}}D}}}\left( 1-\varepsilon {{\gamma }_{S}}{{V}(Z,{\mathcal{J}})} \right)} ,\\
%	\end{aligned}
%\end{equation*}
%\begin{equation*}
%	\begin{aligned}		 
		&{\Upsilon_W(Y,Z,{\mathcal{J}})}=\varepsilon {\lambda_{\textit{\text{SR\textsuperscript{*}}}}}\left( \frac{{\lambda_{\textit{\text{R\textsuperscript{*}}D}}}Y}
		{{{\gamma }_{S}}\left(1+\varepsilon {\lambda_{\textit{\text{R\textsuperscript{*}}D}}}Y \right)}-{{V}(Z,{\mathcal{J}})} \right) ,\\ 
		&{{V}(Z,{\mathcal{J}})}=\frac{{\lambda_{\textit{\text{R\textsuperscript{*}}E}}}Z}
		{\varepsilon {\lambda_{\textit{\text{R\textsuperscript{*}}E}}}Z+1+{P_J}{\lambda_{\textit{\text{R\textsuperscript{*}}E}}}{{\mathcal{J}}}} .\\ 				
	\end{aligned}
\end{equation*}
\begin{proposition}
With friendly jamming, the asymptotic SOP with respect to MRC scheme at $E$ is presented in \eqref{eq:SOP_MRCJ_final}, where 
\begin{equation*}
	\begin{aligned}
		&\begin{aligned}
			{\mathcal{I}_8}= &\int_{0}^{\infty }{{{t}^{\chi _{k}^{U-1}-1}}{{e}^{-{{\eta }_{Z}}t}}} 
			{{\Theta }_{4}}\left( {{\Upsilon }_{Z}}\left( t \right);{{l}_{Z}},{{q}_{Y}},{{\eta }_{Z}},{{{\tilde{\eta }}}_{Y}};{{\Upsilon }_{Z}}\left( t \right) \right)dt \\
		\end{aligned} ,\\
		&\begin{aligned}
			{\mathcal{I}_9}= 
			& \int_{0}^{\infty }{{{t}^{\chi _{k}^{U-1}-1}}{{e}^{-{{\eta }_{Z}}t}}}\int_{0}^{{{\Upsilon }_{Z}}\left( t \right)}{{{z}^{{{l}_{Z}}}}{{e}^{-{{\eta }_{Z}}z}}} {{\Theta }_{5}} \left( {{\Upsilon }_{Y}}\left( z,t \right);{{l}_{Y}},{{\chi }_{u}}+1,{{\eta }_{Y}};\frac{1}{\varepsilon {\lambda_{\textit{\text{R\textsuperscript{*}}D}}}},\frac{{\tilde{\gamma_V}}{{\lambda }_{\textit{\text{SE}}}}\tilde{V}\left( z,t \right)}{\varepsilon {{\lambda }_{\textit{\text{R\textsuperscript{*}}D}}}} \right) dzdt ,\\
		\end{aligned} \\
	\end{aligned}
\end{equation*}	
in which defined integrals are expressed in \eqref{eq:Theta_4}, \eqref{eq:Theta_5}
and $\tilde{V}\left( z,t \right)={{\eta }_{u}}-\varepsilon\frac{{{\lambda }_{\textit{\text{SR\textsuperscript{*}}}}}}{\lambda_{\textit{\text{SE}}}}V\left( z,t \right)$,
${\tilde{\gamma_V}}=\frac {{{\gamma }_{S}}}{{{\lambda }_{\textit{\text{SR\textsuperscript{*}}}}}+{{\gamma }_{S}}{{\lambda }_{\textit{\text{SE}}}}\tilde{V}\left( z,t \right)}$.
\end{proposition}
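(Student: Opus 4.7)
The plan is to decompose the SOP in \eqref{eq:SOP_MRCJ2} into its three probability components and evaluate each one by nested conditional integration, mirroring the strategy used for Propositions~1 and 3 but now with the additional jamming random variable $\mathcal{J}$. All three components are ultimately functions of $(Z,\mathcal{J})$ after the inner variables $W$, $Y$, and $X$ have been integrated out using the CDF/PDF expressions in Lemmas~1 and 2, and the Rayleigh CDF of $W$.

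First, I would handle $\Pr\{Z>\Upsilon_Z(\mathcal{J})\}$ by integrating $f_Z(z)$ on $[\Upsilon_Z(t),\infty)$, where $\Upsilon_Z(t)$ is affine in $t=\mathcal{J}$, and then averaging against $f_{\mathcal{J}}$ from Lemma~2. Because $\Upsilon_Z(t)$ is linear in $t$, the resulting double integral reduces to a product of standard gamma integrals, producing the leading $\Gamma(l_Z+1)\Gamma(\chi_k^{U-1}){\eta_Z}^{-(l_Z+1+\chi_k^{U-1})}$ term in \eqref{eq:SOP_MRCJ_final}.

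Second, for $\Pr\{Z<\Upsilon_Z(\mathcal{J}),\,Y<\Upsilon_Y(Z,\mathcal{J})\}$, I would substitute the CDF $F_Y$ evaluated at $\Upsilon_Y(Z,\mathcal{J})$ into the expectation, then integrate $f_Z(z)$ on $[0,\Upsilon_Z(t)]$ and finally over $t$. The nonlinear rational dependence of $\Upsilon_Y$ on $z$ (through $V(z,t)$) is precisely the shape handled by $\Theta_4$ in \eqref{eq:Theta_4}, so after expanding the polynomial factors $z^{q_Y}$ and the exponential $e^{-\eta_Y\,\Upsilon_Y(z,t)}$ via a binomial/Taylor expansion, the remaining $(z,t)$-integral packages into $\mathcal{I}_8$ as claimed.

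The third probability $\Pr\{Z<\Upsilon_Z,\,Y>\Upsilon_Y,\,W>\Upsilon_W X\}$ is the main obstacle. Here I would first integrate $W$ on $[\Upsilon_W(Y,Z,\mathcal{J})X,\infty)$, which is immediate because $W$ is Rayleigh, giving $e^{-\Upsilon_W X}$. Integrating against $f_X$ from Lemma~1 yields a closed form in $(Y,Z,\mathcal{J})$ through a standard gamma identity, producing the factor $(\tilde{\gamma_V})^{\chi_u+1}$ and the coefficient $\tilde V(z,t)$ that absorb the $\lambda_{SE}$-dependence and the $X$-exponential decay. Next, the $Y$-integration on $[\Upsilon_Y(Z,\mathcal{J}),\infty)$ against $f_Y$ is exactly the kernel encoded by $\Theta_5$ in \eqref{eq:Theta_5}, after binomial expansion of $(1+\varepsilon\lambda_{R^*D}Y)^{-(\chi_u+1)}$. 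What remains is an outer double integral over $(z,t)\in[0,\Upsilon_Z(t)]\times[0,\infty)$, which is precisely $\mathcal{I}_9$. The delicate part is bookkeeping: verifying that the exponents accumulated from the $W$- and $Y$-integrations combine into the arguments of $\Theta_5$ consistent with $\tilde V(z,t)$ and $\tilde{\gamma_V}$, and that all Rayleigh/shadowed-Rician parameters are routed through $\Phi_1$ and $\Phi_2$ so that every integral reduces to the closed forms tabulated in \eqref{eq:Phi}. Once this is confirmed, collecting the three contributions yields \eqref{eq:SOP_MRCJ_final}.
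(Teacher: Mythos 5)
Your overall decomposition into the three probabilities of \eqref{eq:SOP_MRCJ2} and your order of integration ($W$ first, then $X$, then $Y$, then the outer $(z,t)$ pair against $f_{\mathcal{J}}$) is exactly the route the paper takes in Appendix~E. Your handling of the third probability --- Rayleigh tail $e^{-\Upsilon_W X}$, gamma identity over $f_X$ producing $(\tilde{\gamma_V})^{\chi_u+1}$ and $\tilde{V}(z,t)$, then the $\Theta_5$ kernel for the $Y$-integral and the residual double integral $\mathcal{I}_9$ --- matches the paper's $J_{11}$, and your identification of the $\Theta_4$ shape behind $\mathcal{I}_8$ is also correct.

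The genuine gap is in your first two steps. The integral $\int_{\Upsilon_Z(t)}^{\infty} z^{l_Z}e^{-\eta_Z z}\,dz$ has a strictly positive, $t$-dependent lower limit, so $\Pr\{Z>\Upsilon_Z(\mathcal{J})\}$ does \emph{not} reduce to a product of complete gamma integrals; it yields upper-incomplete-gamma terms which, after averaging over $f_{\mathcal{J}}$ with the affine $\Upsilon_Z(t)$, are of the $\Theta_7$ type (the paper's $J_8$ in \eqref{eq:J9}). The clean term $\Gamma(l_Z+1)\Gamma(\chi_k^{U-1})\eta_Z^{-(l_Z+1+\chi_k^{U-1})}$ in \eqref{eq:SOP_MRCJ_final} is not that probability: it is the constant $1$ written in expanded form, and it appears only because the second probability also contributes $\mathbb{E}_{\mathcal{J}}\{F_Z(\Upsilon_Z(\mathcal{J}))\}$ (the paper's $J_9$, see \eqref{eq:J10}), whose own $\Theta_7$ correction cancels exactly against $J_8$. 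Your sketch of the second probability retains only the part that becomes $-\sum\kappa_Y(\cdot)\,\mathcal{I}_8$ and silently drops the $\mathbb{E}_{\mathcal{J}}\{F_Z(\Upsilon_Z(\mathcal{J}))\}$ piece, so the two omissions do not self-correct: carried out literally, your plan leaves either a spurious incomplete-gamma term or a first term that overcounts by $\Pr\{Z<\Upsilon_Z(\mathcal{J})\}$. Restoring the $F_Z$ contribution and noting the $J_8$/$J_9$ cancellation closes the gap; the rest of your argument then coincides with the paper's proof.
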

\begin{proof}
	The proof is provided in Appendix E.
	%Using the CDF and PDF of all r.v. to rewrite \eqref{eq:SOP_MRCJ2}, after some simplifications, we have \eqref{eq:SOP_MRCJ_final} to complete the proof.
\end{proof}
%
%============================================================
\begin{figure}[t!]
	\centerline{\includegraphics[width=0.75\linewidth]{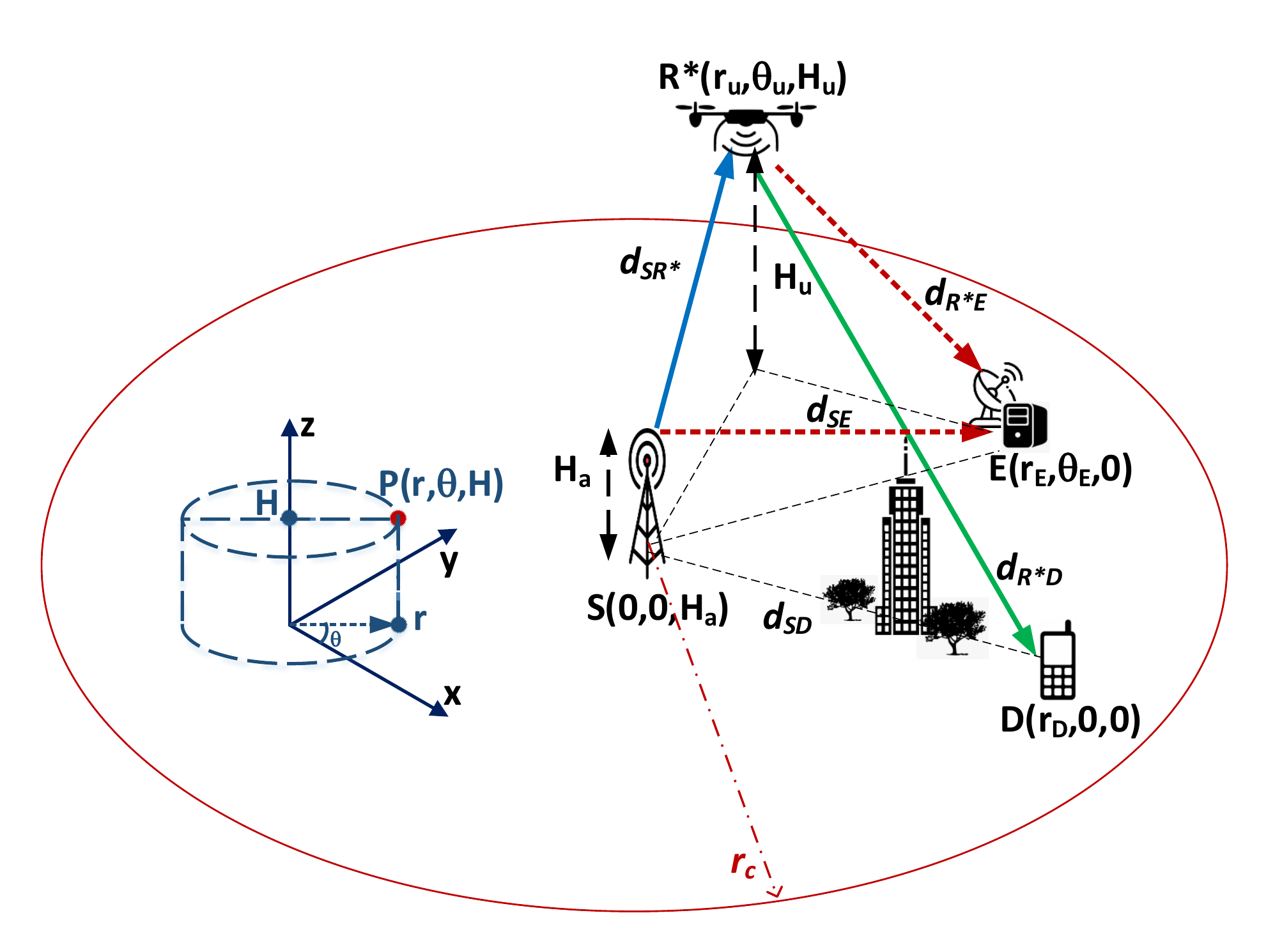}}
	\caption{The illustration of distance model.}
	\label{fig:distance}
\end{figure}
\section{General Case: Unknown Eavesdropper's CSI}
\label{Sec:Unknown}
In this section, we consider the case that an eavesdropper is uniformly distributed on ground inside a circular disc of radius $r_c$ around the source node. 
\subsection{Distance Model}
We use the polar coordinate system to facilitate the analysis with the coordinate origin at the source at $\mathbf{p_S}=(0,0,H_a)$, where $H_a$ is the height of the antenna tower from the ground as illustrated in {Fig.}~\ref{fig:distance}. {UAVs are} flying at $\mathbf{p_u}=(r_u,\theta_u,H_u)$ to serve $D$ which is located at $\mathbf{p_D}=(r_{D},0,0)$. 
	%The projections of $S$, $D$ and $R^*$ on the horizontal plane are points $S_0(0,0,0)$, $D_0(r_{D},0,0)$ and $R_0(r_u,\theta_u,0)$, respectively. 
The corresponding distance from $S$ to $R_u$ and $R_u$ to $D$ are $d_{\textit{\text{Su}}}=\sqrt{{r_u}^2+\left({H_u}-{H_a}\right)^2}$ and $d_{\textit{\text{uD}}}=\sqrt{{r_u}^2+{r_{D}}^2-2r_{D}{r_u}\cos{\theta_u}+{H_u}^2}$, respectively. 

Since the eavesdropper's location is unknown in practical applications, the stochastic geometry is used to describe the eavesdroppers' location in a specific environment \cite{c._liu_artificial-noise-aided_2016}. We consider the case $E$ at $\mathbf{p_E}=(r_{E},\theta_E,0)$ is uniformly distributed inside a circular disc of radius $r_c$ that is centered at the origin $S$, ${r_E \le r_c}, \enspace 0 \le \theta_E \le 2\pi$. The distance from $S$ and $R_u$ to $E$ are $d_{\textit{\text{SE}}}=\sqrt{{r_E}^2+{{H_a}}^2}$ and $d_{\textit{\text{uE}}}=\sqrt{{r_u}^2+{r_E}^2-2r_ur_E\cos{\left(\theta_u-\theta_E\right)}+{H_u}^2}$, respectively.
%As $S$ can optimize its' power to maintain quality-of-service (QoS) of $S$-to-$D$ link, we consider the case $E$ randomly locates at $(r_{E},\theta_E,0)$ inside the eavesdropper plane with the radius of $r_c$. The distance from $S$ to $E$ and $R\textsuperscript{*}$ to $E$ are $d_{\textit{\text{SE}}}=\sqrt{{r_E}^2+{{h_a}}^2}$ and $d_{\textit{\text{R\textsuperscript{*}E}}}=\sqrt{{r_u}^2+{r_E}^2-2r_ur_E\cos{\left(\theta_u-\theta_E\right)}+{h_u}^2}$, respectively.
%\subsection{Binomial Point Process}
The distribution of $E$ is modeled by the binomial point process (BPP) $\Phi_E$ with the corresponding PDF as \cite{7882710}
\begin{equation}
	\begin{aligned}
		%{{F}_{r_E}}\left( r,\theta \right)=\frac{{{r}^{2}}}{{{r_c}^{2}}}, \enspace 
		{{f}_\mathbf{p_E}}\left( r_E,\theta_E \right)=\frac{1}{\pi{{r_c}^{2}}}, \enspace {r_E \le r_c}, \enspace 0 \le \theta_E \le 2\pi.
	\end{aligned}
\end{equation}
\subsection{Lower-Bound SOP}
The CSI between $S$ and $E$ depends on $E$'s location via the free-space path losses, i.e., ${{\lambda }_\textit{SE}}\propto {{d}_\textit{SE}}\left( {{r}_{E}} \right)$, ${{\lambda }_{\textit{\text{R\textsuperscript{*}}E}}}\propto {{d}_{\textit{\text{R\textsuperscript{*}}E}}}\left( {{r}_{E}},\theta_E \right)$.
Therefore, the system SOP in the presence of a randomly distributed eavesdropper can be written as  
\begin{equation}
	\begin{aligned}
		{{{\bar{P}}}_{\textit{\text{out}}}}\left( {{\gamma }_{S}} \right)&={{\mathbb{E}}_{\mathbf{p_E}}}\left\{ {{{P}}_{\textit{\text{out}}}}\left( {{\gamma }_{S}},{{d}_\textit{SE}},{{d}_{\textit{\text{R\textsuperscript{*}}E}}} \right) \right\} ,\\
	\end{aligned}
	\label{eq:SOP,UE}
\end{equation}
where ${{{P}}_{\textit{\text{out}}}}\left( {{\gamma }_{S}},{{d}_\textit{SE}},{{d}_{\textit{\text{R\textsuperscript{*}}E}}} \right)$ is the expression of SOP w.r.t. to {the} fixed location of $E$, which is derived in Section~\ref{Sec:Secrecy} for different cases of combining schemes of SC or MRC at $E$. 
%
%\textcolor{blue}{The expectation w.r.t the location of $E$ can be evaluated using the Jensen's inequality.} 
Although we cannot find the closed-form expression in this case, one can rely on numerical tools or apply the Jensen’s inequality on \eqref{eq:SOP,UE} to effectively evaluate the lower bound SOP \cite{1512427}.

The lower bound SOP is obtained by replacing ${d_{\textit{\text{S}E}}\left(r_E \right)}$ and ${d_{\textit{\text{R\textsuperscript{*}}E}}\left(r_E,\theta_E \right)}$ with $\mathcal{R}_{\textit{\text{S}E}}={{\mathbb{E}}_{\mathbf{p_E}}} \left\{ {d_{\textit{\text{S}E}}\left(r_E \right)} \right\}$ and $\mathcal{R}_{\textit{\text{R\textsuperscript{*}}E}}={{\mathbb{E}}_{\mathbf{p_E}}} \left\{ {d_{\textit{\text{R\textsuperscript{*}}E}}\left(r_E,\theta_E \right)} \right\}$ in \eqref{eq:SOP_SC_final}, \eqref{eq:SOP_SCJ_final}, \eqref{eq:SOP_MRC_final} and \eqref{eq:SOP_MRCJ_final}. $\mathcal{R}_{\textit{\text{S}E}}$ and $\mathcal{R}_{\textit{\text{R\textsuperscript{*}}E}}$ are calculated as
\begin{equation}
	\mathcal{R}_{\textit{\text{S}E}}=\int_{0}^{2\pi }{\int_{0}^{r_c}}{{{d}_\textit{SE}}\left( {{r}} \right){{f}_\mathbf{p_E}}\left( r,\theta \right)}rdrd\theta, \enspace
	%=\frac{2}{3}r_c ,
%	\label{eq:E_dSE}
%\end{equation}
%
%and
%\begin{equation}
%	\begin{aligned}
		\mathcal{R}_{\textit{\text{R\textsuperscript{*}}E}}=\int_{0}^{2\pi }{\int_{0}^{r_c}}{{{d}_{\textit{\text{R\textsuperscript{*}}E}}}\left( {{r}},\theta \right)}
		{{f}_\mathbf{p_E}\left( r,\theta \right)}
		rdrd\theta.\\
%	\end{aligned}
	\label{eq:E_dRE}
\end{equation}
The proof for the case of selection combining without and with jamming is provided in Appendix F.

%\subsection{Infinite region of $E$}
%In case ${{r}_{c}}\to \infty $

%\subsection{Special case: $\tau=2$}

%
%=====================================
\section{A {Case Study}: Optimal Area for Locating UAVs}
\label{Sec:Case}
In this section, we apply the above SOP analysis to optimize the corridor, where a swarm of UAVs should locate.
The optimal corridor is defined as $\mathbf{p_u}=\left(r_u,\theta_u,H_u\right)$,
%area is defined as a set of location ${\mathbf{{\Phi }_{\text{uav}}}}=\{\mathbf{p_u}\left(r_u,\theta_u,h_u\right)\}$, 
to locate {a swarm of UAVs} over which the system SOP {at a certain value of the target secrecy SNR} is minimized.
%  It is assumed that the UAV flies at an altitude of hu in meters above ground, which can be considered as the minimum altitude required for safety considerations such as terrain or building avoidance
{We consider that UAVs can change their altitudes within a constraint of $H_u \in [H_{\min},H_{\max}]$ for safety considerations such as terrain or building and airplane avoidance \cite{g._zhang_securing_2019}.}
%less than a target threshold ${\epsilon}$. 
%
{From the distance model, the CSI between UAVs and other nodes depends on UAVs' locations via the free-space path losses, i.e., ${{\lambda }_\textit{Su}}\propto {{d}_\textit{Su}}\left( \mathbf{p_u} \right)$, ${{\lambda }_{\textit{\text{uD}}}}\propto {{d}_{\textit{\text{uD}}}}\left( \mathbf{p_u} \right)$ and ${{\lambda }_{\textit{\text{uE}}}}\propto {{d}_{\textit{\text{uE}}}}\left( \mathbf{p_u},\mathbf{p_E} \right)$.}
%
% depends on the average received power \textcolor{blue}{$\lambda_{\textit{\text{uv}}}\propto {{{d}_{\textit{\text{uv}}}}\left(\mathbf{p_u}\right)}$}, which 
%So one can apply SOP to find the optimal space corridor that is a set of locations  to locate swarm UAVs. 
%As shown in Fig.~\ref{fig:3D}, 
%\subsection{Known Eavesdropper's CSI}
%$E$ is assumed that can change its location to approach $S$.  %While UAVs can change their locations with altitude constraint for safety considerations such as terrain or building and airplane avoidance \cite{g._zhang_securing_2019}.
%%$E$ is assumed as a untrusted mobile device which locates on ground at $\mathbf{p_E}=\left(x_E,y_E,0 \right)$,
%%The distance over $u-v$ link is ${{d}_{\textit{\text{uv}}}}=\left\|{\mathbf{p_u}}-{\mathbf{p_v}} \right\|$.
{The lower bound SOP in {equation \eqref{eq:SOP,UE}} can be rewritten as related to the location of swarm}
	\begin{equation}
	\begin{aligned}
	{{{\bar{P}}}_{\textit{\text{out}}}}\left( {{\gamma }_{S}},\mathbf{p_u} \right)
	&={{\mathbb{E}}_{\mathbf{p_E}}}\left\{ {{{P}}_{\textit{\text{out}}}}\left( {{\gamma }_{S}},\mathbf{p_u},\mathbf{p_E} \right) \right\}.\\
	\end{aligned}
	\label{eq:SOP,pu}
	\end{equation}
Therefore, the problem to find the optimal %area 
location can be expressed as  
{\begin{equation}
	\begin{aligned}
		& \underset{\mathbf{p_u}}{\text{min}} \quad {{\bar{P}}_\textit{\text{out}}^{{\text{MRC,J}}}}\left( {{\gamma }_{S}},\mathbf{p_u} \right) %\le {\epsilon} 
		\\ 
		& \begin{aligned}
			& s.t. \quad {H_{\min}}\le{H_u}\le{H_{\max}} ,\\ 
			& \quad\quad {0} \le {r_u} \le {r_c} , \enspace {0} \le {\theta_u} \le {2\pi} .\\
		\end{aligned} \\ 
	\end{aligned}
	\label{eq:opt}
\end{equation}}

Due to the complexity of the expressions of { ${{\bar{P}}_\textit{\text{out}}^{{\text{MRC,J}}}}\left( {{\gamma }_{S}},\mathbf{p_u} \right)$ in \eqref{eq:SOP,pu} with the fact that the position of $E$ is unknown}, the solution for the optimal location is intractable. Instead, we can obtain the optimal location with the aid of exhaustive search programming. 
To decrease the complexity of the search problem, we begin with the case that $E$ attempts to eavesdrop information from both the source and the relay UAV, $E$ then can be considered randomly located on either side of the $S$-to-$R$ connected line. The capacity of the eavesdropping channel is independent of the $R$-to-$D$ distance. Therefore, the swarm of UAVs can be located along the straight line between $S$ and $D$, $\theta_u=0$, to increase the capacity of the legitimate link, in order to minimize the SOP. Therefore, our 3D optimization problem can be reduced to 2D optimization one. %We evaluate the affect of the flying altitude of UAVs, ${h_{\textit{\text{min}}}}\le{h_u}\le{h_{\textit{\text{max}}}}$ and the x-coordinate of UAVs swarm, ${0} \le {r_u} \le {r_c} $, on the system SOP. 
The flying altitude and x-coordinate are varied to find the optimal locations of {UAVs} to achieve the minimum SOP.

	\begin{figure*}[!tb]
		\begin{center}
			\minipage{0.47\textwidth}
			\includegraphics[width=\linewidth]{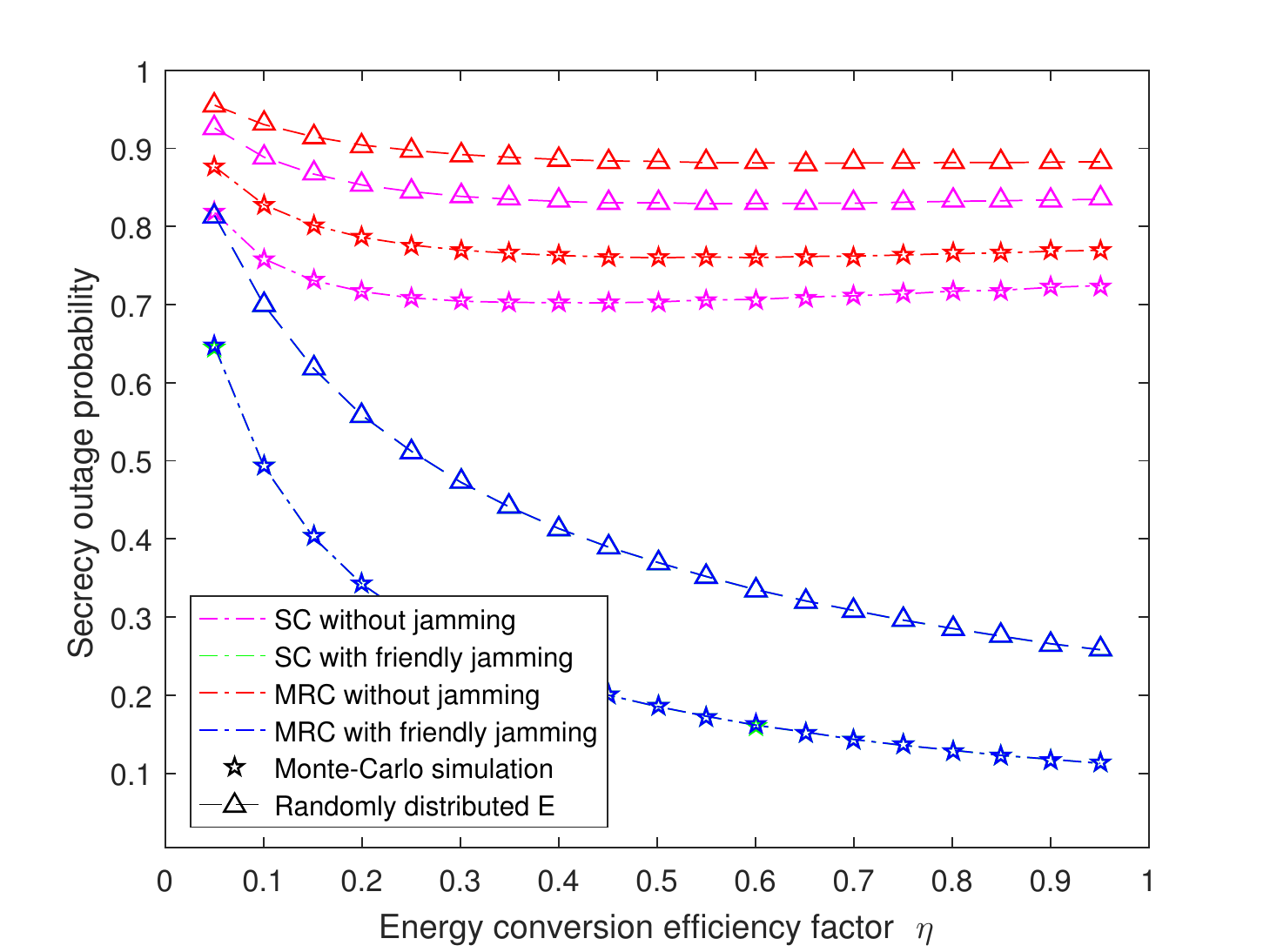}
			\subcaption{}
			\endminipage
			\hfill
			\minipage{0.47\textwidth}
			\includegraphics[width=\linewidth]{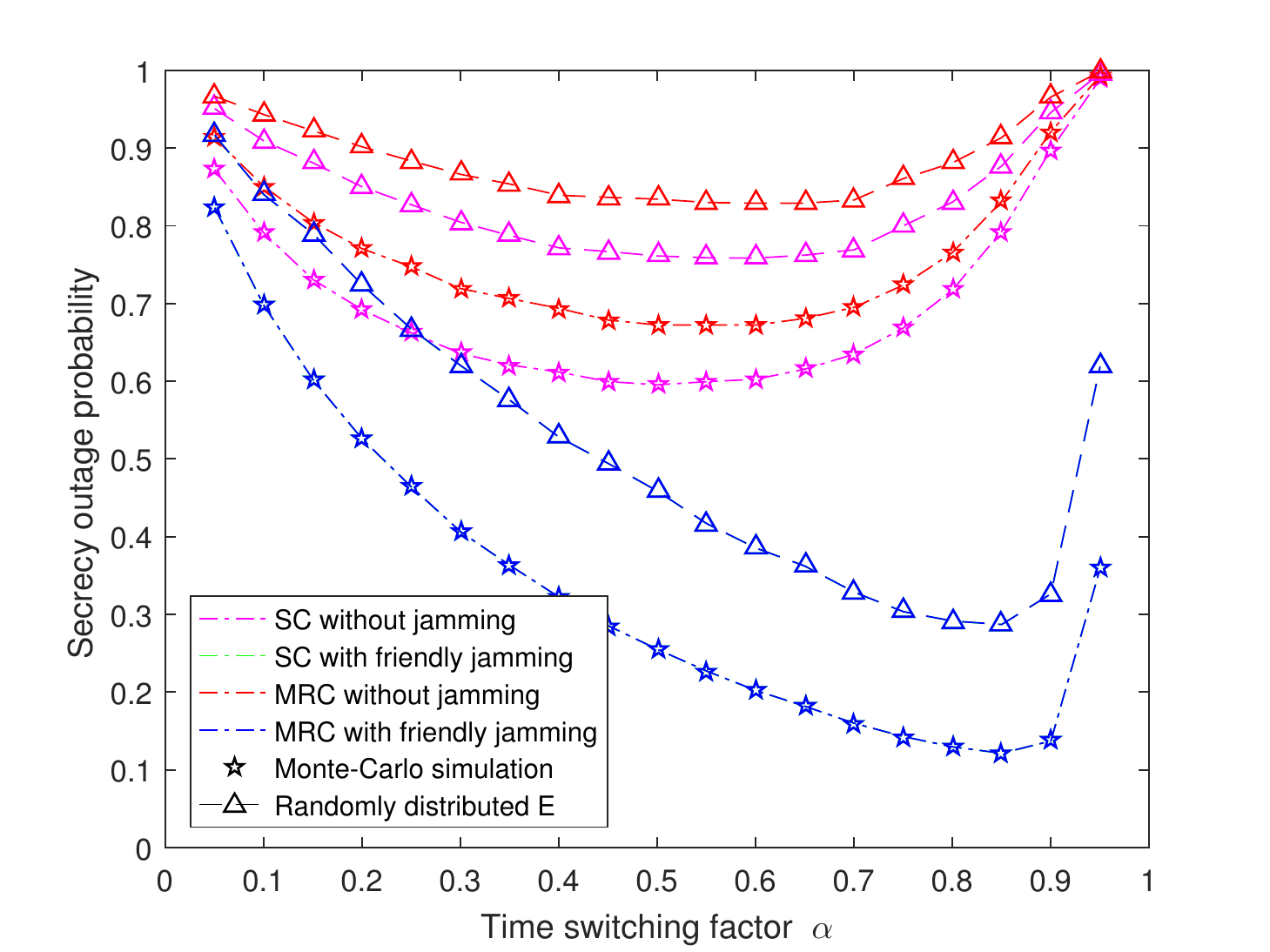}
			\subcaption{}
			\endminipage
			\hfill \\
    		\minipage{0.47\textwidth}%
			\includegraphics[width=\linewidth]{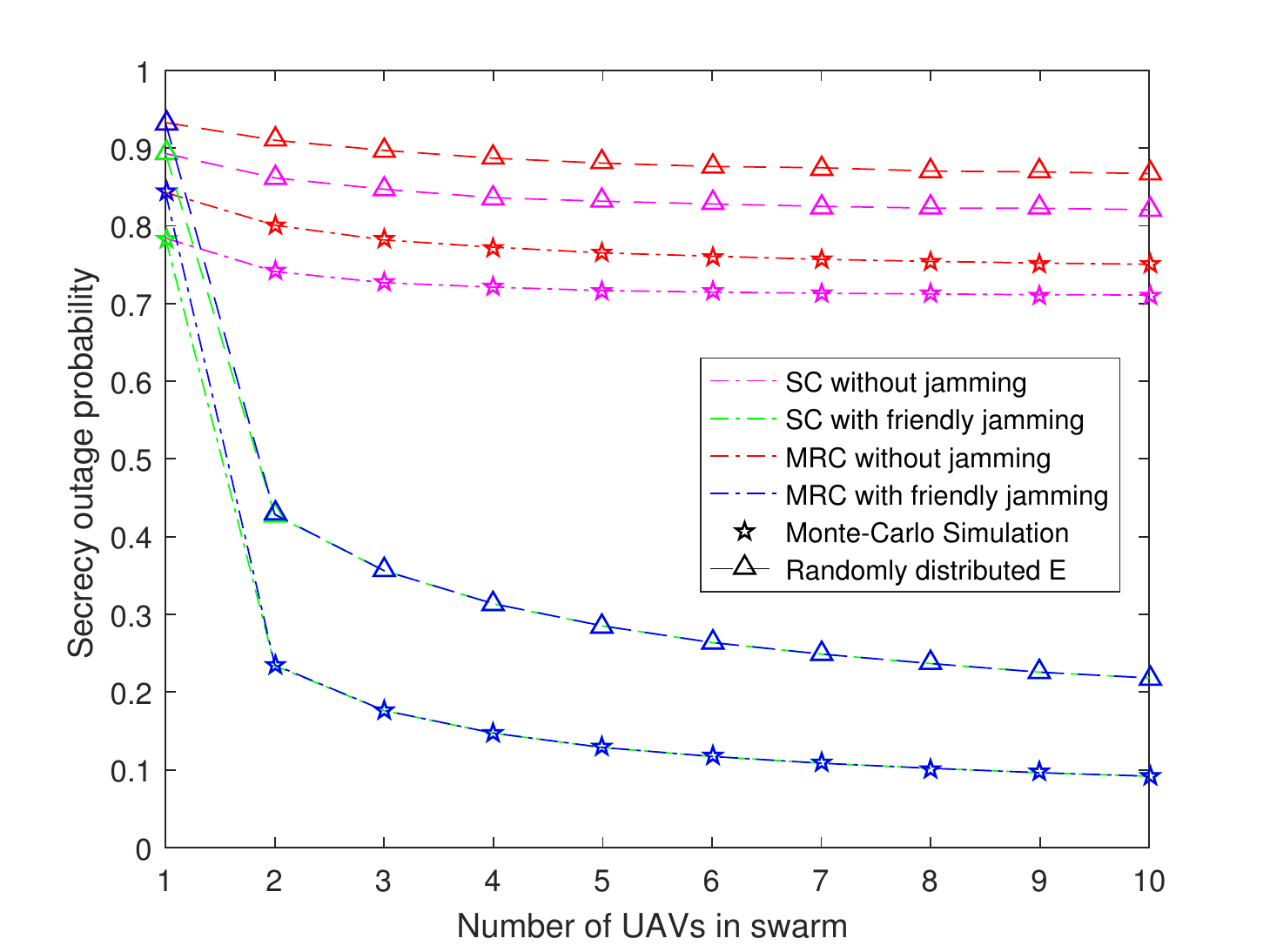}
			\subcaption{}
			\endminipage
			\caption{Secrecy outage probability vs. (a) EH coefficient; (b) time-switching; (c) number of UAVs in swarm.}
			\label{fig:SOP}	
		\end{center}
		\setlength{\arraycolsep}{1pt}
		\hrulefill \setlength{\arraycolsep}{0.0em}
	\end{figure*}
	%%===========
\section{Performance Evaluation}
\label{Sec:Results}
In this section, Monte Carlo simulations are conducted to validate the theoretical expressions in \eqref{eq:SOP_SC_final}, \eqref{eq:SOP_SCJ_final}, \eqref{eq:SOP_MRC_final}, \eqref{eq:SOP_MRCJ_final}, {and \eqref{eq:SOP,UE}} as well as to obtain insights into the secrecy performance of our system. 
We first set the target secrecy capacity at ${C}_{th}=0.1$ bps/Hz. 
We assume that the jamming UAVs use all their harvested energy to effectively jam the eavesdropper, i.e., $\delta = 1$. 
For the purpose of illustration, all the coordinate systems are presented in Cartesian and in meters. $S$ and $D$ are fixed at $\mathbf{p_S}=(300,300,25)$ and $\mathbf{p_D}=(600,300,0)$, respectively. The channel between UAVs and terrestrial BSs are under shadowed-Rician fading %is examined in two cases of (i) $(m_S,b_0,\Omega)=(2, 0.063, 0.0005)$ for severe shadowing; and (ii) 
of $(m_S,b,\Omega)=(5, 0.251, 0.279)$ for average shadowing. The free-space path loss  %considered with $f_C= 5$ GHz, $\tau=2$, and antennas is characterized to completely compensate the path-loss component at 
refers to a reference distance at $d_0=100$ meters. 
\subsection{Secrecy Performance Analysis}
To illustrate the secrecy performance of the system, the SOP is investigated for two techniques of SC and MRC at $E$ in a high SNR of $\psi={P_S}/{{\sigma}^2 }=40 \text{dB}$. In these simulations, {UAVs} hover at $H_u=60$ meters, in particular at $\mathbf{p_{u}}=(350,300,60)$. Whereas $E$ is assumed to be {known and located} at $\mathbf{p_E}=(600,400,0)$ or is randomly located around $S$ inside a circular disk of $r_c=300$ meters. %The amount of harvested energy depends on the EH time factor and also the gain of the harvesting channel.

{Figs}.~\ref{fig:SOP}(a) and (b) show the impact of the energy conversion efficiency factor $\eta$ and the time-switching factor $\alpha$ on the system SOP, when $U=5$ UAVs in the swarm. %under average shadowing  %$P_{\textit{\text{out}}}^{{\text{MRC}}}$ is greater than $P_{\textit{\text{out}}}^{{\text{SC}}}$ despite the increasing number of UAVs.
Fig.~\ref{fig:SOP}(c) depicts the SOP vs. the number of UAVs when $\alpha=0.8$ and $\eta=0.8$. % under severe shadowing and average shadowing. 
These figures validate the theoretical derivation in \eqref{eq:SOP_SCJ_final} and \eqref{eq:SOP_MRCJ_final} as the coincidence between the asymptotic analysis and the Monte Carlo simulation. The effectiveness of our proposed model is presented by the more deviation between the two cases of without and with friendly jamming.
Without friendly jamming, the SOP is not impacted much by increasing $\eta$ or $U$ in case of known $E$, {i.e.,} {the SOP is around $0.4$ when $\eta \ge 0.25$ or $U \ge 3$}.   
%This implies the signals received at $E$ is as good as at $D$. 
{In our proposed system, the relay selection 
	{can provide} 
	the better link between $S$ and UAVs when the number of UAVs increases, while the more harvested energy for amplifying and forwarding the information to $D$ provides the better signal at $D$. Unfortunately, those also benefit the eavesdropping link from UAVs to $E$. Therefore, at a certain value of $\eta$ and $U$,} the signals received at $E$ from UAVs is as good as at $D$, {resulting in a less change in the SOP.}  
%\textcolor{teal}{In order to combat the eavesdropping, UAVs are utilized as jammers. The more number of jamming UAVs or the more harvested energy to jam the eavesdropper, the more interference at $E$. Therefore, the eavesdropping channel is severely degraded.}
With friendly jamming, the higher $\eta$ or $\alpha$, or the more number of UAVs in the swarm provides a more harvested energy to jam $E$. {Therefore, the channels between UAVs and $E$ are severely degraded.}
This results in {a significant} decrease in the SOP. 

{From Fig.~\ref{fig:SOP}(b), increasing $\alpha$ at first gives more energy to relay the information and jam the eavesdropper, as such improves the system SOP. It is recalled that $\alpha$ is the fraction of time for energy harvesting over the total transmission time. Consequently, the more the time for energy harvesting, the less the time for signal transmission. Therefore, the time switching factor can be optimized to achieve the minimum SOP, e.g., $\alpha=0.5$ without friendly jamming or $\alpha=0.85$ with friendly jamming.}
{Furthermore, the eavesdropper is the most effective in overhearing the legitimate transmission using the MRC scheme. Without friendly jamming, the SOP in the case of the MRC scheme at $E$ is greater than that of the SC scheme at $E$ as expected. Using friendly jamming to degrade the channels between UAVs and $E$, results in the coincidence of two schemes. This implies signals from UAVs are not significant as compared to the signals from $S$.}%

%-------------------------------------
\begin{figure}[!tb]
  \begin{center}
  \begin{minipage}{0.45\textwidth}
    {\includegraphics[width=\linewidth]{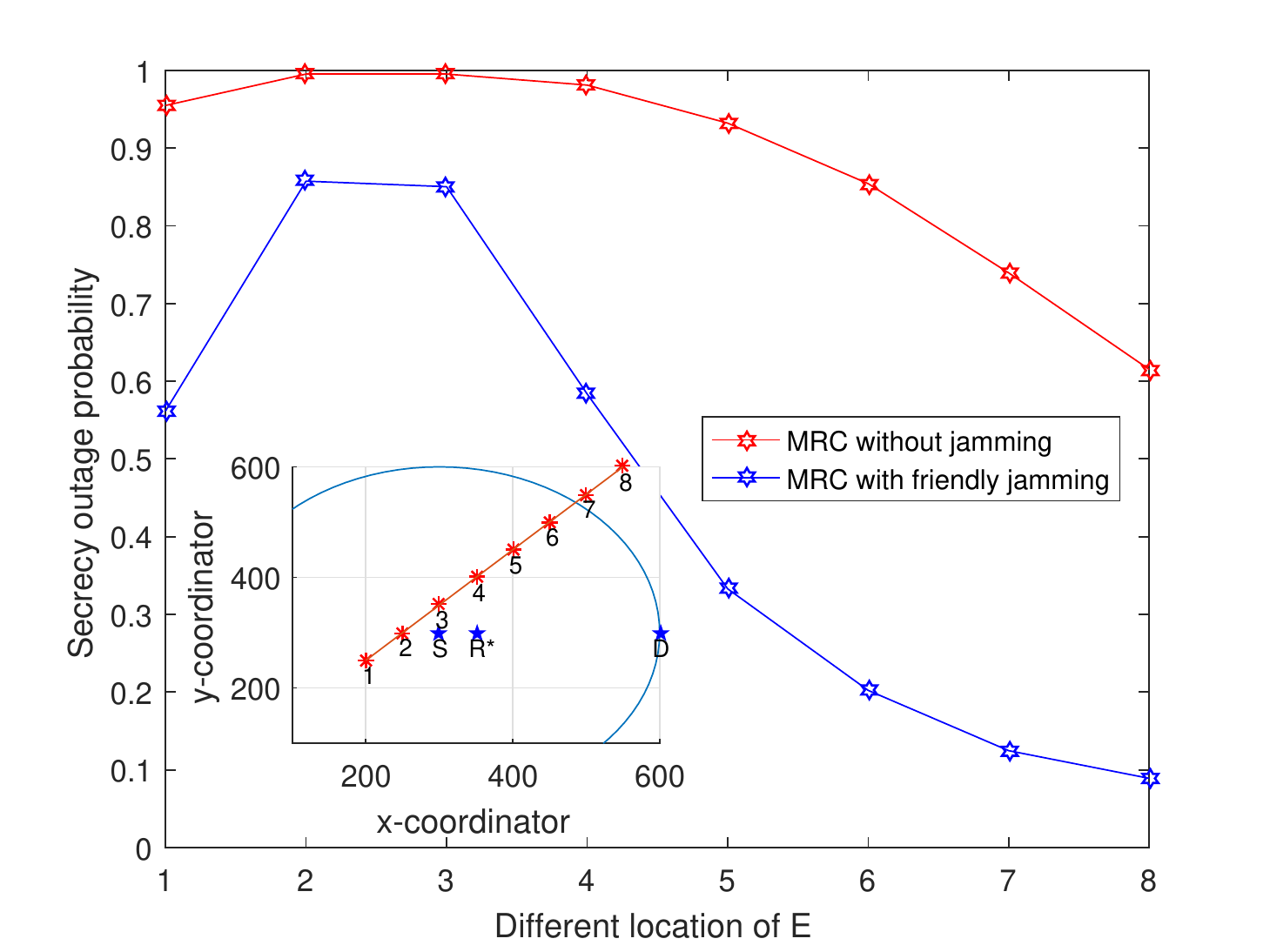}}
	\caption{SOP versus $E$'s location.}
	\label{fig:SOP_disE}
  \end{minipage}
  \hfill
  \begin{minipage}{0.45\textwidth}
    {\includegraphics[width=\linewidth]{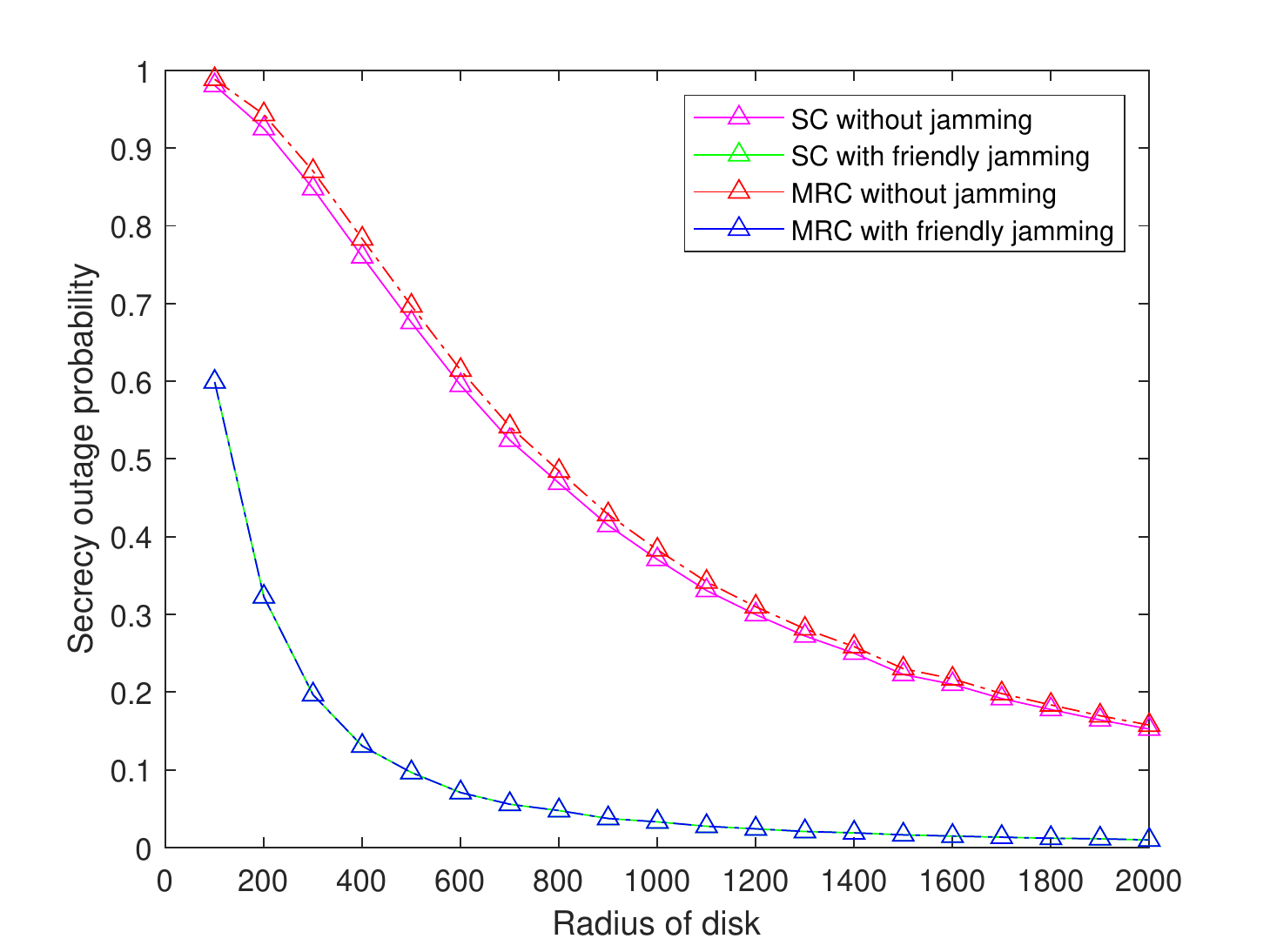}}
	\caption{SOP versus the radius $r_c$.}
	\label{fig:SOP_Rmax}
  \end{minipage}
  \end{center}
\end{figure}
%-------------------------------------
%\begin{figure}[!tb]
%	\centerline{\includegraphics[width=0.8\linewidth]{FigSOPvsEloc}}
%	\caption{Secrecy outage probability vs. $E$'s location.}
%	\label{fig:SOP_disE}
%\end{figure}
%==============================
We extend the study to the case of a randomly distributed $E$.  
%is improved with deploying more assisted-UAVs for AF relaying and friendly jamming. %Figure ~\ref{fig:SOP_M} also shows that the system performance degrades significantly, i.e. higher SOP, when aerial-based channels approach severe shadowing while deploying the same number of UAVs.
%Figure \ref{fig:SOPvsPsNo} plots the SOP with respect to the average SNR under average shadowing. It can be seen that the more power for wireless  powering and signal transmitting (higher SNR), the more secure the system can get (decreasing SOP), and the SOP stabilizes in the high SNR regime, e.g. above 20dB. Without friendly jamming, the MRC combining scheme obviously benefits $E$ in stealing legitimate information, i.e. $P_{\textit{\text{out}}}^{{\text{MRC}}}$ is greater than $P_{\textit{\text{out}}}^{{\text{SC}}}$. With cooperative friendly jamming from the UAV swarm, the more transmit power in the harvesting energy phase, the more harvested energy to be used for jamming, so that the channel from UAVs to $E$ is degraded effectively, presented by the more deviation between two cases of without and with friendly jamming. This proves the effectiveness of our proposed model.
%
Fig.~\ref{fig:SOP_disE} shows the significant impact of $E$'s location on the SOP. 
The particular locations of all nodes are illustrated on the subplot. $E$ is assumed to change its location from $(550,600,0)$ to $S$, numbered from 1 to 8. The SOP increases as $E$ gets closer to $S$ (i.e., $E$'s location numbered 2 and 3) and decreases when $E$ goes far away from $S$. 
{Locating around $S$ helps $E$ improve its capacity by boosting the link from $S$ to $E$.
	In general case of unknown $E$, the system SOP is calculated as the average SOP over the region in which $E$ is randomly distributed.}
As expected, from Figs.~\ref{fig:SOP}(a), (b), (c), and {Fig.~\ref{fig:SOP_Rmax},} the system SOP for this case is higher than the case when $E$ is fixed far away from $S$. These figures also {show} that using UAVs to jam $E$, {which is randomly distributed,} provides the lower SOP as compared to the case without jamming.%

%=============================
%\begin{figure}[!tb]
%	\centerline{\includegraphics[width=0.8\linewidth]{FigSOPvsRmax}}
%	\caption{SOP is varied with the radius of circular disk.}
%	\label{fig:SOP_Rmax}
%\end{figure}
%======================================
In above figures, $E$'s geometry environment is assumed to be a circular disk around $S$ with radius $r_c=d_\textit{SD}$. However, as $E$ is randomly distributed, we cannot exactly know $r_c$. Therefore, the impact of $E$'s distributed environment on the SOP is illustrated in {Fig.~\ref{fig:SOP_Rmax}}. The SOP is high when $E$ is distributed around $S$, and significantly decreases when the considered circular disk is larger. For example, when $r_c$ is greater than $500$ meters, SOP is less than 0.1. 
{Despite the fact that $E$'s distributed environment is unknown, our proposed model still proves its efficiency when showing the significant deviation between the two cases of without and with friendly jamming.}
%
%======================================
\begin{figure*}[!tb]
	\begin{center}
		\minipage{0.49\textwidth}%
		\includegraphics[width=\linewidth]{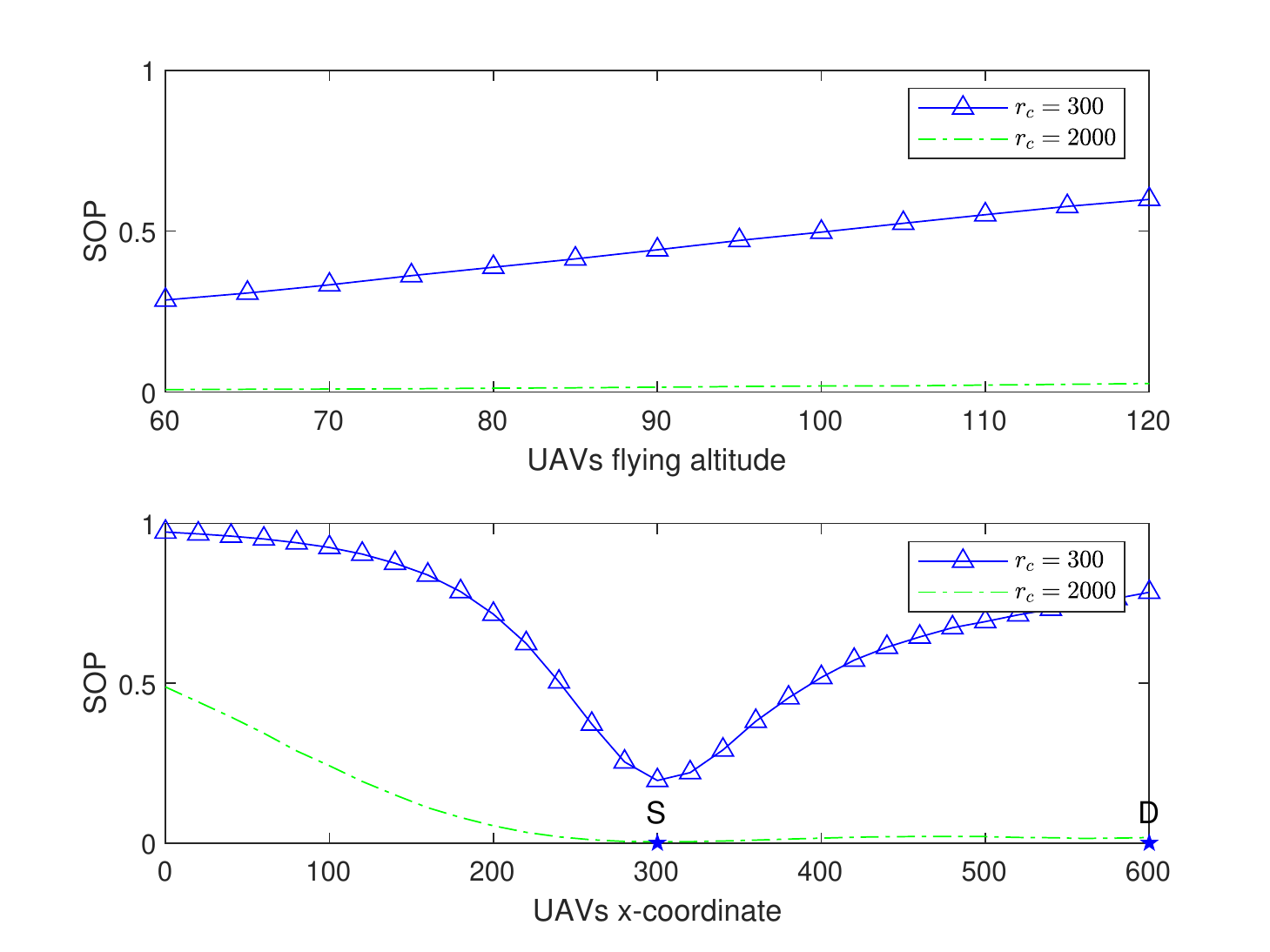}
		\subcaption{SOP versus $z_u$, $r_u$.}
		\endminipage
		\hfill				
		\minipage{0.49\textwidth}
		\includegraphics[width=\linewidth]{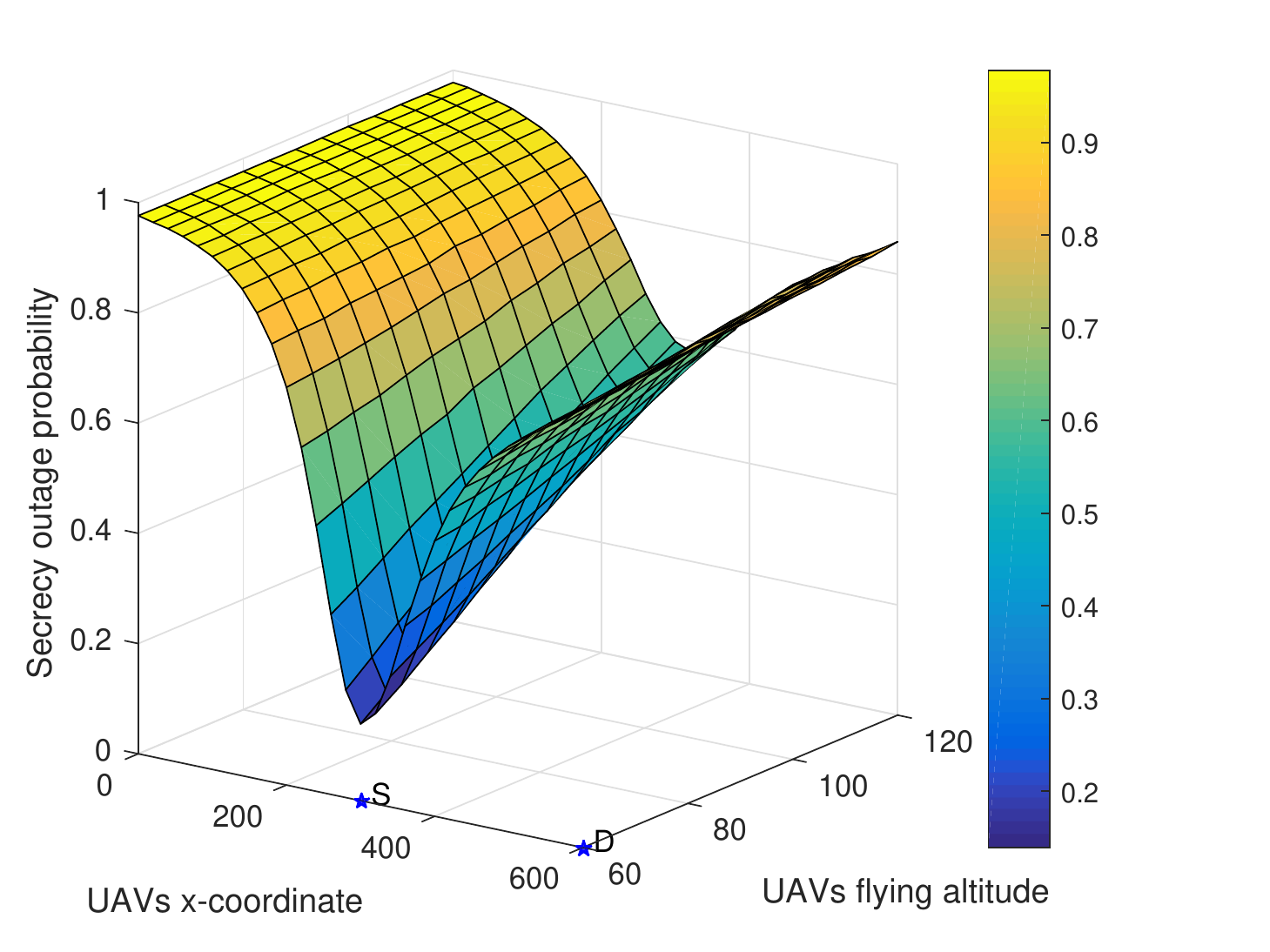}
		\subcaption{SOP versus $\mathbf{p_u}$.}
		\endminipage
		\caption{Secrecy outage probability vs. (a) the flying altitude and the x-coordinator of UAVs; (b) the 2D location of UAVs.}
		\label{fig:SOP_Loc}	
	\end{center}
	\setlength{\arraycolsep}{1pt}
	\hrulefill \setlength{\arraycolsep}{0.0em}
\end{figure*}
%======================================
%\begin{figure*}[!tb]
%	\begin{center}
%		{\includegraphics[width=1\linewidth]{FigSOPvsTrajE}}
%		\caption{An illustration of optimal location of UAVs to target SOP under $10\%$.}
%		\label{fig:Traj3D}	
%	\end{center}
%	\setlength{\arraycolsep}{1pt}
%	\hrulefill \setlength{\arraycolsep}{0.0em}
%\end{figure*}
%
%=====================================
\subsection{Finding Optimal Corridor for Locating UAVs }

The impact of UAVs' locations on the system SOP in the presence of a randomly distributed $E$ is shown in Fig.~\ref{fig:SOP_Loc}(a). The flying altitude and x-coordinate of UAVs are independently investigated while other coordinates remain the same as above. 
%In our model, UAVs use harvested energy to relay received signals from $S$ to $D$ and to jam $E$. 
Since higher flight altitudes lead to a more severe path loss for ground-to-air and air-to-ground communications, the SOP increases along with the higher flight altitude.
Due to the free-space path loss model, when the distance between $S$ and $R$ increases, UAVs receive lower SNR signals from $S$, and harvest less RF energy to relay signals to $D$ as well as to jam $E$. 
{In addition, from Fig.~\ref{fig:SOP_disE}, the system SOP is the worst when $E$ is located around $S$. To effectively jam $E$, jamming UAVs should hover above $S$.}
Therefore, the SOP increases when UAVs fly at a high altitude and far from $S$.  

A {case study} of finding the optimal corridor for locating UAVs is presented in Fig.~\ref{fig:SOP_Loc}(b). The SOP is now varied with both the flying altitude and the x-coordinate in a 3D plot. %The UAVs' altitude is simulated with the step of $5$ meters, while the x-coordinate is changed with a step of $50$ meters. 
Fig.~\ref{fig:SOP_Loc}(b) shows that the SOP reaches its minimum value when the swarm of UAVs {flies} at the lowest allowed altitude and right above $S$ at $(300,300,60)$. 
{From these illustrations, in order to guarantee a level of SOP, UAVs should hover above $S$ and at the lowest accepted altitude.}
\section{Conclusions}
\label{Sec:Conclusion}
In this paper, we proposed a cooperative friendly jamming in swarm UAV-assisted communications with wireless energy harvesting. The theoretical SOPs were derived for two popular combining schemes at the eavesdropper and verified by Monte Carlo simulations. Using the SOP, we obtained engineering insights to optimize the energy harvesting time, the number of UAVs in the swarm to achieve the required secrecy level. Furthermore, we investigated a {case study} that uses the derived SOP to find the optimal corridor to locate the swarm {of hovering UAVs} so as to minimize the secrecy outage probability in the presence of an eavesdropper.
\appendices
\section{Proof of Lemma 2}
%\section*{Appendix A - Proof of Lemma 2}
\label{app:Lem2}
We adopt the Moment Generating Function approach to derive the PDF of ${{\mathcal{J}}}=\sum\limits_{j=1}^{U-1}{{{\left| {{h}_{jE}} \right|}^{2}}}$ as follows. Since the aerial-based links between UAV-aided jammers and $E$ are under shadowed-Rician fading, the PDF of each jamming channel gain, ${{\left| {{h}_{jE}} \right|}^{2}}$, is written as
\begin{equation}
	{{f}_{{{\left| {{h}_{jE}} \right|}^{2}}}}\left( x \right)={{f}_{Zj}}\left( x \right)=\sum\limits_{{{k}_{j}}=0}^{{{m}_{S}}-1}{{{\zeta }_{Z}}}{{x}^{{{k}_{j}}}}{{e}^{-{{\eta }_{Z}}x}},
\end{equation}
Using \cite[eq. (3.351,3)]{zwillinger_3-4_2015}, the Laplace transform of ${{f}_{Z_j}}\left( x \right)$ is 
\begin{equation}
	\begin{split}
		{{\hat{P}}_{Z_j}}\left( s \right)
		&=\int_{0}^{\infty }{{{f}_{Z_j}}\left( x \right)}{{e}^{-sx}}dx
		=\sum\limits_{{{k}_{j}}=0}^{{{m}_{S}}-1}{{{\zeta }_{Z}}}{{k}_{j}}!{{\left( {{\eta }_{Z}}+s \right)}^{-\left( {{k}_{j}}+1 \right)}}.
	\end{split}
\end{equation}
Therefore, the Laplace transform of ${{f}_{{\mathcal{J}}}}\left( x \right)$ can be rewritten as 
\begin{equation}
	\begin{aligned}
		{{{\hat{P}}}_{{\mathcal{J}}}}\left( s \right)
		&=\prod\limits_{j=1}^{U-1}{{{{\hat{P}}}_{Z_j}}\left( s \right)} =\sum\limits_{{{k}_{1}}=0}^{{{m}_{S}}-1}{\ldots \sum\limits_{{{k}_{U-1}}=0}^{{{m}_{S}}-1}{\left[ \prod\limits_{j=1}^{{U-1}}{\left( {{\zeta }_{Z}}{{k}_{j}}! \right)} \right]}}{{\left( {{\eta }_{Z}}+s \right)}^{-\sum\limits_{j=1}^{U-1}{\left( {{k}_{j}}+1 \right)}}}. \\ 
	\end{aligned}
\end{equation}
Finally, applying the inverse Laplace transform of \eqref{eq:L-inverse}, we obtain the PDF of ${{\mathcal{J}}}$ as in \eqref{eq:PDF_Zsum} to complete the proof of Lemma 3.
\begin{equation}
	{{\mathcal{L}}^{-1}}\left\{ {{\left( \frac{1}{s+{1}/{\lambda }\;} \right)}^{n}} \right\}=\frac{1}{\left( n-1 \right)!}{{t}^{n-1}}{{e}^{-\frac{t}{\lambda }}}.
	\label{eq:L-inverse}
\end{equation}

\begin{figure*}[!tb]
	\begin{center}
		\begin{equation}
			\begin{aligned}
P_{\textit{\text{out}}}^{{\text{SC}}}\left( {{\gamma }_{S}} \right)
& =1-U\sum\limits_{{{l}_{X}}=0}^{{{m}_{S}}-1}{\widetilde{\sum }_{u}^{U-1}}\sum\limits_{{{l}_{Y}}=0}^{{{m}_{S}}-1}{{{\zeta }_{X}}{{\zeta }_{Y}}}{{\kappa }_{u}}\Gamma \left( {{\chi }_{u}}+1 \right) \\ 
& \times \left\{ \begin{aligned}
	& {{\eta }_{u}}^{-\left( {{\chi }_{u}}+1 \right)}\Gamma \left( {{l}_{Y}}+1 \right){{\eta }_{Y}}^{-\left( {{l}_{Y}}+1 \right)}
	-\underbrace{\int_{0}^{\infty }{{{y}^{{{l}_{Y}}}}{{e}^{-{{\eta }_{Y}}y}}}{{\left( {{\eta }_{u}}+{{\Upsilon }_{W}}\left( y \right) \right)}^{-\left( {{\chi }_{u}}+1 \right)}}dy}_{{{J}_{1}}} \\ 
	& -\sum\limits_{{{l}_{Z}}=0}^{{{m}_{S}}-1}{\sum\limits_{{{q}_{Z}}=0}^{{{l}_{Z}}}{{{\kappa }_{Z}}}}{{\eta }_{u}}^{-\left( {{\chi }_{u}}+1 \right)}
	\underbrace{\int_{0}^{\infty }{{{y}^{{{l}_{Y}}}}{{e}^{-{{\eta }_{Y}}y}}{{\left\{ {{\Upsilon }_{Z}}\left( y \right) \right\}}^{{{q}_{Z}}}}{{e}^{-{{\eta }_{Z}}{{\Upsilon }_{Z}}\left( y \right)}}}dy}_{{{J}_{2}}} \\ 
	& -\sum\limits_{{{l}_{Z}}=0}^{{{m}_{S}}-1}{\sum\limits_{{{q}_{Z}}=0}^{{{l}_{Z}}}{{{\kappa }_{Z}}}}
	\underbrace{\int_{0}^{\infty }{{{y}^{{{l}_{Y}}}}{{e}^{-{{\eta }_{Y}}y}}}{{\left( {{\eta }_{u}}+{{\Upsilon }_{W}}\left( y \right) \right)}^{-\left( {{\chi }_{u}}+1 \right)}}{{\left\{ {{\Upsilon }_{Z}}\left( y \right) \right\}}^{{{q}_{Z}}}}{{e}^{-{{\eta }_{Z}}{{\Upsilon }_{Z}}\left( y \right)}}dy}_{{{J}_{3}}} \\ 
\end{aligned} \right\} \\ 
			\end{aligned} 
			\label{eq:SOP_SC3}
		\end{equation}
	\end{center}
	\setlength{\arraycolsep}{1pt}
	\hrulefill \setlength{\arraycolsep}{0.0em}
\end{figure*}
%-------------------------------------------------------
%\section*{Appendix B - Proof of Proposition 1}
\section{Proof of Proposition 1}
\label{app:Pro1}
Using the CDF and PDF of all r.v. to rewrite \eqref{eq:SOP_SC1}, after some simplifications and employing \cite[eq. (3.471,9)]{zwillinger_3-4_2015}, % 
%---------------------------------------
%\begin{equation}
%	\begin{aligned}
%		{\Theta_0}\left(u,v\right)
%		=\int_{0}^{\infty }{{{x}^{u}}{{e}^{-{v}x}}}dx 
%		=\Gamma \left( {u}+1 \right){v}^{-\left( {u}+1 \right)} ,
%	\end{aligned}
%	\label{eq:Theta0}
%\end{equation}%
we have \eqref{eq:SOP_SC3}. 
%We derive the expressions of each integral as follows.
The first integral is rewritten as
\begin{equation}
	\begin{aligned}
{{J}_{1}}
%&=\int_{0}^{\infty }{{{y}^{{{l}_{Y}}}}{{e}^{-{{\eta }_{Y}}y}}}{{\left( {{\eta }_{u}}+{{\Upsilon }_{W}}\left( y \right) \right)}^{-\left( {{\chi }_{u}}+1 \right)}}dy \\
& ={{\left( {{\tilde{\gamma }}_{SE}} \right)}^{ {{\chi }_{u}}+1 }} 
{{\Theta }_{1}}\left( {{l}_{Y}},{{\chi }_{u}}+1,{{\eta }_{Y}};\frac{1}{\varepsilon {{\lambda }_{\textit{\text{R\textsuperscript{*}}D}}}},{{\tilde{\eta }}_{u}} \right), \\
	\end{aligned}
	\label{eq:J2}
\end{equation}
where 
${{\tilde{\gamma }}_{SE}}=\frac{{{\gamma }_{S}}{{\lambda }_{\textit{\text{SE}}}}}{{{\eta }_{u}}{{\gamma }_{S}}{{\lambda }_{\textit{\text{SE}}}}+{{\lambda }_{\textit{\text{SR\textsuperscript{*}}}}}}$ 
and 
${{\tilde{\eta }}_{u}}=\frac{{{\eta }_{u}}{{\tilde{\gamma }}_{SE}}}{\varepsilon {{\lambda }_{\textit{\text{R\textsuperscript{*}}D}}}}$.
%-------------------------------------------------------------
The integral $\Theta_1$ is defined as
\begin{equation}
	\begin{aligned}
		{{\Theta }_{1}}\left( v,\gamma ,\mu ;\alpha ,\beta  \right)=\int_{0}^{\infty }{{{x}^{v}}{{e}^{-\mu x}}}{{\left( \frac{x+\alpha }{x+\beta } \right)}^{\gamma }}dx.
	\end{aligned}
	\label{eq:Theta1_def}
\end{equation}
By changing variable and using power series, we rewrite \eqref{eq:Theta1_def}
\begin{equation}
	\begin{aligned}
		{{\Theta }_{1}}&\left( v,\gamma ,\mu ;\alpha ,\beta  \right)= {{e}^{\mu \beta }}\sum\limits_{m=0}^{v}{\left( \begin{aligned}
				& v \\ 
				& m \\ 
		\end{aligned} \right)}
		{{\left( -\beta  \right)}^{v-m}} \sum\limits_{n=0}^{\gamma }
		{\left( \begin{aligned}
				& \gamma  \\ 
				& n \\ 
	    \end{aligned} \right)}
		{{\left( \alpha -\beta  \right)}^{\gamma -n}}  
		\int_{\beta }^{\infty }{{{e}^{-\mu t}}}{{t}^{m+n-\gamma }}dt. \\ 		
	\end{aligned}
	\label{eq:Theta1_1}
\end{equation}
We denote 
\begin{equation}
	{{\Phi }_{1}}\left( u;v,\mu  \right)=\int_{u}^{\infty }{{{x}^{v}}{{e}^{-\mu x}}}dx.
\end{equation}
Employing \cite[eq. (3.351,2) and eq. (3.351,4)]{zwillinger_3-4_2015} to express $\Phi_1$ as in \eqref{eq:Phi} for general case of arbitrary $v$, we have the expression of $\Theta_1$ as in \eqref{eq:Theta_1}.

The second integral can be expressed as
\begin{equation}
	\begin{aligned}
		{{J}_{2}}
		%&=\int_{0}^{\infty }{{{y}^{{{l}_{Y}}}}{{e}^{-{{\eta }_{Y}}y}}{{\left\{ {{\Upsilon }_{Z}}\left( y \right) \right\}}^{{{q}_{Z}}}}{{e}^{-{{\eta }_{Z}}{{\Upsilon }_{Y}}\left( y \right)}}}dy \\
		& ={{\left( \frac{\tilde{\gamma_Z}}{{\gamma }_{S}} \right)}^{{{q}_{Z}}}} 
		{{\Theta }_{2}}\left( {{l}_{Y}},{{q}_{Z}},{{\eta }_{Y}},{{\eta }_{Z}}\frac{\tilde{\gamma_Z}}{{\gamma }_{S}};{\tilde{\gamma_Y}} \right) ,\\
	\end{aligned}
	\label{eq:J3}
\end{equation}
where ${\tilde{\gamma_Y}}=\frac{{{\gamma }_{S}}}{\varepsilon {{\lambda }_{\textit{\text{R\textsuperscript{*}}D}}}\left( {{\gamma }_{S}}-1 \right)}$ 
and 
${\tilde{\gamma_Z}}=\frac{{{\gamma }_{S}}}{\varepsilon {{\lambda }_{\textit{\text{R\textsuperscript{*}}E}}}\left( {{\gamma }_{S}}-1 \right)}$. 
%-------------------------------------------------------------
The integral $\Theta_2$ is defined as
\begin{equation}
	\begin{aligned}
		{{\Theta }_{2}}\left( v,\gamma ,\mu ,\rho ;\beta  \right)=\int_{0}^{\infty }{{{x}^{v}}{{\left( \frac{x}{x+\beta } \right)}^{\gamma }}{{e}^{-\mu x}}{{e}^{-\rho \left( \frac{x}{x+\beta } \right)}}}dx.
	\end{aligned}
	\label{eq:Theta2_def}
\end{equation}
By changing variable, using power series and Taylor expansion, we rewrite
\begin{equation}
	\begin{aligned}
		{{\Theta }_{2}}\left( v,\gamma ,\mu ,\rho ;\beta  \right)&={{e}^{\mu \beta -\rho }}\sum\limits_{m=0}^{v+\gamma }{\left( \begin{aligned}
				& v \\ 
				& m \\ 
			\end{aligned} \right)}{{\left( -\beta  \right)}^{v+\gamma -m}}\sum\limits_{p=0}^{\infty }{\frac{{{\left( \rho \beta  \right)}^{p}}}{p!}}
			\int_{\beta }^{\infty }{{{t}^{m-\gamma -p}}{{e}^{-\mu t}}}dt	.	
	\end{aligned}
	\label{eq:Theta2_1}
\end{equation}
Using expression of $\Phi_1$ as in \eqref{eq:Phi}, we have the expression of $\Theta_2$ as in \eqref{eq:Theta_2}.

Employing \cite[eq. (3.471,9)]{zwillinger_3-4_2015}, the third integral is written as
\begin{equation}
	\begin{aligned}
{{J}_{3}}
%&=\int_{0}^{\infty }{{{y}^{{{l}_{Y}}}}{{e}^{-{{\eta }_{Y}}y}}}\frac{{{\left\{ {{\Upsilon }_{Z}}\left( y \right) \right\}}^{{{q}_{Z}}}}} {{{\left\{ {{\eta }_{u}}+{{\Upsilon }_{W}}\left( y \right) \right\}}^{\left( {{\chi }_{u}}+1 \right)}}}{{e}^{-{{\eta }_{Z}}{{\Upsilon }_{Z}}\left( y \right)}}dy\\
	& ={{\left( {{\tilde{\gamma }}_{SE}} \right)}^{{{\chi }_{u}}+1 }}
	{{\left(\frac {\tilde{\gamma_Z}}{{\gamma }_{S}} \right)}^{{{q}_{Z}}}} 
	{{\Theta }_{3}}\left( {{l}_{Y}},{{\chi }_{u}}+1,{{q}_{Z}},{{\eta }_{Y}},{{\eta }_{Z}}\frac{\tilde{\gamma_Z}}{{\gamma }_{S}};\frac{1}{\varepsilon {{\lambda }_{\textit{\text{R\textsuperscript{*}}D}}}},{{\tilde{\eta }}_{u}} \right) .\\ 
	\end{aligned}
	\label{eq:J4}
\end{equation}
%------------------------------------------------------
The integral $\Theta_3$ is defined as
\begin{equation}
	\begin{aligned}
		&{{\Theta }_{3}}\left( v,\gamma ,\lambda ,\mu ,\rho ;\alpha ,\beta ,\xi  \right) ={{\int_{0}^{\infty }{{{x}^{v}}{{\left( \frac{x+\alpha }{x+\beta } \right)}^{\gamma }}\left( \frac{x}{x+\xi } \right)}}^{\lambda }}{{e}^{-\rho \left( \frac{x}{x+\beta } \right)}}{{e}^{-\mu x}}dx ,\\
	\end{aligned}
	\label{eq:Theta3_def}
\end{equation}
Applying Taylor expansion and partial fraction in \cite[eq. (2.102)]{zwillinger_2_2014}, $\Theta_3$ can be expressed as
\begin{equation}
	\begin{aligned}
		{{\Theta }_{3}}&\left( v,\gamma ,\lambda ,\mu ,\rho ;\alpha ,\beta ,\xi  \right)=\sum\limits_{p=0}^{\infty }{\frac{{{\left( -\rho  \right)}^{p}}}{p!}}\sum\limits_{m=0}^{\gamma }{\left( \begin{aligned}
				& \gamma  \\ 
				& m \\ 
			\end{aligned} \right)}{{\alpha }^{\gamma -m}} \\ 
		& \quad\quad \times \left\{ \begin{aligned}
			& \sum\limits_{a=1}^{\gamma +p}{{{A}_{\gamma +p-a+1}}\int_{0}^{\infty }{\frac{{{e}^{-\mu x}}}{{{\left( x+\beta  \right)}^{\gamma +p-a+1}}}}dx}  +\sum\limits_{b=1}^{\lambda }{{{B}_{\lambda -b+1}}\int_{0}^{\infty }{\frac{{{e}^{-\mu x}}}{{{\left( x+\xi  \right)}^{\lambda -b+1}}}}dx} \\ 
		\end{aligned} \right\} ,\\ 
	\end{aligned}
	\label{eq:Theta3_1}
\end{equation}
where ${{A}_{\gamma +p-a+1}}$ and ${B}_{\lambda -b+1}$ as in \eqref{eq:Theta_3}, and 
\begin{equation}
	{{\Phi }_{2}}\left( \gamma ,\mu ;\beta  \right)=\int_{0}^{\infty }{\frac{{{e}^{-\mu x}}}{{{\left( x+\beta  \right)}^{\gamma }}}}dx.
	\label{eq:Phi2}
\end{equation}
Employing \cite[eq. (3.351,2) and eq. (3.353,2)]{zwillinger_3-4_2015} to express $\Phi_2$ as in \eqref{eq:Phi}, we derive the expression of $\Theta_3$ as in \eqref{eq:Theta_3}.
Inserting all integrals back to \eqref{eq:SOP_SC3}, we obtain \eqref{eq:SOP_SC_final} to complete the proof.
\begin{figure*}[!tb]
	\begin{center}
		\begin{equation}
			\begin{aligned}
				%				&P_{\textit{\text{out}}}^{{\text{SC,J}}}\left( {{\gamma }_{S}} \right)=1-M\sum\limits_{{{l}_{X}}=0}^{{{m}_{S}}-1}{{{\zeta }_{X}}}\widetilde{\sum }_{u}^{M-1}{{\kappa }_{u}}\sum\limits_{{{l}_{Z}}=0}^{{{m}_{S}}-1}{{{\zeta }_{Z}}}\widehat{\sum }_{k}^{M-1}{{\zeta }_{k}}\Gamma \left( {\chi }_{S}+1 \right)\int_{0}^{\infty }{{{t}^{\chi _{k}^{M-1}-1}}{{e}^{-{{\eta }_{Z}}t}}}\int_{0}^{{{\Upsilon }_{Z}}\left( t \right)}{{{z}^{{{l}_{Z}}}}{{e}^{-{{\eta }_{Z}}z}}} \times\\ 
				%				& \left\{ \begin{aligned}
				%					& {{{\eta }_{S}}^{-\left( {\chi }_{S}+1 \right)}}\sum\limits_{{{l}_{Y}}=0}^{{{m}_{S}}-1}{\sum\limits_{{{q}_{Y}}=0}^{{{l}_{Y}}}{{{\kappa }_{Y}}{{\left\{ {{\Upsilon }_{Y}}\left( z,t \right) \right\}}^{{{q}_{Y}}}}{{e}^{-{{\eta }_{Y}}{{\Upsilon }_{Y}}\left( z,t \right)}}}} -\sum\limits_{{{l}_{Y}}=0}^{{{m}_{S}}-1}{{{\zeta }_{Y}}}\int_{{{\Upsilon }_{Y}}\left( z,t \right)}^{\infty }{{{y}^{{{l}_{Y}}}}{{e}^{-{{\eta }_{Y}}y}}{{\left\{ {{\eta }_{S}}+{{\Upsilon }_{W}}\left( y \right) \right\}}^{-\left( {\chi }_{S}+1 \right)}}dy} \\ 
				%				\end{aligned} \right\}dzdt \\ 
				P_{\textit{\text{out}}}^{{\text{SC,J}}}\left( {{\gamma }_{S}} \right)
				&=1-U\sum\limits_{{{l}_{X}}=0}^{{{m}_{S}}-1}{{}}\widetilde{\sum }_{u}^{U-1}\widehat{\sum }_{k}^{U-1}\sum\limits_{{{l}_{Z}}=0}^{{{m}_{S}}-1}{{{\zeta }_{X}}{{\zeta }_{Z}}}{{\kappa }_{u}}\Gamma \left( {{\chi }_{u}}+1 \right){{\zeta }_{k}} 
				\int_{0}^{\infty }{{{t}^{\chi _{k}^{M-1}-1}}{{e}^{-{{\eta }_{Z}}t}}} \\ 
				& \times  \left\{ \begin{aligned}
					& {{\left( {{\eta }_{u}} \right)}^{-\left( {{\chi }_{u}}+1 \right)}}
					\underbrace{\sum\limits_{{{l}_{Y}}=0}^{{{m}_{S}}-1}{{{\zeta }_{Y}}}\int_{0}^{{{\Upsilon }_{Z}}\left( t \right)}{{{z}^{{{l}_{Z}}}}{{e}^{-{{\eta }_{Z}}z}}}\int_{{{\Upsilon }_{Y}}\left( z,t \right)}^{\infty }{{y}^{{{l}_{Y}}}}{{e}^{-{{\eta }_{Y}}y}}dydz}_{{{J}_{4}}} dt \\ 
					& -\sum\limits_{{{l}_{Y}}=0}^{{{m}_{S}}-1}{{{\zeta }_{Y}}}
					\int_{0}^{{{\Upsilon }_{Z}}\left( t \right)}{{{z}^{{{l}_{Z}}}}{{e}^{-{{\eta }_{Z}}z}}}
					\underbrace{\int_{{{\Upsilon }_{Y}}\left( z,t \right)}^{\infty }{{{y}^{{{l}_{Y}}}}{{e}^{-{{\eta }_{y}}y}}}{{\left( {{\eta }_{u}}+{{\Upsilon }_{W}}\left( y \right) \right)}^{-\left( {{\chi }_{u}}+1 \right)}}dy}_{{{J}_{5}}} dzdt\\ 
				\end{aligned} \right\} .\\ 				
			\end{aligned}		
			\label{eq:SOP_SCJ3}	
		\end{equation}
	\end{center}
	\setlength{\arraycolsep}{1pt}
	\hrulefill \setlength{\arraycolsep}{0.0em}
\end{figure*}
%
%\section*{Appendix C - Proof of Proposition 2}
\section{Proof of Proposition 2}
\label{app:Pro2}
Equation \eqref{eq:SOP_SCJ1} is rewritten using the CDF and PDF of all related r.v. to have \eqref{eq:SOP_SCJ3} after some simplifications. 
%The expressions of each integral are then derived as follows.
Employing \cite[eq. (3.471,9)]{zwillinger_3-4_2015}, we write the fourth integral as
\begin{equation}
	\begin{aligned}
		{{J}_{4}}
		%&=\sum\limits_{{{l}_{Y}}=0}^{{{m}_{S}}-1}{{{\zeta }_{Y}}}\int_{0}^{{{\Upsilon }_{Z}}\left( t \right)}{{{z}^{{{l}_{Z}}}}{{e}^{-{{\eta }_{Z}}z}}dz}\int_{{{\Upsilon }_{Y}}\left( z,t \right)}^{\infty }{{{y}^{{{l}_{Y}}}}{{e}^{-{{\eta }_{Y}}y}}dy}\\
		=\sum\limits_{{{l}_{Y}}=0}^{{{m}_{S}}-1}{\sum\limits_{{{q}_{Y}}=0}^{{{l}_{Y}}}}{{{\kappa }_{Y}}}&{{\left( {{\tilde{\gamma }}_{Y}} \right)}^{{{q}_{Y}}}} 
		{{\Theta }_{4}}\left( {{\Upsilon }_{Z}}\left( t \right);{{l}_{Z}},{{q}_{Y}},{{\eta }_{Z}},{{\eta }_{Y}}{{{\tilde{\gamma }}_{Y}}};{{\Upsilon }_{Z}}\left( t \right) \right) .\\
	\end{aligned}
	\label{eq:J5}
\end{equation}
%-------------------------------------------------------------
The integral $\Theta_4$ is defined as
\begin{equation}
	\begin{aligned}
		{{\Theta }_{4}}\left( u;v,\gamma ,\mu ,\rho ;\beta  \right)=\int_{0}^{u}{{{x}^{v}}{{\left( \frac{x}{\beta -x} \right)}^{\gamma }}{{e}^{-\mu x}}{{e}^{-\rho \left( \frac{x}{\beta -x} \right)}}}dx .
	\end{aligned}
	\label{eq:Theta4_def}
\end{equation}
By changing variable and using power series, we rewrite
\begin{equation}
	\begin{aligned}
		{{\Theta }_{4}}\left( u;v,\gamma ,\mu ,\rho ;\beta  \right)
		={{e}^{\rho -\mu \beta }}
		& \sum\limits_{m=0}^{v+\gamma }{\left( \begin{aligned}
				& v+\gamma  \\ 
				& m \\ 
			\end{aligned} \right)}{{\left( -1 \right)}^{m}}{{\beta }^{v+\gamma -m}} \\ 
		& \times \sum\limits_{n=0}^{\infty }{\frac{{{\left( \mu  \right)}^{n}}}{n!}\sum\limits_{p=0}^{\infty }{\frac{{{\left( -1 \right)}^{p}}{{\left( \rho \beta  \right)}^{p}}}{p!}}}\int_{\beta -u}^{\beta }{{{t}^{\gamma +m+n-p}}}dt .\\ 
	\end{aligned}
	\label{eq:Theta4_1}
\end{equation}
The expression of $\Theta_4$ is then derived as in \eqref{eq:Theta_4}.

Employing \cite[eq. (3.471,9)]{zwillinger_3-4_2015}, the fifth integral
\begin{equation}
	\begin{aligned}
		{{J}_{5}}
		%&=\int_{{{\Upsilon }_{Y}}\left( z,t \right)}^{\infty }{{{y}^{{{l}_{Y}}}}{{e}^{-{{\eta }_{Y}}y}}{{\left( {{\eta }_{u}}+{{\Upsilon }_{W}}\left( y \right) \right)}^{-\left( {{\chi }_{u}}+1 \right)}}dy}\\
		&={{\left( {{{\tilde{\gamma }}}_{S}} \right)}^{{{\chi }_{u}}+1}} {{\Theta }_{5}}\left( {{\Upsilon }_{Y}}\left( z,t \right);{{l}_{Y}},{{\chi }_{u}}+1,{{\eta }_{Y}};\frac{1}{\varepsilon {{\lambda }_{\textit{\text{R\textsuperscript{*}}D}}}},{{{\tilde{\eta }}}_{u}} \right) .\\ 
	\end{aligned}
	\label{eq:J6}
\end{equation}
%-------------------------------------------------------------
The integral $\Theta_5$ is defined as
\begin{equation}
	\begin{aligned}
		{{\Theta }_{5}}\left( u;v,\gamma ,\mu ;\alpha ,\beta  \right)=\int_{u}^{\infty }{{{x}^{v}}}{{\left( \frac{x+\alpha }{x+\beta } \right)}^{\gamma }}{{e}^{-\mu x}}dx .
	\end{aligned}
	\label{eq:Theta5_def}
\end{equation}
By changing variable and using power series, we rewrite
\begin{equation}
	\begin{aligned}
		{{\Theta }_{5}}\left( u;v,\gamma ,\mu ;\alpha ,\beta  \right)
		& ={{e}^{\mu \beta }}\sum\limits_{m=0}^{v}
		{\left(\begin{aligned}
				& v \\ 
				& m \\ 
			\end{aligned} \right)}
		{{\left( -\beta  \right)}^{v-m}} \sum\limits_{n=0}^{\gamma }
		{\left( \begin{aligned}
				& \gamma  \\ 
				& n \\ 
			\end{aligned} \right)}
		{{\left( \alpha -\beta  \right)}^{\gamma -n}}\int_{u+\beta }^{\infty }{{{e}^{-\mu t}}}{{t}^{m+n-\gamma }}dt .\\ 		
	\end{aligned}
	\label{eq:Theta5_1}
\end{equation}
Using expression of $\Phi_1$ as in \eqref{eq:Phi}, the expression of $\Theta_5$ is then derived as in \eqref{eq:Theta_5}.
Inserting all integrals back to \eqref{eq:SOP_SCJ3}, we obtain \eqref{eq:SOP_SCJ_final} to complete the proof.
%--------------------------------------
\begin{figure*}[!tb]
	\begin{center}
		\begin{equation}
			\begin{aligned}
%				&P_{\textit{\text{out}}}^{{\text{MRC}}}\left( {{\gamma }_{S}} \right)=\sum\limits_{{{l}_{Y}}=0}^{{{m}_{S}}-1}{{{\zeta }_{Y}}}\sum\limits_{{{l}_{Z}}=0}^{{{m}_{S}}-1}{\sum\limits_{{{q}_{Z}}=0}^{{{l}_{Z}}}{{{\kappa }_{Z}}}}\int_{0}^{\infty }{{{y}^{{{l}_{Y}}}}{{e}^{-{{\eta }_{Y}}y}}}{{\left\{ {{\Upsilon }_{Y}}\left( y \right) \right\}}^{{{q}_{Z}}}}{{e}^{-{{\eta }_{Z}}{{\Upsilon }_{Y}}\left( y \right)}}dy \\ 
%				& +M\sum\limits_{{{l}_{X}}=0}^{{{m}_{S}}-1}{{{\zeta }_{X}}}\widetilde{\sum }_{u}^{M-1}{{\kappa }_{u}}\sum\limits_{{{l}_{Y}}=0}^{{{m}_{S}}-1}{{{\zeta }_{Y}}}\sum\limits_{{{l}_{Z}}=0}^{{{m}_{S}}-1}{{{\zeta }_{Z}}}\Gamma \left( {\chi }_{S}+1 \right) \int_{0}^{\infty }{{{y}^{{{l}_{Y}}}}{{e}^{-{{\eta }_{Y}}y}}}\int_{0}^{{{\Upsilon }_{Y}}\left( y \right)}{{{\left( {{\eta }_{S}}+{{\Upsilon }_{W}}\left( y,z \right) \right)}^{-\left( {\chi }_{S}+1 \right)}}{{z}^{{{l}_{Z}}}}{{e}^{-{{\eta }_{Z}}z}}dzdy} \\
%
	P_{\textit{\text{out}}}^{{\text{MRC}}}\left( {{\gamma }_{S}} \right)
	= 									& \sum\limits_{{{l}_{Y}}=0}^{{{m}_{S}}-1}
	\sum\limits_{{{l}_{Z}}=0}^{{{m}_{S}}-1} {{\zeta }_{Y}}
	{\sum\limits_{{{q}_{Z}}=0}^{{{l}_{Z}}}{\kappa }_{Z}}
		\underbrace{\int_{0}^{\infty }{{{y}^{{{l}_{Y}}}}{{e}^{-{{\eta }_{Y}}y}}}{{\left\{ {{\Upsilon }_{Y}}\left( y \right) \right\}}^{{{q}_{Z}}}}{{e}^{-{{\eta }_{Z}}{{\Upsilon }_{Y}}\left( y \right)}}dy}_{{{J}_{6}}} \\ 
	&  \begin{aligned}
	    + U \sum\limits_{{{l}_{X}}=0}^{{{m}_{S}}-1}
	    &{\widetilde{\sum }_{u}^{U-1}}
	    \sum\limits_{{{l}_{Y}}=0}^{{{m}_{S}}-1}{{}}
	    \sum\limits_{{{l}_{Z}}=0}^{{{m}_{S}}-1}
	    {{{\zeta }_{X}}{{\zeta }_{Y}}{{\zeta }_{Z}}}{{\kappa }_{u}}\Gamma \left( {{\chi }_{u}}+1 \right) \\
	    & \times \int_{0}^{\infty }{{{y}^{{{l}_{Y}}}}{{e}^{-{{\eta }_{Y}}y}}}
		\underbrace{\int_{0}^{{{\Upsilon }_{Y}}\left( y \right)}{{{\left( {{\eta }_{u}}+{{\Upsilon }_{W}}\left( y,z \right) \right)}^{-\left( {{\chi }_{u}}+1 \right)}}{{z}^{{{l}_{Z}}}}{{e}^{-{{\eta }_{Z}}z}}dz}}_{{{J}_{7}}} dy .
	\end{aligned} \\ 
			\end{aligned} 
			\label{eq:SOP_MRC3}
		\end{equation}	
	\end{center}
	\setlength{\arraycolsep}{1pt}
	\hrulefill \setlength{\arraycolsep}{0.0em}
\end{figure*}
%--------------------------------------
%\section*{Appendix D - Proof of Proposition 3}
\section{Proof of Proposition 3}
\label{app:Pro3}
Rewriting \eqref{eq:SOP_MRC1} by using the CDF and PDF of all r.v. and after some simplifications, we have \eqref{eq:SOP_MRC3}. 
%The expressions of each integral are derived as follows.
Employing \cite[eq. (3.471,9)]{zwillinger_3-4_2015}, we have
\begin{equation}
	\begin{aligned}
		{{J}_{6}}
		%&=\int_{0}^{\infty }{{{y}^{{{l}_{Y}}}}{{e}^{-{{\eta }_{Y}}y}}}{{\left\{ {{\Upsilon }_{Y}}\left( y \right) \right\}}^{{{q}_{Z}}}}{{e}^{-{{\eta }_{Z}}{{\Upsilon }_{Y}}\left( y \right)}}dy\\
		&={{\left( \frac{\tilde{\gamma_Z}}{\gamma_S} \right)}^{{q}_{Z}}}
	    {{\Theta }_{2}}\left( {{l}_{Y}},{{q}_{Z}},{{\eta }_{Y}},{{\eta }_{Z}}{\tilde{\gamma_Z}};\tilde{\gamma_Y} \right) ,\\
	\end{aligned}
	\label{eq:J7}
\end{equation}
%
%and
\begin{equation}
	\begin{aligned}
		{{J}_{7}}
		%&=\int_{0}^{{{\Upsilon }_{Y}}\left( y \right)}{{{\left( {{\eta }_{u}}+{{\Upsilon }_{W}}\left( y,z \right) \right)}^{-\left( {{\chi }_{u}}+1 \right)}}{{z}^{{{l}_{Z}}}}{{e}^{-{{\eta }_{Z}}z}}dz}\\\
		&={{\left( \frac{{{\lambda }_{\textit{\text{SE}}}}}{{{{\tilde{\lambda }}}_{Y}}{{\lambda }_{\textit{\text{SE}}}}-{{\lambda }_{\textit{\text{SR\textsuperscript{*}}}}}} \right)}^{{{\chi }_{u}}+1}} 
		{{\Theta }_{6}}\left( {{\Upsilon }_{Y}}\left( y \right);{{l}_{Z}},{{\chi }_{u}}+1,{{\eta }_{Z}};\frac{1}{\varepsilon {{\lambda }_{\textit{\text{R\textsuperscript{*}}E}}}},\frac{{\tilde{\lambda }}_{Y}}{\varepsilon {{\lambda }_{\textit{\text{R\textsuperscript{*}}E}}}} \right) ,\\ 
	\end{aligned}
	\label{eq:J8}
\end{equation}
where ${{\tilde{\lambda }}_{Y}}={{\eta }_{u}}+\frac{\varepsilon {{\lambda }_{\textit{\text{SR\textsuperscript{*}}}}}{ \lambda_{\textit{\text{R\textsuperscript{*}}D}}}y}{{{\lambda }_{\textit{\text{SE}}}}{{\gamma }_{S}}\left( \varepsilon {{\lambda }_{\textit{\text{R\textsuperscript{*}}D}}}y+1 \right)}$.
%-------------------------------------------------------------
The integral $\Theta_6$ is defined as
\begin{equation}
	{{\Theta }_{6}}\left( u;v,\gamma ,\mu ;\alpha ,\beta  \right)=\int_{0}^{u}{{{x}^{v}}{{\left( \frac{x+\alpha }{x+\beta } \right)}^{\gamma }}}{{e}^{-\mu x}}dx .
	\label{eq:Theta6_def}
\end{equation}
By changing variable and using power series, we rewrite
\begin{equation}
	\begin{aligned}
		{{\Theta }_{6}}\left( u;v,\gamma ,\mu ;\alpha ,\beta  \right)={{e}^{\mu \beta }}\sum\limits_{m=0}^{v}{\left( \begin{aligned}
				& v \\ 
				& m \\ 
			\end{aligned} \right)}{{\left( -\beta  \right)}^{v-m}} \sum\limits_{n=0}^{\gamma }{\left( \begin{aligned}
				& \gamma  \\ 
				& n \\ 
			\end{aligned} \right)}{{\left( \alpha -\beta  \right)}^{\gamma -n}}\int_{\beta }^{u+\beta }{{{e}^{-\mu t}}}{{t}^{m+n-\gamma }}dt .\\ 
	\end{aligned}
	\label{eq:Theta6_1}
\end{equation}
We derive the expression of $\Theta_6$ as in \eqref{eq:Theta_6} using expression of $\Phi_1$ as in \eqref{eq:Phi}.
Inserting all integrals back to \eqref{eq:SOP_MRC3}, we obtain \eqref{eq:SOP_MRC_final} to complete the proof.
\begin{figure*}[!tb]
	\begin{center}
		\begin{equation}
			\begin{aligned}
	& P_{\textit{\text{out}}}^{{\text{MRC,J}}}\left( {{\gamma }_{S}} \right)=\sum\limits_{{{l}_{Z}}=0}^{{{m}_{S}}-1}{\sum\limits_{{{q}_{Z}}=0}^{{{l}_{Z}}}{\widehat{\sum }_{k}^{U-1}{{\kappa }_{Z}}}}{{\zeta }_{k}}
	\underbrace{\int_{0}^{\infty }{{{\left\{ {{\Upsilon }_{Z}}\left( t \right) \right\}}^{{{q}_{Z}}}}{{e}^{-{{\eta }_{Z}}{{\Upsilon }_{Z}}\left( t \right)}}{{t}^{\chi _{k}^{U-1}-1}}{{e}^{-{{\eta }_{Z}}t}}}dt}_{{{J}_{8}}} \\ 
	& {\begin{aligned}
	+\sum\limits_{{{l}_{Z}}=0}^{{{m}_{S}}-1}{{}} & \widehat{\sum }_{k}^{U-1}{{\zeta }_{Z}}{{\zeta }_{k}} 
	\Biggl\{             
	    \underbrace{\int_{0}^{\infty }{{{t}^{\chi _{k}^{U-1}-1}}{{e}^{-{{\eta }_{Z}}t}}}\int_{0}^{{{\Upsilon }_{Z}}\left( t \right)}{{{z}^{{{l}_{Z}}}}{{e}^{-{{\eta }_{Z}}z}}}dzdt}_{{{J}_{9}}} \\ 
	    & -\sum\limits_{{{l}_{Y}}=0}^{{{m}_{S}}-1}{\sum\limits_{{{q}_{Y}}=0}^{{{l}_{Y}}}{{{\kappa }_{Y}}}}\int_{0}^{\infty }{{{t}^{\chi _{k}^{U-1}-1}}{{e}^{-{{\eta }_{Z}}t}}}
		\underbrace{\int_{0}^{{{\Upsilon }_{Z}}\left( t \right)}{{{z}^{{{l}_{Z}}}}{{e}^{-{{\eta }_{Z}}z}}}{{\left\{ {{\Upsilon }_{Y}}\left( z,t \right) \right\}}^{{{q}_{Y}}}}{{e}^{-{{\eta }_{Y}}{{\Upsilon }_{Y}}\left( z,t \right)}}dz}_{{{J}_{10}}}dt \Biggr\} \\
	\end{aligned}} \\ 
	& +U\sum\limits_{{{l}_{X}}=0}^{{{m}_{S}}-1}{{}}\widetilde{\sum }_{u}^{U-1}\sum\limits_{{{l}_{Y}}=0}^{{{m}_{S}}-1}{{}}\sum\limits_{{{l}_{Z}}=0}^{{{m}_{S}}-1}{\widehat{\sum }_{k}^{U-1}{{\zeta }_{X}}{{\zeta }_{Y}}{{\zeta }_{Z}}{{\kappa }_{u}}}{{\zeta }_{k}}\Gamma \left( {{\chi }_{u}}+1 \right) \\ 
		& \quad \times \int_{0}^{\infty }{{{t}^{\chi _{k}^{U-1}-1}}{{e}^{-{{\eta }_{Z}}t}}}\int_{0}^{{{\Upsilon }_{Z}}\left( t \right)}{{{z}^{{{l}_{Z}}}}{{e}^{-{{\eta }_{Z}}z}}}
		\underbrace{\int_{{{\Upsilon }_{Y}}\left( z,t \right)}^{\infty }{{{y}^{{{l}_{Y}}}}{{e}^{-{{\eta }_{Y}}y}}}{{\left\{ {{\eta }_{u}}+{{\Upsilon }_{W}}\left( y,z,t \right) \right\}}^{-\left( {{\chi }_{u}}+1 \right)}}dy}_{{{J}_{11}}}dzdt .\\				
			\end{aligned} 
			\label{eq:SOP_MRCJ3}
		\end{equation}		
	\end{center}
	\setlength{\arraycolsep}{1pt}
	\hrulefill \setlength{\arraycolsep}{0.0em}
\end{figure*}
%------------------------------------------------------
%\section*{Appendix E - Proof of Proposition 4}
\section{Proof of Proposition 4}
\label{app:Pro4}
The CDF and PDF of all r.v. with friendly jamming are put into \eqref{eq:SOP_MRCJ2} to have the SOP with respect to the MRC scheme at $E$ as in \eqref{eq:SOP_MRCJ3}. Each integral is continued to be formulated as follows:

\begin{equation}
	\begin{aligned}
{{J}_{8}}
%&=\int_{0}^{\infty }{{{\left\{ {{\Upsilon }_{Z}}\left( t \right) \right\}}^{{{q}_{Z}}}}{{e}^{-{{\eta }_{Z}}{{\Upsilon }_{Z}}\left( t \right)}}{{t}^{\chi _{k}^{U-1}-1}}{{e}^{-{{\eta }_{Z}}t}}}dt \\
&={{\left( \frac{{{P}_{J}}}{\varepsilon \left( {{\gamma }_{S}}-1 \right)} \right)}^{{{q}_{Z}}}}
{{\Theta }_{7}}\left( \chi _{k}^{U-1}-1,{{q}_{Z}},{{\eta }_{Z}},\frac{{{\eta }_{Z}}{{P}_{J}}}{\varepsilon \left( {{\gamma }_{S}}-1 \right)};\frac{1}{{{P}_{J}}{{\lambda }_{{{R}^{*}}E}}} \right) ,\\
	\end{aligned}
	\label{eq:J9}
\end{equation}

\begin{equation}
	\begin{aligned}
{{J}_{9}}
%&=\int_{0}^{\infty }{{{t}^{\chi _{k}^{U-1}-1}}{{e}^{-{{\eta }_{Z}}t}}}\int_{0}^{{{\Upsilon }_{Z}}\left( t \right)}{{{z}^{{{l}_{Z}}}}{{e}^{-{{\eta }_{Z}}z}}}dzdt \\
&=\Gamma \left( {{l}_{Z}}+1 \right)\Gamma \left( \chi _{k}^{U-1} \right){{\eta }_{Z}}^{-\left( {{l}_{Z}}+1+\chi _{k}^{U-1} \right)} \\ 
& -\sum\limits_{{q_Z}=0}^{{{l}_{Z}}}{\frac{{{l}_{Z}}!}{{q_Z}!}}\frac{1}{{{\eta }_{Z}}^{{{l}_{Z}}-{q_Z}+1}}{{\left( \frac{{{P}_{J}}}{\varepsilon \left( {{\gamma }_{S}}-1 \right)} \right)}^{{q_Z}}} 
    {{\Theta }_{7}}\left( \chi _{k}^{U-1}-1,{q_Z},{{\eta }_{Z}},\frac{{{\eta }_{Z}}{{P}_{J}}}{\varepsilon \left( {{\gamma }_{S}}-1 \right)};\frac{1}{{{P}_{J}}{{\lambda }_{\textit{\text{R\textsuperscript{*}}E}}}} \right) .\\
	\end{aligned}
	\label{eq:J10}
\end{equation}
%-------------------------------------------------------------
The integral $\Theta_7$ is defined as
\begin{equation}
	{{\Theta }_{7}}\left( v,\gamma ,\mu ,\rho ;\alpha  \right)=\int_{0}^{\infty }{{{x}^{v}}{{\left( x+\alpha  \right)}^{\gamma }}}{{e}^{-\mu x}}{{e}^{-\rho \left( x+\alpha  \right)}}dx .
	\label{eq:Theta7_def}
\end{equation}
%%
%By changing variable and using power series, we rewrite
%\begin{equation}
%	\begin{aligned}
%		{{\Theta }_{7}}\left( v,\gamma ,\mu ,\rho ;\alpha  \right)=\sum\limits_{m=0}^{\gamma }{\left( \begin{aligned}
%				& \gamma  \\ 
%				& m \\ 
%			\end{aligned} \right)}{{\alpha }^{\gamma -m}}{{e}^{-\rho \alpha }}\int_{0}^{\infty }{{{x}^{v+\gamma }}}{{e}^{-\left( \mu +\rho  \right)x}}dx		
%	\end{aligned}
%	\label{eq:Theta7_1}
%\end{equation}
%The expression of $\Theta_7$ is derived as in \eqref{eq:Theta2}.
%%
And we have
\begin{equation}
	\begin{aligned}
{{J}_{10}}
%    & =\int_{0}^{{{\Upsilon }_{Z}}\left( t \right)}{{{z}^{{{l}_{Z}}}}{{e}^{-{{\eta }_{Z}}z}}}{{\left\{ {{\Upsilon }_{Y}}\left( z,t \right) \right\}}^{{{q}_{Y}}}}{{e}^{-{{\eta }_{Y}}{{\Upsilon }_{Y}}\left( z,t \right)}} dz\\
    & = {{\left( \frac{{{\gamma }_{S}}}{\varepsilon {{\lambda }_{\textit{\text{R\textsuperscript{*}}D}}}\left( {{\gamma }_{S}}-1 \right)} \right)}^{{{q}_{Y}}}}  {{\Theta }_{4}}\left( {{\Upsilon }_{Z}}\left( t \right);{{l}_{Z}},{{q}_{Y}},{{\eta }_{Z}},{{{\tilde{\eta }}}_{Y}};{{\Upsilon }_{Z}}\left( t \right) \right) .\\
	\end{aligned}
	\label{eq:J11}
\end{equation}

\begin{equation}
	\begin{aligned}
{{J}_{11}}
%&=\int_{{{\Upsilon }_{Y}}\left( z,t \right)}^{\infty }{{{y}^{{{l}_{Y}}}}{{e}^{-{{\eta }_{Y}}y}}}{{\left\{ {{\eta }_{u}}+{{\Upsilon }_{W}}\left( y,z,t \right) \right\}}^{-\left( {{\chi }_{u}}+1 \right)}}dy \\
& ={{\left( \frac{{{\gamma }_{S}}}{{{\lambda }_{\textit{\text{SR\textsuperscript{*}}}}}+{{\gamma }_{S}}{{\lambda }_{\textit{\text{SE}}}}\tilde{V}\left( z,t \right)} \right)}^{{{\chi }_{u}}+1}} 
    {{\Theta }_{5}}\left( {{\Upsilon }_{Y}}\left( z,t \right);{{l}_{Y}},{{\chi }_{u}}+1,{{\eta }_{Y}};\frac{1}{\varepsilon {{\lambda }_{\textit{\text{R\textsuperscript{*}}D}}}},\beta_5 \right) , \\  
	\end{aligned}
	\label{eq:J12}
\end{equation}
where $\beta_5=\frac{{{\gamma }_{S}}{{\lambda }_{\textit{\text{SE}}}}\tilde{V}\left( z,t \right)}{\varepsilon {{\lambda }_{\textit{\text{R\textsuperscript{*}}D}}}\left( {{\lambda }_{\textit{\text{SR\textsuperscript{*}}}}}+{{\gamma }_{S}}{{\lambda }_{\textit{\text{SE}}}}\tilde{V}\left( z,t \right) \right)}$, and $\tilde{V}\left( z,t \right)={{\eta }_{u}}-\frac{\varepsilon {{\lambda }_{\textit{\text{SR\textsuperscript{*}}}}}}{{{\lambda }_{\textit{\text{SE}}}}}V\left( z,t \right)$.
Inserting all integrals back to \eqref{eq:SOP_MRCJ3}, we obtain \eqref{eq:SOP_MRCJ_final} to complete the proof.
%--------------------------------------
%\section*{Appendix F - Lower-Bound Analysis}
\section{Lower-Bound Analysis}
%--------------------------------------
\begin{figure*}[!tb]
	\begin{center}
		\begin{equation}
			\begin{aligned}
				&{\bar{P}}_{\textit{\text{out}}}^{{\text{SC}}}\left( {{\gamma }_{S}} \right)=1-U\sum\limits_{{{l}_{X}}=0}^{{{m}_{S}}-1}{\widetilde{\sum }_{u}^{U-1}}\sum\limits_{{{l}_{Y}}=0}^{{{m}_{S}}-1}{{{\zeta }_{X}}{{\zeta }_{Y}}}{{\kappa }_{u}}\Gamma \left( {{\chi }_{u}}+1 \right) \\ 
				& \times \Biggl\{ {{\eta }_{u}}^{-\left( {{\chi }_{u}}+1 \right)}\Gamma \left( {{l}_{Y}}+1 \right){{\eta }_{Y}}^{-\left( {{l}_{Y}}+1 \right)}
				-{\underbrace{\int_{0}^{\infty }{{{y}^{{{l}_{Y}}}}{{e}^{-{{\eta }_{Y}}y}}}{{\mathbb{E}}_{\mathbf{p_E}}}\left\{{\left( {{\eta }_{u}}+{{\Upsilon }_{W}}\left( y,{d_{\textit{\text{S}E}}} \right) \right)}^{-\left( {{\chi }_{u}}+1 \right)}\right\}dy}_{{\bar{J}_{1}}}} \\ 
				& -\sum\limits_{{{l}_{Z}}=0}^{{{m}_{S}}-1}{\sum\limits_{{{q}_{Z}}=0}^{{{l}_{Z}}}{{{\kappa }_{Z}}}}{{\eta }_{u}}^{-\left( {{\chi }_{u}}+1 \right)}
				{\underbrace{\int_{0}^{\infty }{{{y}^{{{l}_{Y}}}}{{e}^{-{{\eta }_{Y}}y}}{{{{\mathbb{E}}_{\mathbf{p_E}}}\left\{ {{\Upsilon }_{Z}}\left( y,{d_{\textit{\text{R\textsuperscript{*}}E}}} \right)^{{q}_{Z}}
				{{e}^{-{{\eta }_{Z}}{{\Upsilon }_{Z}}\left( y,{d_{\textit{\text{R\textsuperscript{*}}E}}} \right)}} \right\}}}}dy}_{{\bar{J}_{2}}}} \\ 
				& -\sum\limits_{{{l}_{Z}}=0}^{{{m}_{S}}-1}{\sum\limits_{{{q}_{Z}}=0}^{{{l}_{Z}}}{{{\kappa }_{Z}}}} {\underbrace{\int_{0}^{\infty }{{{y}^{{{l}_{Y}}}}{{e}^{-{{\eta }_{Y}}y}}} {{\mathbb{E}}_{\mathbf{p_E}}}\left\{ {{\left( {{\eta }_{u}}+{{\Upsilon }_{W}}\left( y,{d_{\textit{\text{S}E}}} \right) \right)}^{-\left( {{\chi }_{u}}+1 \right)}}{{\Upsilon }_{Z}}\left( y,{d_{\textit{\text{R\textsuperscript{*}}E}}} \right) ^{{q}_{Z}}
				{{e}^{-{{\eta }_{Z}}{{\Upsilon }_{Z}}\left( y,{d_{\textit{\text{R\textsuperscript{*}}E}}} \right)}} \right\} dy}_{{\bar{J}_{3}}}} 
				\Biggr\} .\\ 
			\end{aligned} 
			\label{eq:SOP_SC_E1}
		\end{equation}
	\end{center}
	\hrulefill \setlength{\arraycolsep}{0.0em}
\end{figure*}
%--------------------------------------
%--------------------------------------
\begin{figure*}[!tb]
	\begin{center}
		\begin{equation}
			\begin{aligned}			
			& {\bar{P}}_{\textit{\text{out}}}^{{\text{SC,J}}}
			\left( {{\gamma }_{S}} \right)=1-U \sum\limits_{{{l}_{X}}=0}^{{{m}_{S}}-1}{{}}\widetilde{\sum }_{u}^{U-1}         \widehat{\sum }_{k}^{U-1} \sum\limits_{{{l}_{Z}}=0}^{{{m}_{S}}-1} {{{\zeta }_{X}} {{\zeta }_{Z}}}{{\kappa }_{u}} \Gamma \left( {{\chi }_{u}}+1 \right){{\zeta }_{k}} 
			\int_{0}^{\infty }{{{t}^{\chi _{k}^{M-1}-1}}{{e}^{-{{\eta }_{Z}}t}}} \\ 
			& \times \Biggl\{ 
			{{\left( {{\eta }_{u}} \right)}^{-\left( {{\chi }_{u}}+1 \right)}} 
				\underbrace{\sum\limits_{{{l}_{Y}}=0}^{{{m}_{S}}-1}{{{\zeta }_{Y}}}
				{{\mathbb{E}}_{\mathbf{p_E}}} \left\{\int_{0}^{{{\Upsilon }_{Z}}\left( t,{d_{\textit{\text{R\textsuperscript{*}}E}}} \right)} {{{z}^{{{l}_{Z}}}}{{e}^{-{{\eta }_{Z}}z}}} \int_{{{\Upsilon }_{Y}}\left( z,t,{d_{\textit{\text{R\textsuperscript{*}}E}}} \right)}^{\infty } {{y}^{{{l}_{Y}}}} {{e}^{-{{\eta }_{Y}}y}}dydz\right\}}_{{\bar{J}_{4}}} \\ 
				& -\sum\limits_{{{l}_{Y}}=0}^{{{m}_{S}}-1}{{{\zeta }_{Y}}}  {{\mathbb{E}}_{\mathbf{p_E}}}
					\left\{\int_{0}^{{{\Upsilon }_{Z}}\left( t,{d_{\textit{\text{R\textsuperscript{*}}E}}} \right)}{{{z}^{{{l}_{Z}}}}{{e}^{-{{\eta }_{Z}}z}}} 
					\underbrace{\int_{{{\Upsilon }_{Y}}\left( z,t,{d_{\textit{\text{R\textsuperscript{*}}E}}} \right)}^{\infty }{{{y}^{{{l}_{Y}}}}{{e}^{-{{\eta }_{y}}y}}}{{\left( {{\eta }_{u}}+{{\Upsilon }_{W}}\left( y,{d_{\textit{\text{S}E}}} \right) \right)}^{-\left( {{\chi }_{u}}+1 \right)}}dy}_{{\bar{J}_{5}}} dz\right\} \Biggr\}dt . \\ 
			\end{aligned}		
			\label{eq:SOP_SCJ_E1}	
		\end{equation}
	\end{center}
	\setlength{\arraycolsep}{1pt}
	\hrulefill \setlength{\arraycolsep}{0.0em}
\end{figure*}
%--------------------------------------
In our model, the SOP depends on $E$'s location via the distance from $S$ and $R$ to $E$. Therefore, by swapping the integration, we have \eqref{eq:SOP_SC_E1} and \eqref{eq:SOP_SCJ_E1}. Applying Jensen's inequality, one can obtain the lower bound expression, $\mathbb{E}\left[ g\left( X \right) \right]\ge g\left( \mathbb{E}\left[ X \right] \right)$, 
\begin{equation}
	\begin{aligned}
		{{\mathbb{E}}_{\mathbf{p_E}}} 
		& \left\{{\left( {{\eta }_{u}}+{{\Upsilon }_{W}}\left( y,{d_{\textit{\text{S}E}}} \right) \right)}^{-\left( {{\chi }_{u}}+1 \right)}\right\} 
		\ge 
		{\left( {{\eta }_{u}}+{{\Upsilon }_{W}}\left( y,{{\mathbb{E}}_{\mathbf{p_E}}}\left\{{d_{\textit{\text{S}E}}} \right\} \right) \right)}^{-\left( {{\chi }_{u}}+1 \right)} ,
	\end{aligned}
\end{equation}
%
%and
\begin{equation}
	\begin{aligned}
	{{\mathbb{E}}_{\mathbf{p_E}}}
		& \left\{\int_{0}^{{{\Upsilon }_{Z}}\left( t,{d_{\textit{\text{R\textsuperscript{*}}E}}} \right)}{{{z}^{{{l}_{Z}}}}{{e}^{-{{\eta }_{Z}}z}}}\int_{{{\Upsilon }_{Y}}\left( z,t,{d_{\textit{\text{R\textsuperscript{*}}E}}} \right)}^{\infty }{{y}^{{{l}_{Y}}}}{{e}^{-{{\eta }_{Y}}y}}dydz 
		\right\} \\
		& \ge
		\int_{0}^{ {{\Upsilon }_{Z}}\left( t,{{\mathbb{E}}_{\mathbf{p_E}}}\left\{{d_{\textit{\text{R\textsuperscript{*}}E}}} \right\} \right) }
		{{{z}^{{{l}_{Z}}}}{{e}^{-{{\eta }_{Z}}z}}}
		\int_{{\Upsilon }_{Y}\left( z,t,{{\mathbb{E}}_{\mathbf{p_E}}}\left\{{d_{\textit{\text{R\textsuperscript{*}}E}}} \right\} \right)}^{\infty }{{y}^{{{l}_{Y}}}}{{e}^{-{{\eta }_{Y}}y}}dydz .
	\end{aligned}
\end{equation}
%--------------------------------------

Above process is applied to other integrals.
We denote ${\mathcal{R}_{\textit{\text{S}E}}}={{\mathbb{E}}_{\mathbf{p_E}}}\left\{ {d_{\textit{\text{S}E}}} \right\}$ and ${\mathcal{R}_{\textit{\text{R\textsuperscript{*}}E}}}={{\mathbb{E}}_{\mathbf{p_E}}}\left\{ {d_{\textit{\text{R\textsuperscript{*}}E}}} \right\}$ to obtain the expressions of the SOP in the general case of a randomly distributed eavesdropper.
%-------------------------------------------------------------

\bibliographystyle{IEEEtran}
\bibliography{TCOM}

% that's all folks
\end{document}